%% file: main.tex
\documentclass[a4paper,fleqn]{cas-sc}

\usepackage[numbers,sort&compress]{natbib}

\usepackage[section]{placeins}
\usepackage{lineno}

\ExplSyntaxOn
\cs_gset:Npn \__reset_fig:
{
  \tl_set:Nx \l_fig_pos_tl { htbp }
  \tl_set:Nx \l_fig_cols_tl { 1 }
  \tl_set:Nn \l_fig_align_tl { \centering }
  \skip_set:Nn \l_fig_abovecap_skip { 6pt }
  \skip_set:Nn \l_fig_belowcap_skip { 6pt }
  \skip_set:Nn \l_fig_abovefig_skip { 6pt }
  \skip_set:Nn \l_fig_belowfig_skip { 6pt }
}
\cs_gset:Npn \__reset_tbl:
{
  \tl_set:Nx \l_tbl_pos_tl { htbp }
  \tl_set:Nx \l_tbl_cols_tl { 1 }
  \tl_set:Nn \l_tbl_align_tl { \centering }
  \skip_set:Nn \l_tbl_abovecap_skip { 6pt }
  \skip_set:Nn \l_tbl_belowcap_skip { 0pt }
  \skip_set:Nn \l_tbl_abovetbl_skip { 6pt }
  \skip_set:Nn \l_tbl_belowtbl_skip { 6pt }
}
\ExplSyntaxOff

\newtheorem{theorem}{Theorem}[section]

\newtheorem{proposition}[theorem]{Proposition}
\newdefinition{definition}{Definition}[section]
\newtheorem{assumption}{Assumption}[section]
\newdefinition{remark}{Remark}[section]
\newproof{proof}{Proof}

\begin{document}
\let\WriteBookmarks\relax

\shorttitle{GeONet for Predicting Flow in Heterogeneous Porous Media}
\shortauthors{M.L. Mamud et~al.}

\title [mode = title]{GeONet: Geometry-Encoded Operator Network for Predicting Flow in Heterogeneous Porous Media}

\author[1]{Md Lal Mamud}
\cormark[1]
\ead{lal.mamud@pnnl.gov}

\author[1]{Maruti K. Mudunuru}
\cormark[1]
\ead{maruti@pnnl.gov}

\author[2]{Yucheng Fu}

\author[3,4]{Aramy Troung}

\author[4,5]{Satish Karra}

\author[6]{Ivani Patel}

\author[3,4]{Arunima Bhattacharjee}

\author[3,4]{Tamas Varga}

\author[7]{Pratanu Roy}

\author[8]{Michael Penwarden}

\author[9]{Rajesh K. Singh}

\author[4,5]{Arjun Chakrawal}

\author[10]{Kalyana B. Nakshatrala}

\author[11]{Christof Meile}

\author[12]{Erin Rooney}

\author[1]{Glenn Hammond}

\affiliation[1]{
    organization={Subsurface Science Group, Energy and Environment Directorate, Pacific Northwest National Laboratory},
    city={Richland},
    state={WA},
    postcode={99354},
    country={USA}
}

\affiliation[2]{
    organization={Computational Mathematics Group, Integrated Discovery Sciences Directorate, Pacific Northwest National Laboratory},
    city={Richland},
    state={WA},
    postcode={99354},
    country={USA}
}

\affiliation[3]{
    organization={Environmental Transformations \& Interactions Group, Integrated Discovery Sciences Directorate, Pacific Northwest National Laboratory},
    city={Richland},
    state={WA},
    postcode={99354},
    country={USA}
}

\affiliation[4]{
    organization={Environmental Molecular Sciences Laboratory, Pacific Northwest National Laboratory},
    city={Richland},
    state={WA},
    postcode={99352},
    country={USA}
}

\affiliation[5]{
    organization={Computing, Analytics, and Modeling Group, Integrated Discovery Sciences Directorate, Pacific Northwest National Laboratory},
    city={Richland},
    state={WA},
    postcode={99352},
    country={USA}
}

\affiliation[6]{
    organization={Earth Systems Predictability \& Resilience Group, Energy and Environment Directorate, Pacific Northwest National Laboratory},
    city={Richland},
    state={WA},
    postcode={99354},
    country={USA}
}

\affiliation[7]{
    organization={Physical and Life Sciences Directorate, Lawrence Livermore National Laboratory},
    city={Livermore},
    state={CA},
    postcode={94550},
    country={USA}
}

\affiliation[8]{
    organization={Sandia National Laboratories},
    city={Albuquerque},
    state={NM},
    postcode={87185},
    country={USA}
}

\affiliation[9]{
    organization={Operational Systems Engineering Group, Energy and Environment Directorate, Pacific Northwest National Laboratory},
    city={Richland},
    state={WA},
    postcode={99354},
    country={USA}
}

\affiliation[10]{
    organization={Department of Civil and Environmental Engineering, University of Houston},
    city={Houston},
    state={Texas},
    postcode={77204},
    country={USA}
}

\affiliation[11]{
    organization={Department of Marine Sciences, University of Georgia},
    city={Athens},
    state={Georgia},
    postcode={30602},
    country={USA}
}

\affiliation[12]{
    organization={U.S. Department of Agriculture, Natural Resources Conservation Service},
    city={Knoxville},
    state={TN},
    postcode={37918},
    country={USA}
}

\cortext[1]{Corresponding authors}

\input{sections/abstract}

%\linenumbers

% \begin{highlights}
% \item GeONet explicitly encodes pore geometry for neural operator-based flow prediction.
% \item Leaf, Branch, and Trunk networks jointly encode geometry, properties, and location.
% \item Approximately 5000 PFLOTRAN simulations support neural operator training.
% \item GeONet outperforms DeepONet with NRMSE $\leq 0.05$ and MAPE $\leq 2.59\%$.
% \item GeONet generalizes to unseen geometries with $R^2$ exceeding 0.95.
% \end{highlights}

\begin{keywords}
deep neural operators \sep porous media \sep multi-physics \sep
pore-scale modeling \sep geometry-encoded learning \sep Darcy flow
\end{keywords}

\maketitle

\input{sections/s1_intro}
\input{sections/s2_data}
\input{sections/s3_methods}
\input{sections/s4_results}
\input{sections/s5_conclusions}

\section*{Acknowledgments}

\paragraph{Funding:}
The research described in this paper was conducted under the auspices of Lawrence Livermore National Laboratory (LLNL, Contract No. DE-AC52-07NA27344), Pacific Northwest National Laboratory (PNNL, Contract No. DE-AC05--76RL01830), and Sandia National Laboratories (SNL, Contract No. DE-NA0003525) as part of the DOE Interlaboratory (IL) Laboratory Directed Research and Development (LDRD) Pilot Program.
The LLNL LDRD project number is 25-ERD-052, and the release number is LLNL-JRNL-2024560.
MKM, AT, SK, IP, AB, TV, RKS, and AC's research was performed under a project award (Award DOIs: 10.46936/intm.proj.2023.60904/60008965) from the Environmental Molecular Sciences Laboratory, a DOE Office of Science User Facility sponsored by the Biological and Environmental Research program under contract no. DE-AC05-76RL01830.

\paragraph{Competing interests:}
There are no competing interests to declare.

\paragraph{Disclaimer.}
This research work was prepared as an account of work sponsored by an agency of the United States Government.
Neither the United States Government nor any agency thereof, nor any of their employees, makes any warranty, express or implied, or assumes any legal liability or responsibility for the accuracy, completeness, or usefulness of any information, apparatus, product, or process disclosed, or represents that its use would not infringe privately owned rights.
Reference herein to any specific commercial product, process, or service by trade name, trademark, manufacturer, or otherwise does not necessarily constitute or imply its endorsement, recommendation, or favoring by the United States Government or any agency thereof.
The views and opinions of authors expressed herein do not necessarily state or reflect those of the United States Government or any agency thereof.

\paragraph*{Data, High-Performance Computing, and Software Availability:}
\label{Sec:data_availability}
The PFLOTRAN simulation datasets used for training, validation, and testing
will be made available through the EMSL Science Central Data Platform upon
publication, with a persistent data identifier provided in the final
manuscript. 
The computational workflows and software supporting this study
are developed within the Chip2Flow framework
(\url{https://github.com/EMSL-Computing/Chip2Flow}), a Python-based framework for integrating multimodal experimental data with calibration, simulation, validation, and data-worth analysis of coupled flow, thermal, and reactive-transport processes.
The GeONet implementation, AI artifacts and model configurations, and supporting scripts
required to reproduce the reported results will be released with the
associated software repository on EMSL computing GitHub.

Large-scale PFLOTRAN simulation ensembles were generated using the entire EMSL
Tahoma high-performance computing system (\url{https://www.emsl.pnnl.gov/MSC/UserGuide/tahoma/tahoma_overview.html}).
Tahoma is a high-performance computing system with 184 Intel Cascade Lake compute nodes, each containing 36 Intel Xeon Gold 6254 CPU cores at 3.1 GHz and 384 GB of memory. 
The system provides approximately 98 TB of aggregate memory, 10 PB of global BeeGFS storage, and a peak performance of approximately 1.015 PFLOPS. 
Tahoma also includes 24 ML/AI nodes equipped with two NVIDIA Tesla V100 32-GB GPUs per node.
Simulation data generation and extensive computational testing on Tahoma required approximately 13 million CPU core-hours in total (needed for GeoNet model development).
Additionally, EMSL Tahoma and the PNNL Deception high-performance computing systems (CPUs and GPUs) were used for AI/ML model development,
hyperparameter optimization, training, and evaluation. 
Deception, a HPC cluster comprising approximately 284 compute nodes and 18,000 AMD EPYC CPU cores, with 256 GB of memory per standard node, high-speed InfiniBand networking, and parallel BeeGFS storage. 
The system provides NVIDIA A100 and H100 GPU accelerators for large-scale AI/ML model training, along with CPU computing resources managed through the SLURM workload scheduler. 
These computational resources supported large-scale neural-operator training and testing.
On Deception, we trained and evaluated GeoNet using NVIDIA A100 GPUs to accelerate model training and performance evaluation.

\section*{Author contributions:}
MLM contributed to conceptualization, development of the research idea, problem formulation, GeONet code development and implementation, technical execution, and preparation of the original manuscript draft.
MKM contributed to funding acquisition; conceptualization and development of the research idea; PFLOTRAN data generation; mathematical formulation and analysis of GeONet; testing and review of the GeONet code; technical execution; project supervision and oversight; and manuscript drafting, editing, and revision.
AT contributed to generating Pore2Chip-enabled geometry realizations derived from experimental data needed for PFLOTRAN simulations.
IP contributed to GeONet code development, technical execution, results generation and analysis, CPU and GPU scaling, and manuscript drafting.
YF, PR, and MP contributed to funding acquisition, technical supervision, and manuscript review and editing.
AB, AT, and ER contributed to the generation and analysis of experimental and Pore2Chip datasets needed for PFLOTRAN simulations.
AC and RKS contributed to pore-scale modeling and analysis using two-dimensional Stokes-flow simulations.
AT, SK, AB, TV, RKS, AC, KBN, CM, ER, and GH contributed to manuscript review and editing.

%====================================;
%  Section: AI usage and disclosure  ;
%====================================;
\section*{AI usage and disclosure}
Enterprise LLMs (e.g., Grok, ChatGPT) were used to assist with editing (e.g., focusing on grammar) this manuscript draft. 
The authors reviewed the edited text, checked technical claims against the software and documentation, and are responsible for the final content.

\section*{Abbreviations}
\label{Sec:abbreviations}
\begin{description}
    \item[AI] Artificial Intelligence
    \item[BC] Boundary Condition
    \item[CNN] Convolutional Neural Network
    \item[CPU] Central Processing Unit
    \item[DeepONet] Deep Operator Network
    \item[DeepONet-BC] Boundary-Condition-Aware Deep Operator Network
    \item[DOE] U.S. Department of Energy
    \item[EMSL] Environmental Molecular Sciences Laboratory
    \item[FNO] Fourier Neural Operator
    \item[GELU] Gaussian Error Linear Unit
    \item[GeONet] Geometry-Encoded Operator Network
    \item[GPU] Graphics Processing Unit
    \item[HDF5] Hierarchical Data Format Version 5
    \item[HPC] High-Performance Computing
    \item[HPO] Hyperparameter Optimization
    \item[JIT] Just-In-Time Compilation
    \item[LDRD] Laboratory Directed Research and Development
    \item[LLNL] Lawrence Livermore National Laboratory
    \item[LoC] Lab-on-Chip
    \item[MAE] Mean Absolute Error
    \item[MAPE] Mean Absolute Percentage Error
    \item[ML] Machine Learning
    \item[MLP] Multilayer Perceptron
    \item[MSE] Mean-Squared Error
    \item[NRMSE] Normalized Root-Mean-Squared Error
    \item[PDE] Partial Differential Equation
    \item[PFLOTRAN] Massively parallel flow and reactive-transport simulator
    \item[PNNL] Pacific Northwest National Laboratory
    \item[RCP] Reduced Complexity Platform
    \item[RFF] Random Fourier Features
    \item[RMSE] Root-Mean-Squared Error
    \item[SDF] Signed-Distance Field
    \item[SNL] Sandia National Laboratories
    \item[XCT] X-ray Computed Tomography
    \item[XLA] Accelerated Linear Algebra
    \item $\mathbf{a}\odot\mathbf{b} \; \text{= element-wise multiplication (Hadamard product), produces a vector}$
\end{description}

\appendix
\input{sections/s6_math_appendix}

\bibliographystyle{model1-num-names}
\bibliography{references}

\end{document}

%% file: sections/abstract.tex
\begin{abstract}
\boldmath
Rapid prediction of pressure-driven flow in microfluidic porous-media analogs is important for experimental design, tracer-test interpretation, and real-time digital-twin development for subsurface energy and critical mineral science applications. High-fidelity numerical simulators resolve geometry-controlled flow in heterogeneous pore networks, but their computational cost limits repeated evaluation across large geometry ensembles. Here, we introduce GeONet, a Geometry-Encoded Operator Network that predicts steady-state pressure fields by jointly representing material properties, spatial coordinates, and explicit solid--void geometry. GeONet extends the conventional Branch--Trunk architecture with a dedicated Leaf Network that encodes the binary domain mask through multiscale features. The Branch Network represents porosity and permeability, the Trunk Network combines coordinates with local geometric descriptors and random Fourier features, and a gated DeepMixer integrates the three latent representations before decoding pressure. GeONet was trained and evaluated using 5,000 PFLOTRAN simulations of heterogeneous Reduced Complexity Platform domains, comprising nearly 30~TB of data. Direct comparisons with pore-scale Stokes solutions across three geometries yielded Pearson correlations of 0.987--0.991 and $R^2$ values of 0.974--0.982, supporting PFLOTRAN solutions as reference data. Under matched data partitions and two-stage hyperparameter optimization, GeONet substantially outperformed a boundary-condition-aware DeepONet baseline. Across three held-out test domains, GeONet achieved NRMSE of 0.0381--0.0500, MAPE of 1.76--2.59\%, and $R^2$ of 0.9673--0.9822, compared with 0.1483--0.2040, 7.81--10.45\%, and 0.5658--0.7126 for DeepONet, respectively. GeONet generalized to three additional unseen geometries, achieving NRMSE of 0.0448--0.0676 and $R^2$ of 0.9511--0.9770. These results demonstrate that explicit geometry encoding improves accuracy, pressure-range preservation, and cross-geometry generalization of neural operators for flow in heterogeneous porous domains.
\end{abstract}

%% file: sections/s1_intro.tex
\section{Introduction}
\label{Sec:introduction}
Fluid flow and solute transport in porous media depend not only on material parameters, such as porosity and permeability, but also on the geometry of the connected pore space~\cite{Bear1972Dynamics,Blunt2017Multiphase,Berkowitz2002FracturedReview}. 
This coupling is especially important in Lab-on-Chip (LoC) and Reduced Complexity Platform (RCP) experiments, where the chip's solid--void pattern defines connected pathways, constrictions, dead zones, and preferential channels~\cite{Karadimitriou2012Micromodels,Anbari2018Microfluidic,Gerami2019MicrofluidicsPorous}. 
These geometric features create sharp pressure gradients and localized flow structures that a surrogate model must resolve to support experimental design, inverse analysis, or real-time control~\cite{Kapteyn2021DigitalTwin}.
High-fidelity numerical models provide the most direct way to resolve these effects. 
Pore-scale Stokes simulations describe creeping flow in the explicitly resolved void space~\cite{Whitaker1986Darcy}, while continuum models such as PFLOTRAN solve Darcy-scale flow and transport equations on structured and unstructured heterogeneous grids~\cite{BearCheng2010Groundwater,Hammond2014PFLOTRAN,Lichtner2015PFLOTRAN}. 
Both approaches are accurate but expensive for high-resolution chips or large geometry ensembles. 
In our generated dataset, a single PFLOTRAN simulation requires approximately two days on two compute nodes (72 CPU cores). 
This cost motivates surrogate models that can rapidly approximate the solution operator while retaining the geometry-dependent structure of the pressure field~\cite{Chan2018MLUncertainty,Wen2022UFNO,Diab2023UDeepONet,Tang2021DLCarbonStorage,Kadeethum2024NeuralOperatorCO2,Goswami2022TransferOperator}.

Neural operators provide a natural framework for this task because they learn mappings between input functions and output solution fields~\cite{Kovachki2023NeuralOperator,Lu2021DeepONet,Li2021FNO}. 
DeepONet \cite{Lu2021DeepONet} and the Fourier Neural Operator \cite{Li2021FNO} have shown that neural networks can approximate solution operators across broad classes of partial differential equations (PDEs). 
Subsequent developments, including physics-informed neural operators~\cite{Wang2022PINO}, improved DeepONet training strategies~\cite{Wang2022ImprovedDeepONet}, multi-input operator networks~\cite{MIONet2022}, U-Net-enhanced DeepONet architectures~\cite{Diab2023UDeepONet}, discontinuity-capturing neural operator~\cite{roy2026phi}, and arbitrary-geometry transformer formulations~\cite{Chen2026ArGEnT} have increased the flexibility and accuracy of operator-learning models.
Despite these advances, many operator-learning architectures remain primarily coordinate-centric. 
In a standard Branch--Trunk formulation, the Branch Network encodes input functions into global coefficients and the Trunk Network evaluates basis-like functions at spatial coordinates, often enriched with random Fourier features to represent high-frequency spatial variation~\cite{Tancik2020FourierFeatures}. 
This structure is effective for smooth solution manifolds but is less effective when the dominant source of variability is the local interaction between pore geometry and flow. 
If geometric information is omitted, compressed into a single global embedding, or supplied through a fixed descriptor, the model can reproduce the large-scale pressure drop but often smooths over pore-scale discontinuities, constrictions, and localized extrema.

Geometry-encoded variants have begun to address this limitation. 
Recent extensions of DeepONet have incorporated geometric information to enable operator learning across parameterized and non-parametric domains. 
Geom-DeepONet introduced a point-cloud-based DeepONet framework for predicting fields on 3D parameterized geometries by combining spatial coordinates, signed-distance information, and geometry-conditioned branch--trunk representations~\cite{he2024geomdeeponet}. 
Point-DeepONet further extended this idea to non-parametric 3D geometries by integrating PointNet-based geometry encoding with DeepONet for nonlinear structural-response prediction under varying shapes and loading conditions~\cite{park2024pointdeeponet}. 
Related Signed Distance Function (SDF)-based geometric DeepONet formulations have also been developed for 3D flow surrogates around complex objects, where signed-distance fields and derivative constraints improve geometric awareness and boundary-gradient prediction~\cite{rabeh2025neuraloperatorflowsurrogates}. 
Broader geometry-informed neural operators, including Geometry-informed neural operator (GINO) for large-scale 3D PDEs and Geo-FNO for operators on general geometries, similarly emphasize explicit geometric encodings rather than coordinate-only evaluation~\cite{Li2023GINO,Li2023GeoFNO, Yin2024DIMON}. 
Physics- and geometry-informed DeepONet variants, such as PGI-DeepONet and GAPI-DeepONet, incorporate geometric parameterizations or geometry-adaptive conditioning, along with physics-informed losses, to improve generalization across arbitrary computational domains~\cite{nair2024physicsgeometryinformed,lee2026gapideeponet}. 
More recently, ArGEnT introduced an attention-based geometry-aware transformer architecture that separates query-point sampling from geometry encoding through cross-attention, improving robustness for field prediction on complex geometries and arbitrary spatial evaluation locations~\cite{Chen2026ArGEnT}.

Beyond general geometric conditioning, recent work has begun tailoring geometry-aware operator learning to transport phenomena in complex microstructures. 
In this direction, PRT-DeepONet extends DeepONet to pore-scale reactive transport by learning geometry-dependent concentration fields in porous media, with a recent extension incorporating pore-scale velocity prediction through physically informed representations of local pore geometry, nonlocal connectivity, wall proximity, and pressure-driven~\cite{Kim2026PRTDeepONet,Jo2026PRTDeepONetVelocity}. 
PRT-DeepONet incorporates geometric information through convolutional feature extraction and a geodesic-distance descriptor in the trunk input. 
Complementary CNN-enhanced operator frameworks such as Porous-DeepONet further demonstrate that convolutional geometry encoders improve learning of parametric reactive-transport operators in porous media~\cite{Huang2024PorousDeepONet}. 
Related graph- and finite-element-based treatments of flow and transport in complex connectivity structures likewise highlight that topology and local conservation structure control solution behavior~\cite{karra2018modeling,mudunuru2016enforcing}. 
Such methods demonstrate the value of geometry-aware operator learning, but they still rely partly on predefined descriptors or global geometry summaries. 
For heterogeneous chip geometries, the desired surrogate should learn how local geometry, material heterogeneity, and the query location interact to determine the solution value.

We address this need with \textit{GeONet}, a Geometry-Encoded Operator Network for pressure-driven flow in heterogeneous porous-media chips (Figure \ref{fig:geonet_architecture}).
GeONet introduces a dedicated Leaf Network that encodes the binary solid--void geometry, alongside a Branch Network for porosity and permeability and a Trunk Network for spatial--temporal coordinates. 
The three latent streams are combined through a gated bilinear fusion module, allowing the predicted solution at a location to depend jointly and nonlinearly on geometry, material properties, and coordinates.
The central hypothesis is that explicit geometry encoding improves neural-operator prediction of microscale pressure fields by preserving localized, topology-induced structure that standard DeepONet architectures tend to smooth. 
\begin{figure*}[width=\linewidth,pos=htbp]
    \centering
    \includegraphics[width=0.95\textwidth]{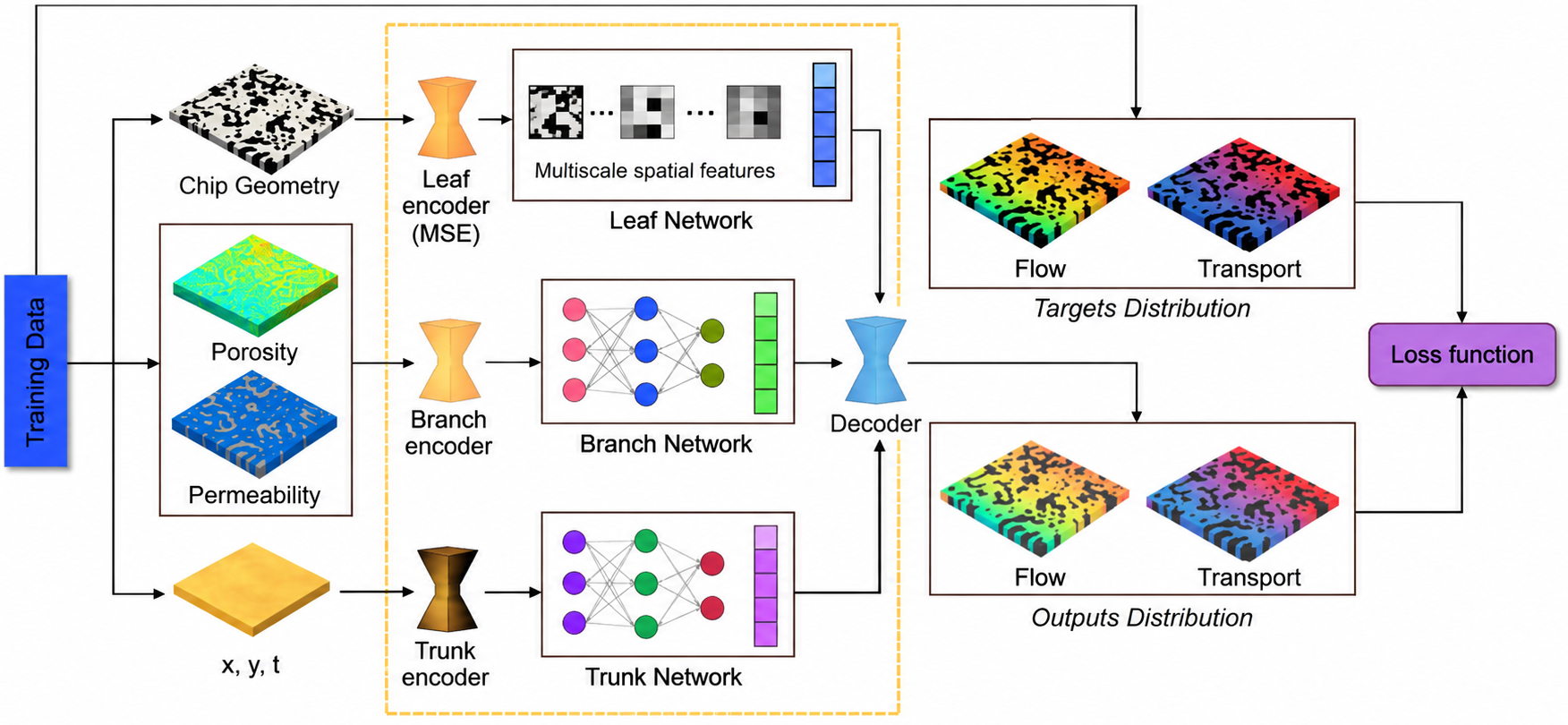}
    \caption{\textbf{Architecture of the GeONet framework.}
    GeONet contains three complementary encoding pathways. The Leaf Network encodes the complete binary solid--void geometry using a multiscale spatial encoder that retains coarse spatial features across multiple resolutions
    before forming a realization-level geometry representation. The Branch Network encodes the hydraulic material properties, represented here by porosity and permeability, while the Trunk Network represents the spatial query coordinates together with local geometry-derived descriptors and random Fourier features. The DeepMixer integrates the resulting geometry, material-property, and query-dependent latent representations, and the prediction head maps the fused representation to the target field.}
    \label{fig:geonet_architecture}
\end{figure*}

We test this hypothesis using thousands of PFLOTRAN simulations on heterogeneous RCP geometries~\cite{Hammond2014PFLOTRAN,Lichtner2015PFLOTRAN}. 
The results show that GeONet accurately reconstructs pressure fields, preserves the pressure distribution, reduces bias in error profiles, and generalizes to chip geometries that were unseen during training. 
Together, these results establish geometry-conditioned operator learning as a practical route toward fast surrogate modeling for LoC and porous-media flow systems for subsurface energy and critical mineral science applications~\cite{Bickle2009CO2Storage,Pires2011CCSOverview}.

%% file: sections/s2_data.tex
\section{Dataset Generation and Physical Parameters}
\label{Sec:S2_data_generation}
The dataset exposes the neural operator to the same sources of variability that control flow in physical chip experiments, including pore geometry, pore connectivity, and heterogeneous material properties. 
Each sample pairs an RCP geometry and its associated porosity and permeability fields with high-resolution PFLOTRAN outputs of pressure, velocity, and tracer concentration. Both pressure, velocity, and tracer concentration fields are stored in
the HDF5 dataset (see Data availability
section~\ref{Sec:data_availability}). The present study focuses on pressure as the prediction target, while the dataset retains the velocity and tracer concentration fields for future extensions of GeONet to coupled flow--transport prediction.

\subsection{Reduced Complexity Platform geometries}
The computational domains are heterogeneous LoC geometries represented by binary solid--void masks (Figure \ref{fig:dataset_generation}a). 
Void pixels denote connected fluid pathways while solid pixels denote impermeable obstacles that redirect flow and impose no-flow barriers. 
The resulting RCP domains are sufficiently simplified to support large simulation ensembles while retaining pore-scale features central to porous-media transport, including channel connectivity, constrictions, disconnected or weakly connected regions, and tortuous pathways.
%
% \begin{figure}
%     \centering
%     \includegraphics[width=0.5\textwidth]{figures/rcp_geometries.png}
%     \caption{\textbf{Example heterogeneous RCP (or microfluidic) geometry.}
%     A representative Lab-on-Chip domain is shown as a binary solid--void structure. 
%     The geometry contains connected pathways, constrictions, and isolated solid regions that control pressure gradients and preferential flow. I will combine this fig the fig 2}
%     \label{fig:rcp_geometries}
% \end{figure}
%
This representation is deliberately explicit: the geometry is not treated as an implicit consequence of permeability alone. Instead, it is provided directly to the model as a spatial mask so that the neural operator can learn how pore topology affects the local solution. 
Across the ensemble, the geometries vary in connectivity, channel width, and spatial heterogeneity, producing a range of flow regimes, including preferential pathways, hydraulic bottlenecks, and low-flow regions.

\subsection{Porosity and permeability fields}
Each geometry is paired with spatially heterogeneous porosity $\phi(\mathbf{x})$ (Figure \ref{fig:dataset_generation}b) and intrinsic permeability $k(\mathbf{x})$ fields (Figure \ref{fig:dataset_generation}c). 
Porosity controls local fluid storage and appears in the transport equations, while permeability determines hydraulic conductance and, therefore, the magnitude and direction of Darcy flux. 
Regions of high permeability form preferential pathways, whereas low-permeability or solid regions restrict flow and generate strong pressure gradients.
The assigned physical fields are heterogeneous at the same spatial scale as the chip geometry. 
As illustrated in Figure~\ref{fig:dataset_generation}, porosity spans low- and high-storage regions and permeability spans orders of magnitude, ranging from effectively impermeable solid zones to highly conductive pore channels. 
This combination of binary geometry and continuous material fields allows the dataset to represent both topological constraints and contrasts in hydraulic properties.

\subsection{PFLOTRAN simulations}
We simulated reference pressure and tracer fields with PFLOTRAN, a massively parallel simulator for coupled flow and reactive transport in porous media \cite{Hammond2014PFLOTRAN}. 
For each RCP geometry, PFLOTRAN solves the flow problem on a structured grid with the assigned porosity and permeability fields. 
The simulations use steady injection from the inlet boundary, with fluid propagating through the heterogeneous pore network toward the outlet. 
The reported dataset includes pressure fields (Figure \ref{fig:dataset_generation}e) and tracer concentration fields over a two-hour injection period (Figure \ref{fig:dataset_generation}f).

\begin{figure}[width=\linewidth,pos=htbp]
    \centering
    \includegraphics[width=0.97\textwidth]{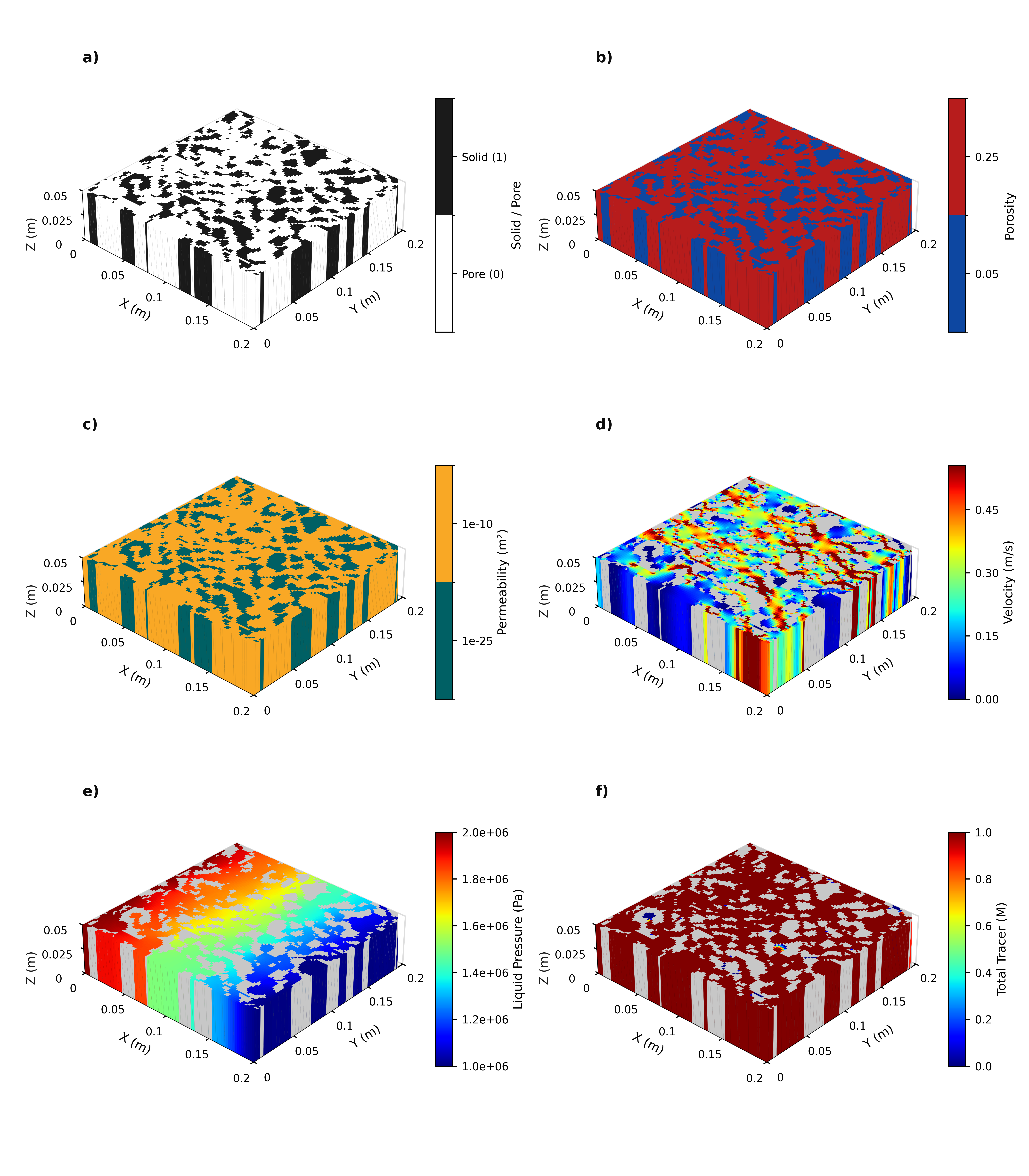}
    \caption{\textbf{Dataset-generation workflow for heterogeneous RCP domains.}
    A representative chip geometry is converted into spatially heterogeneous material properties and used as input to PFLOTRAN Richards-flow and tracer-transport simulations. The resulting velocity, pressure, and concentration fields provide the supervised outputs for operator-learning model development.
    \textbf{a)}~Chip geometry mask (solid vs.\ pore).
    \textbf{b)}~Porosity field,
    \textbf{c)}~Permeability field,
    \textbf{d)}~Liquid-velocity magnitude,
    \textbf{e)}~Liquid pressure, and
    \textbf{f)}~Total tracer concentration (top slice).
    Flow is driven West-East by fixed inlet/outlet pressures; tracer is injected at the West face according to the prescribed inlet schedule. Gray/white regions in panels~\textbf{d)}--\textbf{f)} denote solid, no-flow domains. Axes are shown in meters.}
    \label{fig:dataset_generation}
\end{figure}

The simulations are computationally demanding because they resolve microscale heterogeneity over many distinct geometries. 
A representative PFLOTRAN run required approximately two days on two compute nodes, each with dual-socket Intel Xeon Gold~6254  (Cascade Lake) CPUs at $3.1$~GHz ($36$ cores; $384$~GB RAM per node).
The cost of producing these reference fields motivates the use of a neural surrogate whose low inference cost can amortize the initial simulation expense.
Once trained, the model can evaluate new geometries and parameter fields much faster than running a new high-fidelity simulation.
The resulting supervised learning problem can be written as
\begin{equation}
\left(G,\phi,k;\mathbf{s}\right) \longmapsto \left(p(x,y),c(x,y)\right),
\qquad \mathbf{s}=(x,y),
\end{equation}
where $G$ is the binary solid--void geometry, $\phi$ is porosity, $k$ is permeability, $\mathbf{s}$ is the query coordinate, $p$ is pressure, and $c$ is tracer concentration. 
The simulations reported below focus on the pressure component of this mapping.

%% file: sections/s3_methods.tex
\section{Methodology}
\label{Sec:S3_methods}
GeONet is formulated as a supervised neural operator that maps a geometry-conditioned physical state to a pressure value or concentration value at a queried space--time coordinate. 
This section defines the governing flow models used for verification and data generation, the normalization procedure, the baseline DeepONet comparison, the GeONet architecture, the training setup, and the evaluation metrics.

\subsection{PFLOTRAN Darcy flow and pore-scale Stokes verification}
\label{Sec:darcy_stokes}
We use two flow descriptions on the same chip geometry.
The first is a pore-scale incompressible Stokes model solved directly in the resolved void space.
The second is a continuum Darcy model solved by PFLOTRAN on a structured grid with defined heterogeneous porosity and permeability.
The Stokes comparison is used only to verify that the PFLOTRAN pressure fields preserve the expected pore-scale pressure structure.

Let $\Omega_f$ denote the connected fluid region and $\Gamma_s$ the solid--fluid interface, i.e., the pore wall.
Under the creeping-flow conditions, the governing equations for steady incompressible Stokes flow are
\begin{equation}
-\nabla p + \mu \nabla^2 \mathbf{u}=\mathbf{0}, \qquad
\nabla\cdot\mathbf{u}=0 \qquad \mathrm{in}\, \; \Omega_f,
\label{eq:stokes}
\end{equation}
where $\mathbf{u}$ is the velocity, $p$ is the pressure, and $\mu$ is dynamic viscosity of fluid.
The boundary conditions are
\begin{equation}
\mathbf{u}=\mathbf{0} \quad \mathrm{on}\quad \Gamma_s, \qquad
p=p_{\mathrm{in}} \quad \mathrm{on}\quad \Gamma_{\mathrm{in}}, \qquad
p=p_{\mathrm{out}} \quad \mathrm{on}\quad \Gamma_{\mathrm{out}},
\label{eq:stokes_bc}
\end{equation}
with no penetration on any remaining exterior boundaries.

PFLOTRAN solves single-phase flow in a porous-medium representation with porosity $\phi(\mathbf{x})$ and intrinsic permeability $k(\mathbf{x})$.
The governing equations are mass conservation and Darcy's law,
\begin{equation}
\frac{\partial}{\partial t}\left(\phi\rho\right)+\nabla\cdot\left(\rho\mathbf{q}\right)=Q,
\qquad
\mathbf{q}=-\frac{k}{\mu}\left(\nabla p-\rho\mathbf{g}\right),
\label{eq:darcy}
\end{equation}
where $\mathbf{q}$ is Darcy flux, $\rho$ is fluid density, $Q$ is a source or sink term, and $\mathbf{g}$ is gravitational acceleration.
For the steady, incompressible, source-free, and negligible gravity conditions, the Darcy equation reduces to
\begin{equation}
\nabla\cdot\mathbf{q}=0, \qquad
\nabla\cdot\left(\frac{k}{\mu}\nabla p\right)=0.
\label{eq:darcy_steady}
\end{equation}
The pressure boundary conditions mirror the Stokes condition, with prescribed inlet and outlet pressures and no-flow conditions elsewhere.

Both models are run with the same Dirichlet pressures,
$p_{\mathrm{in}}=2\times10^{6}\,\mathrm{Pa}$ and
$p_{\mathrm{out}}=1\times10^{6}\,\mathrm{Pa}$.
The Stokes and PFLOTRAN pressure fields are compared directly in pascals; no additive offset or pressure-drop scaling is applied.
Pore cells are identified with a porosity threshold $\phi_{\mathrm{th}}$ ($\phi>\phi_{\mathrm{th}}$).

\subsection{Normalization of Physical Variables}
\label{sec:normalization}

Model inputs and targets are normalized using statistics estimated exclusively from the training split.
We then apply the fitted transformations without modification to the validation and test samples.
Network outputs are transformed back to physical units before calculating the reported metrics and generating the figures.

Porosity was transformed using global min--max normalization:
\begin{equation}
\widetilde{\phi}
=
\frac{\phi-\phi_{\min}^{\mathrm{tr}}}
{\phi_{\max}^{\mathrm{tr}}-\phi_{\min}^{\mathrm{tr}}+\epsilon},
\label{eq:porosity_normalization}
\end{equation}
where $\phi_{\min}^{\mathrm{tr}}$ and $\phi_{\max}^{\mathrm{tr}}$ are the
minimum and maximum porosity values in the training data set, respectively,
and $\epsilon=10^{-12}$ prevents division by zero.
The same training-derived extrema were used to normalize the validation and
test realizations.

Because permeability spans several orders of magnitude, the selected GeONet
and DeepONet configurations use model-specific permeability transformations.
For DeepONet, permeability was first transformed logarithmically:
\begin{equation}
k_{\log}
=
\log_{10}\!\left[\max(k,\varepsilon_k)\right],
\label{eq:permeability_log}
\end{equation}
where $\varepsilon_k$ is a small positive threshold introduced to prevent
evaluation of $\log_{10}(0)$.
The transformed permeability was then normalized using the training-set
range:
\begin{equation}
\widetilde{k}_{\log}
=
\frac{k_{\log}-k_{\log,\min}^{\mathrm{tr}}}
{k_{\log,\max}^{\mathrm{tr}}-k_{\log,\min}^{\mathrm{tr}}+\epsilon}.
\label{eq:permeability_log_normalization}
\end{equation}

In contrast, the selected GeONet configuration uses linear min--max
normalization of the untransformed permeability:
\begin{equation}
\widetilde{k}
=
\frac{k-k_{\min}^{\mathrm{tr}}}
{k_{\max}^{\mathrm{tr}}-k_{\min}^{\mathrm{tr}}+\epsilon}.
\label{eq:permeability_linear_normalization}
\end{equation}

Thus, both models receive permeability inputs scaled approximately to the
unit interval, but DeepONet operates on logarithmically transformed
permeability, whereas GeONet operates on linearly scaled permeability.
These model-specific transformations correspond to the configurations
selected during model development and were retained for the final models
used in the subsequent analyses.

Pressure values, which are of order $10^{6}$~Pa, were normalized using a
training-derived affine transformation:
\begin{equation}
\widetilde{p}
=
\frac{p-Q_{0.05}^{\mathrm{tr}}(p)}
{\max\!\left[\sigma_p^{\mathrm{tr}},\,10^{5}\ \mathrm{Pa}\right]},
\label{eq:pressure_normalization}
\end{equation}
where $Q_{0.05}^{\mathrm{tr}}(p)$ is the fifth percentile of the training
pressure distribution and $\sigma_p^{\mathrm{tr}}$ is its standard
deviation.
The lower bound of $10^{5}$~Pa prevents amplification of numerical
variations when the pressure distribution has a small standard deviation.
The fifth-percentile offset is preferred over the absolute minimum (or mean)
as a robust location estimate: pressure grids contain solid-masked and
near-boundary extremes, so a low quantile reduces sensitivity to a few
outlier voxels while remaining close to the Dirichlet low-pressure boundary
that dominates the lower end of the field
\cite{huber1981robust,rousseeuw1993alternatives}.
Quantile-based centering of this type is standard practice for robust
feature scaling when sample moments are easily distorted by heavy tails or
isolated extremes~\cite{pedregosa2011scikit}.
This transformation was applied consistently during training, validation,
and inference.

Normalized pressure predictions were returned to physical units using
\begin{equation}
\hat{p}
=
\widetilde{\hat{p}}\,
\max\!\left[\sigma_p^{\mathrm{tr}},\,10^{5}\ \mathrm{Pa}\right]
+
Q_{0.05}^{\mathrm{tr}}(p),
\label{eq:pressure_inverse_normalization}
\end{equation}
where $\widetilde{\hat{p}}$ and $\hat{p}$ denote the normalized and
dimensional model predictions, respectively.
All reported pressure errors and spatial comparisons were calculated after
applying this inverse transformation.

The two models also differ in their treatment of spatial coordinates.
For DeepONet, the query coordinates were independently mapped to the unit
square for each realization:
\begin{equation}
\widetilde{x}
=
\frac{x-x_{\min}}
{x_{\max}-x_{\min}+\epsilon},
\qquad
\widetilde{y}
=
\frac{y-y_{\min}}
{y_{\max}-y_{\min}+\epsilon}.
\label{eq:coordinate_normalization}
\end{equation}
This normalization provides the DeepONet Trunk with dimensionless coordinates
on a common numerical range. In contrast, the GeONet Trunk uses the physical
coordinates directly and augments them with local geometry-derived descriptors
and random Fourier features. This distinction follows from the respective
model formulations: the DeepONet Trunk is coordinate based, whereas the
GeONet Trunk jointly represents spatial position and the local geometric
environment of each query point. Retaining physical coordinates in GeONet
also preserves their spatial scale when constructing the Fourier embedding
and relating query locations to the geometry-derived descriptors. The same
coordinate treatment is applied consistently during GeONet training,
validation, and testing.

\subsection{Baseline DeepONet}
\label{sec:baseline_deeponet}

We use a boundary-condition-aware Deep Operator Network, hereafter denoted
DeepONet, as the baseline for evaluating the contribution of GeONet's
explicit geometry representation.
DeepONet retains the conventional Branch--Trunk structure but does not
include the Leaf Network or the geometry-aware DeepMixer pathway introduced
in GeONet.

In its canonical factorized form, DeepONet approximates an operator by
expressing the solution at a spatial query coordinate $\mathbf{s}$ as
\begin{equation}
\hat{u}_{\boldsymbol{\theta}}(\mathbf{s})
=
\sum_{j=1}^{r}
b_j(\mathbf{a})\,t_j(\mathbf{s}) + b_0,
\label{eq:deeponet_factorization}
\end{equation}
where $\mathbf{a}$ denotes the discretized input functions, $b_j(\mathbf{a})$
are the coefficients produced by the Branch Network, $t_j(\mathbf{s})$ are
the coordinate-dependent basis representations generated by the Trunk
Network, $r$ is the latent dimension, and $b_0$ is a trainable bias.
This factorization separates the input-function representation from the
spatial coordinates at which the output operator is evaluated.

The DeepONet implementation used in this study follows the same underlying
Branch--Trunk principle but employs a learned nonlinear fusion layer instead
of the direct inner product in Eq.~\eqref{eq:deeponet_factorization}.
This modification allows the Branch and Trunk latent dimensions to differ
while retaining the defining separation between realization-level inputs and
spatial query coordinates.
The Branch Network encodes the hydraulic properties of each RCP realization.
It stacks porosity and log-transformed permeability as input channels and
spatially averages them over the domain.
The resulting realization-level descriptors are processed using a multilayer
perceptron to produce the global hydraulic latent vector $\mathbf{b}$.
This representation compactly describes the bulk material state but does not
explicitly retain the spatial arrangement of the pore geometry.

The Trunk Network maps the normalized query coordinate $\mathbf{s}=(x,y)$ to
a spatial latent representation $\mathbf{t}(\mathbf{s})$.
Unlike the GeONet Trunk Network, it does not receive local mask indicators,
distance-to-boundary descriptors, or other geometry-derived features.
Although Stage~II hyperparameter optimization (HPO)\footnote{In this study, HPO refers to the systematic search used to select model-architecture and training hyperparameters based on validation performance, using a broad search on 500 simulations followed by targeted refinement on the full set of 5000 simulations.} identified a mild random Fourier feature setting
($n_f{=}4$, $\sigma_f{=}1$), the boundary condition (BC)-focused training protocol disabled random Fourier features in the Trunk (\texttt{DEEPONET\_USE\_RFF}${}=0$), effectively setting
$n_f{=}0$ and $\sigma_f{=}0$.
Consequently, the final DeepONet model used for the reported results employs a pure coordinate MLP, in contrast to GeONet's selected Fourier representation ($n_f{=}48$, $\sigma_f{=}10$).
The Branch latent vector is broadcast across the spatial query points and
concatenated with the corresponding Trunk representations.
A multilayer perceptron then fuses the two streams:
\begin{equation}
\mathbf{z}_{\mathrm{DON}}(\mathbf{s})
=
\mathcal{F}_{\boldsymbol{\theta}}
\left[
\mathbf{b},
\mathbf{t}(\mathbf{s})
\right],
\label{eq:deeponet_fusion}
\end{equation}
where $\mathcal{F}_{\boldsymbol{\theta}}$ denotes the Branch--Trunk fusion
network.
The predicted pressure is subsequently obtained from
\begin{equation}
\hat{P}_{\boldsymbol{\theta}}(\mathbf{s})
=
\mathcal{D}_{\boldsymbol{\theta}}
\left[
\mathbf{z}_{\mathrm{DON}}(\mathbf{s})
\right],
\label{eq:deeponet_decoder}
\end{equation}
where $\mathcal{D}_{\boldsymbol{\theta}}$ is the pressure-decoder network.
Although this nonlinear fusion differs from the direct dot-product
representation of canonical DeepONet, the model preserves the central
Branch--Trunk decomposition and contains no independent geometry encoder.

The prescribed high- and low-pressure boundaries are incorporated during
training through the boundary-condition penalty defined in
Eq.~\eqref{eq:deeponet_loss}.
The boundary loss is evaluated over three-grid-cell-wide strips adjacent to the Dirichlet boundaries, defining a narrow boundary region where the prescribed pressure values are enforced, while the interior pressure field remains the dominant contribution to the training objective. 
We used a fixed width of three grid cells across all DeepONet experiments as an implementation choice and we did not optimize the boundary-strip width. 
The boundary-condition term contributes to the objective with a weight of $W_{\mathrm{BC}}{=}10$. Boundary-condition enforcement therefore occurs primarily through the training objective rather than through an additional geometry-encoding
pathway.
For the reported DeepONet prediction figures and tables, Dirichlet
high/low edges were additionally overwritten after decoding
(\texttt{ENFORCE\_BC}${}=1$, mode \texttt{auto}, blend width of one
grid cell) so that the displayed fields match the prescribed boundary
values.
Metrics are computed on these post-processed fields unless a
raw-network ablation is stated.

Relative to GeONet, DeepONet differs in three principal respects: it contains
no Leaf geometry encoder, its Branch Network provides only a global
representation of the material properties, and its Trunk Network does not
receive local geometry-derived descriptors (and, for the reported BC
model, uses no Fourier features).
Consequently, the comparison isolates the effect of providing GeONet with
dedicated domain-scale and query-local geometric information.
Both models otherwise use the same PFLOTRAN realizations, data partitions,
pressure targets, validation criterion, and Stage~II training computational cost.
Unless otherwise stated, the reported DeepONet results use the Stage~II HPO tuned models with Branch, Trunk, and fusion dimensions of $128$, $384$, and
$256$, respectively.
The selected model contains five Trunk layers, uses GELU activations, and
employs a coords-only Trunk with Fourier features disabled
($n_f{=}0$, $\sigma_f{=}0$).

\subsection{GeONet Architecture}
\label{sec:geonet_architecture}

GeONet extends the conventional Branch--Trunk neural-operator architecture by explicitly separating material properties, domain-scale geometry, and query-local spatial information into three complementary encoding pathways. In a conventional DeepONet, the Branch Network encodes the input functions, while the Trunk Network represents the coordinates at which the learned operator is evaluated. GeONet introduces an additional geometry pathway, termed the Leaf Network, that encodes the complete solid--void configuration independently of the material-property Branch. The resulting Branch, Leaf, and Trunk representations are subsequently coupled through the DeepMixer before the target field is decoded, as illustrated in Fig.~\ref{fig:geonet_architecture}.

For the pressure-prediction problem considered here, GeONet approximates the nonlinear operator
\begin{equation}
\mathcal{G}_{\boldsymbol{\theta}}:
\left(G,\phi,k;\mathbf{s}\right)
\mapsto
\hat{P}_{\boldsymbol{\theta}}(\mathbf{s}),
\label{eq:geonet_operator}
\end{equation}
where $\boldsymbol{\theta}$ denotes the trainable parameters, $G$ is the binary solid--void mask, $\phi$ and $k$ are the spatially distributed porosity and permeability fields, and $\mathbf{s}=(x,y)$ denotes a query coordinate in the two-dimensional RCP domain. The predicted pressure at $\mathbf{s}$ is denoted by $\hat{P}_{\boldsymbol{\theta}}(\mathbf{s})$. Although the general framework can accommodate additional state variables, including solute concentration, the quantitative evaluation in the present study focuses on steady-state pressure prediction.
As shown in Fig.~\ref{fig:geonet_architecture}, the Leaf Network maps the complete binary geometry mask $G$ to a realization-level latent geometry representation $\mathbf{g}$. In the configuration used for the reported experiments, the Leaf employs a multiscale spatial encoder rather than trainable convolutional layers. The mask is represented at successive spatial resolutions, where summary statistics and coarse spatial representations are retained before further downsampling. The resulting features from three resolution levels are concatenated and processed by a multilayer perceptron to obtain $\mathbf{g}$. Consequently, the Leaf does not reduce the geometry to a single global statistic. Instead, it preserves coarse spatial information about the solid--void organization of the entire domain at multiple scales.

This domain-scale geometry representation serves a different role from both the Branch and Trunk representations. The Branch Network encodes the hydraulic material properties into a latent vector $\mathbf{b}$. Porosity and permeability are stacked as complementary input channels and spatially averaged over the domain, after which the resulting realization-level descriptors are processed by a multilayer perceptron. The Branch therefore provides a compact representation of the global hydraulic material state, whereas the Leaf retains information associated with the spatial organization of the solid--void domain. This distinction is physically relevant because two realizations can have similar globally aggregated material properties while exhibiting different pore pathways, bottlenecks, or disconnected regions as a consequence of their different geometry masks. The binary mask therefore provides structural information that is not, in general, recoverable from the globally aggregated Branch inputs.

The Trunk Network provides the query-dependent representation. For each spatial query location $\mathbf{s}=(x,y)$, it produces a latent vector $\mathbf{t}(\mathbf{s})$ using the coordinates together with local geometric descriptors derived from the binary mask. These descriptors include the local phase indicator and distances to nearby solid and void regions, thereby characterizing the geometric environment surrounding the query point. The coordinate representation is further augmented with random Fourier features to improve the representation of higher-frequency spatial variations and localized pressure changes near geometric constrictions and material interfaces. The final GeONet configuration uses $n_f=48$ Fourier frequencies, a Fourier scale of $\sigma_f=10$, and GELU activation functions.
The Leaf and Trunk pathways therefore provide complementary geometric information at different spatial levels. The Trunk describes where an individual query point lies and the geometry in its immediate neighborhood, whereas the Leaf supplies a realization-level representation of the solid--void organization of the complete domain. Local geometry alone does not uniquely characterize the larger pore configuration: query points with similar phase and distance descriptors can belong to different domain-scale pathways. 
For example, a through-going channel or a dead-end region. By providing the same domain-level geometry representation $\mathbf{g}$ to all queries from a given realization, the Leaf allows the prediction at $\mathbf{s}$ to be conditioned on the larger-scale pore configuration in addition to its local geometric environment. This separation is consistent with the geometry-conditioned formulation discussed further in Appendix~A.

The three latent streams are integrated through the DeepMixer. The Branch and Leaf representations provide realization-level material and geometry information, respectively, while the Trunk representation varies with the query coordinate. The Branch and Leaf vectors are concatenated and broadcast across the spatial query points, after which they interact with the query-dependent Trunk representation through a gated query--key--value mixing operation followed by a multilayer perceptron. Schematically, the fused representation is written as
\begin{equation}
\mathbf{z}(\mathbf{s})
=
\mathcal{M}_{\boldsymbol{\theta}}
\left[
\mathbf{b},
\mathbf{g},
\mathbf{t}(\mathbf{s})
\right],
\label{eq:geonet_fusion}
\end{equation}
where $\mathcal{M}_{\boldsymbol{\theta}}$ denotes the DeepMixer transformation. The prediction head subsequently maps the fused representation to the normalized pressure,
\begin{equation}
\hat{P}_{\boldsymbol{\theta}}(\mathbf{s})
=
\mathcal{D}_{\boldsymbol{\theta}}
\left[\mathbf{z}(\mathbf{s})\right],
\label{eq:geonet_decoder}
\end{equation}
where $\mathcal{D}_{\boldsymbol{\theta}}$ denotes the pressure decoder. This geometry-conditioned nonlinear fusion allows the pressure predicted at each query coordinate to depend jointly on the global hydraulic state, domain-scale solid--void organization, and local geometric environment.

The designation \emph{Leaf Network} therefore reflects the distinct information represented by this pathway rather than a particular neural layer type. It is not simply a second material-property Branch: the Branch compresses porosity and permeability into a global hydraulic representation, whereas the Leaf encodes multiscale spatial information from the complete binary domain mask. Likewise, it is distinct from the Trunk, which represents the query location and its local geometric context. The principal architectural distinction between GeONet and the conventional DeepONet baseline is thus the explicit separation of material state, domain-scale geometry, and query-local information, followed by their interaction through the DeepMixer. The final configuration used in the subsequent analyses has Branch, Leaf, and Trunk latent dimensions of 384, 128, and 128, respectively, together with a DeepMixer latent dimension of 320.

\subsection{Training Setup}
\label{sec:training_setup}

We trained GeONet and DeepONet using pressure fields generated from high-resolution PFLOTRAN simulations of heterogeneous RCP domains. The complete dataset contains approximately 5000 simulations and occupies nearly 30~TB of storage. Generating these realizations required approximately 13 million CPU core-hours on the EMSL Tahoma high-performance computing system.
Unless otherwise stated, we trained the final models on Stage~II tuned models with $N=5000$ realizations, using 3500 samples for training and 1000 for validation. We excluded simulation IDs 91--100 from model training, validation, early stopping, and hyperparameter selection and reserved them for final testing. Stage~I hyperparameter screening used a reduced set of $N=500$ under the same partitioning strategy, as described in Section~\ref{sec:hpo_strategy}.

Both architectures were formulated as neural operators that map realization-specific field and geometry information, together with spatial query coordinates $(x,y)$, to the corresponding pressure values. We evaluated each realization on its native $200\times200$ simulation grid. The resulting full-resolution tile therefore represented one complete pressure field rather than a cropped or downsampled portion of the domain. We estimated input and target normalization statistics exclusively from the training set and applied them to the validation and test sets. Porosity and permeability were the primary material-property inputs, while pressure was the prediction target.

For GeONet, the selected Stage~II architecture comprises a Branch Network that encodes the porosity and permeability fields, a Leaf Network that extracts information from the domain mask, and a Trunk Network that represents the spatial coordinates and local geometric descriptors. The coordinate representation is augmented using random Fourier features with $n_f=48$ frequencies and Fourier scale $\sigma_f=10$. The three pathways combine their outputs through a geometry-aware mixing mechanism to predict pressure at each spatial query location.
For GeONet, the training objective was formulated as the pressure mean-squared error augmented by an optional gradient-matching term:
\begin{equation}
\mathcal{L}_{\mathrm{GeONet}}(\boldsymbol{\theta})
=
\mathcal{L}_{\mathrm{MSE}}(\boldsymbol{\theta})
+
w_{\mathrm{grad}}
\mathcal{L}_{\mathrm{grad}}(\boldsymbol{\theta}),
\label{eq:geonet_loss}
\end{equation}
where $w_{\mathrm{grad}}$ controls the contribution of the gradient-based regularization. The pressure MSE is defined as
\begin{equation}
\mathcal{L}_{\mathrm{MSE}}(\boldsymbol{\theta})
=
\frac{1}{N_q}
\sum_{i=1}^{N_q}
\left[
\hat{p}_{\boldsymbol{\theta}}(\mathbf{x}_i)-p_i
\right]^2,
\label{eq:pressure_mse}
\end{equation}
where $\boldsymbol{\theta}$ denotes the trainable model parameters, $N_q$ is the number of spatial query points, $\hat{p}_{\boldsymbol{\theta}}(\mathbf{x}_i)$ is the normalized pressure predicted at coordinate $\mathbf{x}_i=(x_i,y_i)$, and $p_i$ is the corresponding normalized PFLOTRAN pressure.
The gradient term $\mathcal{L}_{\mathrm{grad}}$ is a normalized finite-difference mismatch between predicted and reference pressure increments on the $200\times200$ tile.
The gradient term was included as a hyperparameter-controlled regularization option during model development, and different values of $w_{\mathrm{grad}}$ were evaluated during HPO (see Figs.~\ref{fig:geonet_param_500} and~\ref{fig:geonet_sensitivity_500} for Stage~I and Figs.~\ref{fig:geonet_param_5000} and~\ref{fig:geonet_sensitivity_5000} for Stage~II). The configuration selected for the final GeONet model has $w_{\mathrm{grad}}=0$. Consequently, Eq.~\eqref{eq:geonet_loss} reduces to the pressure MSE in Eq.~\eqref{eq:pressure_mse} for the final model. Validation, early stopping, and model selection used the pressure MSE alone in all trials.

The selected DeepONet model comprises a Branch Network operating on porosity and log-transformed permeability and a coordinate-based Trunk Network. Unlike GeONet, DeepONet has no explicit Leaf Network or geometry-encoding pathway. The Stage~II tuned models uses a branch latent dimension of 128, a trunk latent dimension of 384, and a mixer dimension of 256. Its coordinate representation includes a mild random Fourier feature expansion with $n_f=4$ and $\sigma_f=1$.
To enforce the prescribed high- and low-pressure boundary conditions, the DeepONet objective combines the pressure MSE with a Dirichlet boundary-condition penalty and an optional gradient-regularization term:

\begin{equation}
\mathcal{L}_{\mathrm{DON}}(\boldsymbol{\theta})
=
\mathcal{L}_{\mathrm{MSE}}(\boldsymbol{\theta})
+
W_{\mathrm{BC}}
\mathcal{L}_{\mathrm{BC}}(\boldsymbol{\theta})
+
w_{\mathrm{grad}}
\mathcal{L}_{\mathrm{grad}}(\boldsymbol{\theta}),
\label{eq:deeponet_loss}
\end{equation}

where $\mathcal{L}_{\mathrm{BC}}$ measures the pressure mismatch within three-grid-cell-wide strips along the prescribed Dirichlet boundaries, $W_{\mathrm{BC}}=10$ controls the contribution of this term, and $\mathcal{L}_{\mathrm{grad}}$ denotes the optional gradient-based regularization. The latter contributes to the objective only when $w_{\mathrm{grad}}>0$.
We trained both models for up to 100 epochs. We monitored validation loss after each epoch and terminated training early if it failed to improve by at least $10^{-5}$ over 20 consecutive epochs. We retained the checkpoints with the minimum validation loss for subsequent prediction, model-comparison, and cross-domain generalization analyses.

\subsection{Hyperparameter Optimization Strategy}
\label{sec:hpo_strategy}

Training neural operators on full-resolution porous-media realizations is computationally demanding, making exhaustive evaluation of all possible architectural and optimization settings impractical. We therefore use a two-stage HPO strategy that first explores a broad configuration space on a reduced dataset and then refines the most promising configurations on the full dataset (Table~\ref{tab:hpo_protocol}). This design enables systematic model selection at a manageable computational cost while strictly separating hyperparameter optimization from final test evaluation.

\begin{table*}[width=\linewidth,pos=t]
\centering
\caption{\textbf{Two-stage hyperparameter optimization protocol.}
Stage~I broadly explores the architecture and optimization space using 500 PFLOTRAN realizations. The five configurations attaining the lowest
validation losses are transferred to Stage~II and refined using the full set of 5000 realizations.}
\label{tab:hpo_protocol}
\setlength{\tabcolsep}{8pt}
\begin{tabular}{@{}p{0.24\textwidth}p{0.34\textwidth}p{0.34\textwidth}@{}}
\toprule
Protocol component
& Stage~I: broad exploration
& Stage~II: transferred refinement \\
\midrule

Dataset size
& $N=500$
& $N=5000$ \\

Dataset partition
& 350 training, 100 validation, and 50 held-out test realizations
& 3500 training, 1000 validation, and 500 held-out test realizations \\

Trials per model
& 100
& 24 \\

Architecture search
& Predefined presets and free-form configurations
& Restricted to architecture families represented among the five
highest-performing Stage~I configurations \\

Configuration transfer
& Not applicable
& Top $K=5$ Stage~I configurations transferred as initialization seeds \\

Refinement strategy
& Broad exploration of architecture and optimization hyperparameters
& Exact retraining of transferred configurations followed by local
optimization refinement \\

Training budget
& Maximum of 100 epochs with early-stopping patience of 20 epochs
& Maximum of 100 epochs with early-stopping patience of 20 epochs \\

Selection criterion
& Minimum validation loss attained during training
& Minimum validation loss attained during refinement \\

Test-data separation
& 50 held-out test realizations excluded from training and HPO
& 500 held-out test realizations excluded from training and HPO \\ 

\bottomrule
\end{tabular}
\end{table*}

Stage~I uses a reduced set of $N=500$ sequential realizations
(\texttt{sim\_total}${}=500$; \texttt{split\_mode}${}=\texttt{seq}$), including 350 training and 100 validation samples. Stage~II uses the full set of $N=5000$, including 3500 training and 1000 validation samples. In both stages, we exclude simulation IDs 91--100 from training and validation and reserve them exclusively for final test-time evaluation. Thus, the reported test realizations do not influence model fitting, early stopping, hyperparameter ranking, or final model selection. All trials predict pressure fields on $200\times200$ spatial grids and use a fixed random seed of 42 for reproducibility.
Each trial is ranked according to the minimum validation loss achieved during training, denoted by \texttt{best\_val}. The training objective consists of the mean-squared pressure error and an optional gradient-based regularization term weighted by $w_{\mathrm{grad}}$. Models are trained for a maximum of 100 epochs, with early stopping applied when the validation loss does not improve by at least $10^{-5}$ over 20 consecutive epochs.

GeONet and DeepONet are optimized independently using the same data partitions, training budget, selection criterion, and applicable hyperparameter ranges. Architecture-specific parameters, such as the Leaf Network dimensions, are optimized only for GeONet because this geometry-encoding component is not present in the classical DeepONet architecture. All shared components are evaluated over equivalent candidate ranges to support a controlled comparison between the two models.

\begin{table*}[t]
\centering
\caption{\textbf{Hyperparameter search space used during Stage~I.}
Candidate architectures are drawn from six predefined configurations or sampled from the free-form ranges listed below. Architecture samples are combined with selected optimization schedules through structured random sampling. Leaf Network parameters apply only to GeONet.
The DeepONet retains the conventional Branch--Trunk architecture.}
\label{tab:hpo_search_space}
\setlength{\tabcolsep}{6pt}
\begin{tabular}{@{}lll@{}}
\toprule
Group & Hyperparameter & Candidate values \\
\midrule
Training
    & Learning rate, $\eta$
    & $\{10^{-5},\,3\times10^{-5},\,10^{-4},\,3\times10^{-4}\}$ \\
    & Weight decay
    & $\{0,\,10^{-6},\,10^{-5}\}$ \\
    & Terminal learning-rate fraction
    & $\{0.1,\,0.3,\,0.5\}$ \\
    & Gradient-loss weight, $w_{\mathrm{grad}}$
    & $\{0,\,0.01\}$ \\
\midrule
Architecture
    & Branch width
    & $\{128,\,256,\,384\}$ \\
    & Trunk width
    & $\{128,\,256,\,384\}$ \\
    & Mixer width
    & $\{128,\,192,\,256,\,320\}$ \\
    & Leaf width
    & $\{128,\,192,\,256\}$ \\
    & Branch depth
    & $\{2,\,3,\,4\}$ \\
    & Leaf depth
    & $\{2,\,3,\,4\}$ \\
    & Trunk depth
    & $\{3,\,4,\,5,\,6\}$ \\
    & Mixer depth
    & $\{2,\,3,\,4\}$ \\
    & Prediction-head width
    & $\{128,\,256,\,384\}$ \\
    & Prediction-head depth
    & $\{2,\,3\}$ \\
    & Number of RFF frequencies, $n_f$
    & $\{16,\,32,\,48\}$ \\
    & Fourier scale, $\sigma_f$
    & $\{5,\,10,\,20\}$ \\
    & Activation function
    & $\{\mathrm{GELU},\,\mathrm{SiLU}\}$ \\
\bottomrule
\end{tabular}
\end{table*}

During Stage~I, we evaluate 100 configurations for each model. Candidate architectures are drawn from six predefined configurations (\texttt{small}, \texttt{base}, \texttt{large}, \texttt{deep\_trunk}, \texttt{wide\_mixer}, and \texttt{silu\_base}) or sampled from a free-form architecture space. The free-form search varies the latent widths and depths of the constituent networks, the number of random Fourier feature frequencies $n_f$, the Fourier scale $\sigma_f$, and the activation function. The optimization search additionally considers the initial learning rate $\eta$, weight decay, terminal learning-rate fraction under cosine decay, and gradient-loss weight $w_{\mathrm{grad}}$ (Table~\ref{tab:hpo_search_space}).
We generate the Stage~I configurations using structured random sampling. We first pair priority optimization schedules with the predefined architecture configurations to ensure representative model families are evaluated under the selected training conditions. The remaining trials are sampled from the complete search space without duplicating identical configurations. This stage intentionally emphasizes architectural diversity, allowing us to identify promising model families before committing the greater computational resources required for full-dataset training.

For each model, we transfer the five Stage~I configurations with the lowest validation losses to Stage~II. Stage~II evaluates 24 configurations per model using an architecture pool restricted to the predefined and free-form configurations represented among these five Stage~I solutions
(\texttt{ARCH\_MODE}${}=\texttt{top\_presets}$). We first retrain the selected configurations without modification using all $N=5000$ realizations to evaluate whether their performance transfers across dataset scales. The remaining trials explore local variations in the optimization parameters around the transferred configurations while retaining the restricted architecture pool.

Stage~II disables unconstrained architectural expansion. Computational resources therefore focus on evaluating and refining the selected model families at the target data scale rather than repeating the broad architecture search performed during Stage~I. For each model, we select the configuration with the lowest Stage~II validation loss for subsequent prediction, comparison, and cross-domain generalization analyses.
The two-stage procedure treats HPO as a computationally expensive black-box optimization problem. The reduced cohort serves as a lower-cost representation of the same operator-learning task, preserving the input--output relationship and spatial resolution while reducing the number of training realizations by one order of magnitude. Transferring the highest-performing configurations concentrates the full-data search on architecture families that have already demonstrated favorable validation performance. This strategy enables broad exploration of architectures without requiring every candidate configuration to be trained on the complete high-resolution dataset.

\subsection{Performance metrics}
\label{Sec:performance_metrics}
All reported pressure metrics are computed after converting model outputs back to physical units. 
The normalized root-mean-squared error is
\[
\mathrm{NRMSE}=\frac{\sqrt{N^{-1}\sum_{i=1}^{N}\left(P_{\mathrm{pred},i}-P_{\mathrm{true},i}\right)^2}}{P_{\max}-P_{\min}},
\]
where $P_{\max}$ and $P_{\min}$ denote the maximum and minimum reference pressures over the evaluated domain. The mean absolute percentage error is
\[
\mathrm{MAPE}=\frac{100}{N}\sum_{i=1}^{N}\left|\frac{P_{\mathrm{pred},i}-P_{\mathrm{true},i}}{P_{\mathrm{true},i}+\epsilon}\right|.
\]
The coefficient of determination is
\[
R^2=1-\frac{\sum_{i=1}^{N}\left(P_{\mathrm{pred},i}-P_{\mathrm{true},i}\right)^2}{\sum_{i=1}^{N}\left(P_{\mathrm{true},i}-\overline{P}_{\mathrm{true}}\right)^2}.
\]
We also examine pressure histograms, pointwise error histograms, one-to-one plots, and horizontal and vertical pressure slices to evaluate whether the model preserves geometry-induced discontinuities and distributional structure.

%% file: sections/s4_results.tex
\section{Results and Discussion}
\label{Sec:S4_results}
The results are organized to follow the modeling pipeline. 
We first verify that the PFLOTRAN pressure fields are consistent with pore-scale Stokes solutions after pressure normalization. 
We then examine optimization behavior, compare GeONet with the DeepONet baseline on pressure reconstruction, analyze distributional errors and profile errors, and finally evaluate generalization on unseen geometries.

\subsection{PFLOTRAN pressure fields are consistent with pore-scale Stokes solutions}
\label{Sec:results_stokes_pflotran}
Figure~\ref{fig:network0_pressure_compare} compares pore-scale Stokes pressure fields with the corresponding PFLOTRAN Darcy pressure fields for three representative RCP geometries (networks~1, 5, and~7). The Stokes and PFLOTRAN solutions are evaluated at the same pore-space grid cells, while soil-matrix cells are excluded from the comparison. The PFLOTRAN arrays are transposed to match the spatial orientation of the corresponding Stokes arrays. Following this alignment, the pressure fields are compared directly in absolute pressure (Pa), without pressure-offset subtraction, normalization, or pressure-drop scaling.

\begin{figure}[width=\linewidth,pos=htbp]
    \centering
    \includegraphics[width=\linewidth]{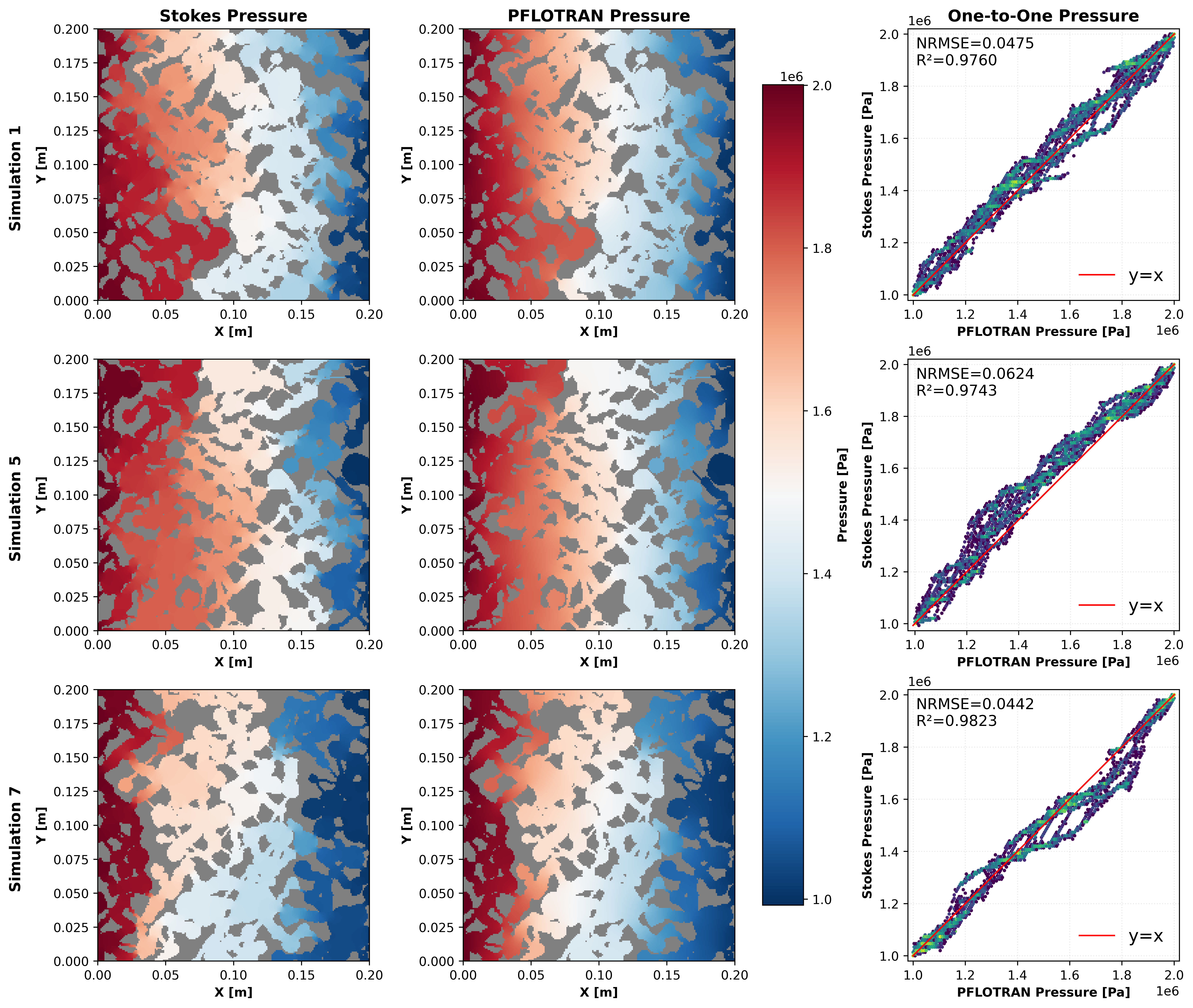}
    \caption{\textbf{Absolute-pressure comparison between the 2-D Stokes and PFLOTRAN simulations.} Rows correspond to networks~1, 5, and~7, respectively. The first column shows the Stokes pressure field, the second column shows the spatially aligned PFLOTRAN pressure field, and the third column presents the corresponding pore-cell one-to-one comparison. Both pressure fields are shown in Pa using a common color scale, without pressure-offset subtraction, normalization, or pressure-drop scaling. Gray regions represent solid cells, and quantitative comparisons are restricted to cells classified as pores in both models. In the one-to-one panels, the red line denotes the ideal $y=x$ relationship, while color indicates the logarithmic sample density.}
    \label{fig:network0_pressure_compare}
\end{figure}

\begin{table}[htbp]
\centering
\caption{Quantitative agreement between the absolute, unscaled pressure fields computed by the 2-D Stokes and PFLOTRAN models for networks~1, 5, and~7. All metrics are evaluated over cells classified as pores in both models. Pearson $r$ measures the linear correlation between the two pressure fields, and $R^2$ denotes the coefficient of determination of the least-squares fit.}
\label{tab:stokes_pflotran_selected}
\begin{tabular}{rrrrrr}
\hline
Network & Common cells & NRMSE & MAE [Pa] & Pearson $r$ & Fit $R^2$ \\
\hline
1 & 28035 & 0.0475 & $3.845 \times 10^{4}$ & 0.9879 & 0.9760 \\
5 & 27112 & 0.0624 & $4.879 \times 10^{4}$ & 0.9870 & 0.9743 \\
7 & 27283 & 0.0442 & $2.895 \times 10^{4}$ & 0.9911 & 0.9823 \\
\hline
\end{tabular}
\end{table}

The Stokes and PFLOTRAN solutions exhibit strong spatial and pointwise agreement across all three representative geometries (Fig.~\ref{fig:network0_pressure_compare}). Pearson correlation coefficients range from 0.9870 to 0.9911, while the coefficients of determination of the least-squares fits range from $R^2=0.9743$ to $0.9823$ (Table~\ref{tab:stokes_pflotran_selected}). The NRMSE ranges from 0.0442 to 0.0624, and the MAE ranges from $2.895\times10^{4}$ to $4.879\times10^{4}$~Pa. Network~7 exhibits the strongest overall agreement, with Pearson $r=0.9911$, $R^2=0.9823$, and NRMSE$=0.0442$. Although localized deviations from the ideal one-to-one relationship remain, the agreement is consistent across geometries with different pore structures. These comparisons support the use of PFLOTRAN pressure fields as reference data for training and evaluating GeONet while recognizing that the Darcy-scale and pore-scale Stokes formulations represent flow at different levels of physical description.

\subsection{Hyperparameter Optimization}
\label{sec:hpo_findings}

The two-stage HPO results for GeONet and the
boundary-condition-focused DeepONet are summarized in
Tables~\ref{tab:hpo_stage_summary}--\ref{tab:hpo_transfer}. All scores denote
the minimum validation loss attained during training
(\texttt{best\_val}), with lower values indicating better performance.
Because the principal objective of this analysis is to establish the
configuration used in the subsequent GeONet experiments, we
examine the GeONet response surface in detail. The corresponding DeepONet
diagnostics are reported in Appendix~\ref{app:deeponet_hpo} to provide a
complete and auditable baseline comparison without interrupting the main
results.

\begin{table}[t]
\centering
\caption{\textbf{Aggregate outcomes of the two-stage hyperparameter
optimization.}
Reported values are the minimum validation losses attained during training
(\texttt{best\_val}). GPU time is the sum of successful trial runtimes
expressed as a single-GPU equivalent.}
\label{tab:hpo_stage_summary}
\begin{tabular}{@{}llrrrrr@{}}
\toprule
Stage & Model & Successful trials & Best $\downarrow$ & Median & Mean & GPU-h \\
\midrule
I ($N=500$)
  & GeONet      & $100$ & $0.02267$ & $0.02902$ & $0.02951$ & $30.0$ \\
  & DeepONet & $100$ & $0.61911$ & $0.62196$ & $0.62214$ & $18.0$ \\
\midrule
II ($N=5000$)
  & GeONet      & $24$  & $0.01944$ & $0.02003$ & $0.02133$ & $87.5$ \\
  & DeepONet & $24$  & $0.56718$ & $0.60091$ & $0.59786$ & $63.1$ \\
\bottomrule
\end{tabular}
\end{table}

During Stage~I, all 100 trials for each model completed successfully. GeONet
attained a minimum validation loss of $0.02267$ with trial~028 using the
\texttt{deep\_trunk} preset. Under the same data partitioning, DeepONet
attained a minimum of $0.61911$ with free-form trial~040, which used branch
and trunk widths of 256 and 384, respectively, six trunk layers, and
effectively disabled random Fourier features ($n_f=0$, $\sigma_f=1$)
(Tables~\ref{tab:hpo_stage_summary} and~\ref{tab:hpo_top5_stage1}).
The median losses of $0.02902$ for GeONet and $0.62196$ for DeepONet
further show that the separation extended across the sampled configuration
space and was not attributable to a single favorable trial.

\begin{table}
\centering
\caption{\textbf{Five highest-performing Stage~I configurations transferred
to Stage~II.}
The configurations are ranked by minimum validation loss at $N=500$.
GeONet and DeepONet use separate sets of random seeds for model initialization. Here, $B$, $T$, and $M$
denote branch, trunk, and mixer latent dimensions, respectively; WD denotes
weight decay.}
\label{tab:hpo_top5_stage1}
\setlength{\tabcolsep}{3.5pt}
\small
\begin{tabular}{@{}clllllrr@{}}
\toprule
Rank & Model & Trial & Preset & Key architecture & $\eta$ / WD & Best val & Epoch \\
\midrule
1 & GeONet & $028$ & \texttt{deep\_trunk}
  & $B=256$, $T=320$ (6 layers), $n_f=48$, $\sigma_f=20$
  & $10^{-4}$ / $0$ & $0.02267$ & $85$ \\
2 & GeONet & $093$ & \texttt{deep\_trunk}
  & Same as rank~1
  & $10^{-4}$ / $10^{-6}$ & $0.02272$ & $82$ \\
3 & GeONet & $056$ & \texttt{small}
  & $B=T=128$ (3 layers), $n_f=16$, $\sigma_f=10$
  & $10^{-4}$ / $10^{-5}$ & $0.02276$ & $98$ \\
4 & GeONet & $097$ & \texttt{deep\_trunk}
  & Same as rank~1
  & $3\times10^{-4}$ / $10^{-6}$ & $0.02285$ & $86$ \\
5 & GeONet & $045$ & \texttt{free}
  & $B=384$, $T=128$, $M=320$, $n_f=48$, $\sigma_f=10$
  & $3\times10^{-4}$ / $10^{-6}$ & $0.02289$ & $56$ \\
\midrule
1 & DeepONet & $040$ & \texttt{free}
  & $B=256$, $T=384$ (6 layers), $n_f=0$, $\sigma_f=1$
  & $3\times10^{-4}$ / $0$ & $0.61911$ & $93$ \\
2 & DeepONet & $052$ & \texttt{free}
  & $B=128$, $T=384$ (5 layers), $n_f=4$, $\sigma_f=1$
  & $3\times10^{-4}$ / $10^{-5}$ & $0.62092$ & $94$ \\
3 & DeepONet & $020$ & \texttt{large}
  & $B=T=384$ (5 layers), $n_f=0$
  & $3\times10^{-4}$ / $10^{-6}$ & $0.62119$ & $84$ \\
4 & DeepONet & $042$ & \texttt{free}
  & $B=128$, $T=384$ (6 layers), $n_f=8$, $\sigma_f=1$
  & $10^{-5}$ / $0$ & $0.62128$ & $59$ \\
5 & DeepONet & $057$ & \texttt{large}
  & $B=T=384$ (5 layers), $n_f=0$
  & $10^{-5}$ / $0$ & $0.62133$ & $100$ \\
\bottomrule
\end{tabular}
\end{table}

The parameter-wise Stage~I results for GeONet reveal that optimization choices
had a stronger influence than most individual width or depth settings
(Fig.~\ref{fig:geonet_param_500}). The lowest-loss trials were concentrated at
learning rates of $10^{-4}$ and $3\times10^{-4}$ and at
$w_{\mathrm{grad}}=0$. By contrast, setting $w_{\mathrm{grad}}=0.01$ or
$0.03$ shifted the trial distribution toward higher validation losses.
Low-loss solutions occurred across several branch, leaf, trunk, and mixer
dimensions, indicating that no single latent width was solely responsible for
the Stage~I optimum. Six trunk layers and 48 random Fourier feature
frequencies were nevertheless associated with several of the best trials,
consistent with the strong performance of the \texttt{deep\_trunk}
configuration.

\begin{figure}[width=\linewidth,pos=htbp]
    \centering
    \includegraphics[width=0.92\textwidth]{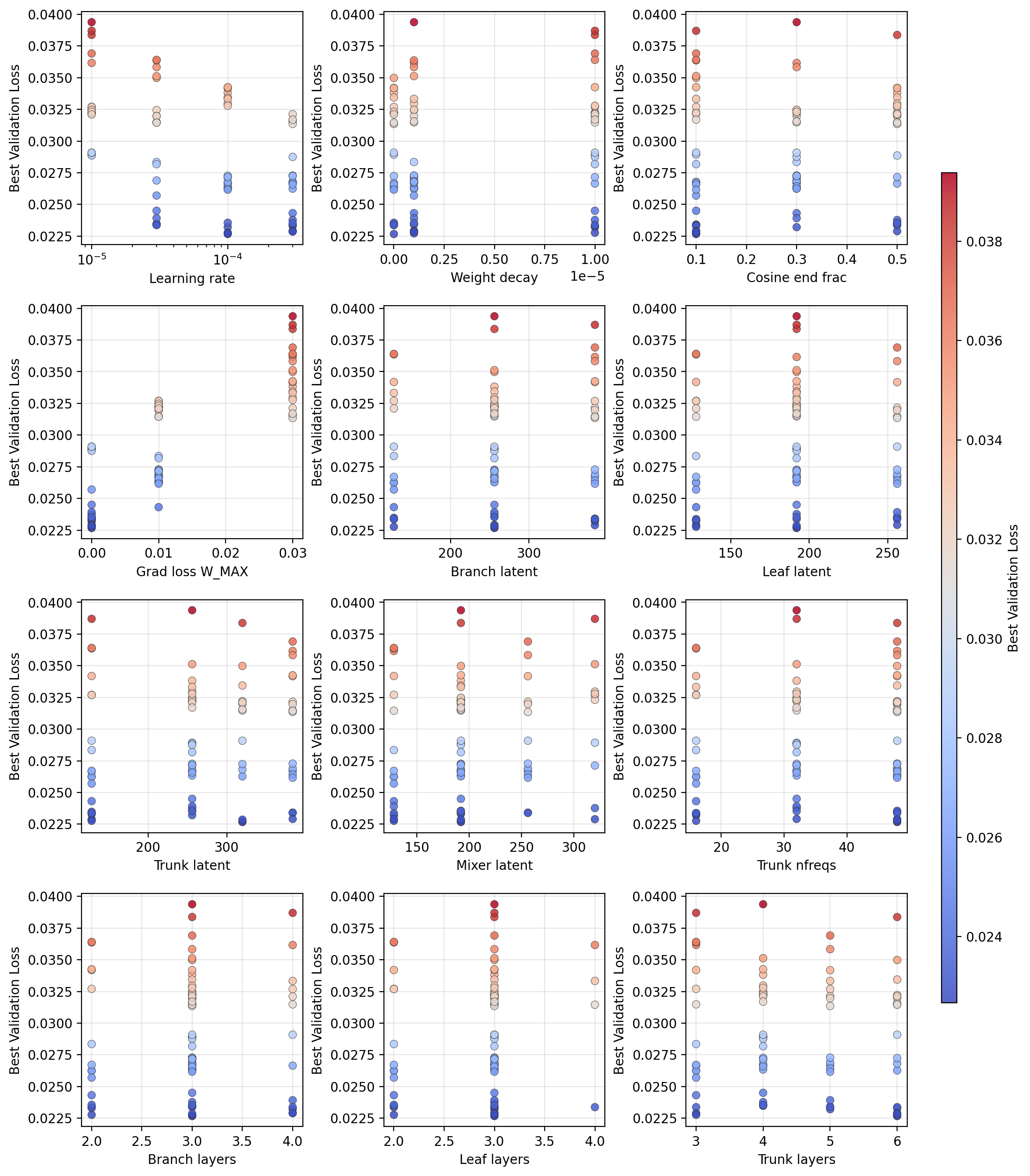}
    \caption{\textbf{Parameter-wise variation in Stage~I GeONet validation
    performance ($N=500$).}
    Best validation loss is shown against the sampled optimization and
    architecture hyperparameters across the Stage~I trials; marker color also
    denotes validation loss. The clearest separation occurs for the learning
    rate and gradient-loss weight, with the lowest losses concentrated at
    $\eta=10^{-4}$--$3\times10^{-4}$ and
    $w_{\mathrm{grad}}=0$. Low-loss trials span multiple latent dimensions
    and network depths, demonstrating that favorable performance is not
    confined to a single model size.}
    \label{fig:geonet_param_500}
\end{figure}

The rank-based analysis provides a complementary global view of these
relationships (Fig.~\ref{fig:geonet_sensitivity_500}a). The gradient-loss
weight was the dominant Stage~I variable
($\rho=+0.74$), followed by learning rate ($\rho=-0.48$); all other
individual correlations had $|\rho|\leq0.11$. The positive correlation for
$w_{\mathrm{grad}}$ indicates increasing validation loss as the gradient
penalty increases, whereas the negative correlation for learning rate reflects
the improved performance at the upper end of the sampled range. Architecture
comparisons (Fig.~\ref{fig:geonet_sensitivity_500}b) show overlapping
distributions and substantial within-family variability. The
\texttt{deep\_trunk} preset produced the overall optimum, while the compact
\texttt{small} model also reached the low-loss regime. The trial sequence
(Fig.~\ref{fig:geonet_sensitivity_500}c) shows that competitive solutions
were recovered repeatedly rather than in one isolated search event. The
learning-rate--weight-decay projection
(Fig.~\ref{fig:geonet_sensitivity_500}d) further confirms that the best trials
occupy several weight-decay settings once an effective learning rate is used.

\begin{figure}[width=\linewidth,pos=htbp]
    \centering
    \includegraphics[width=0.99\textwidth]{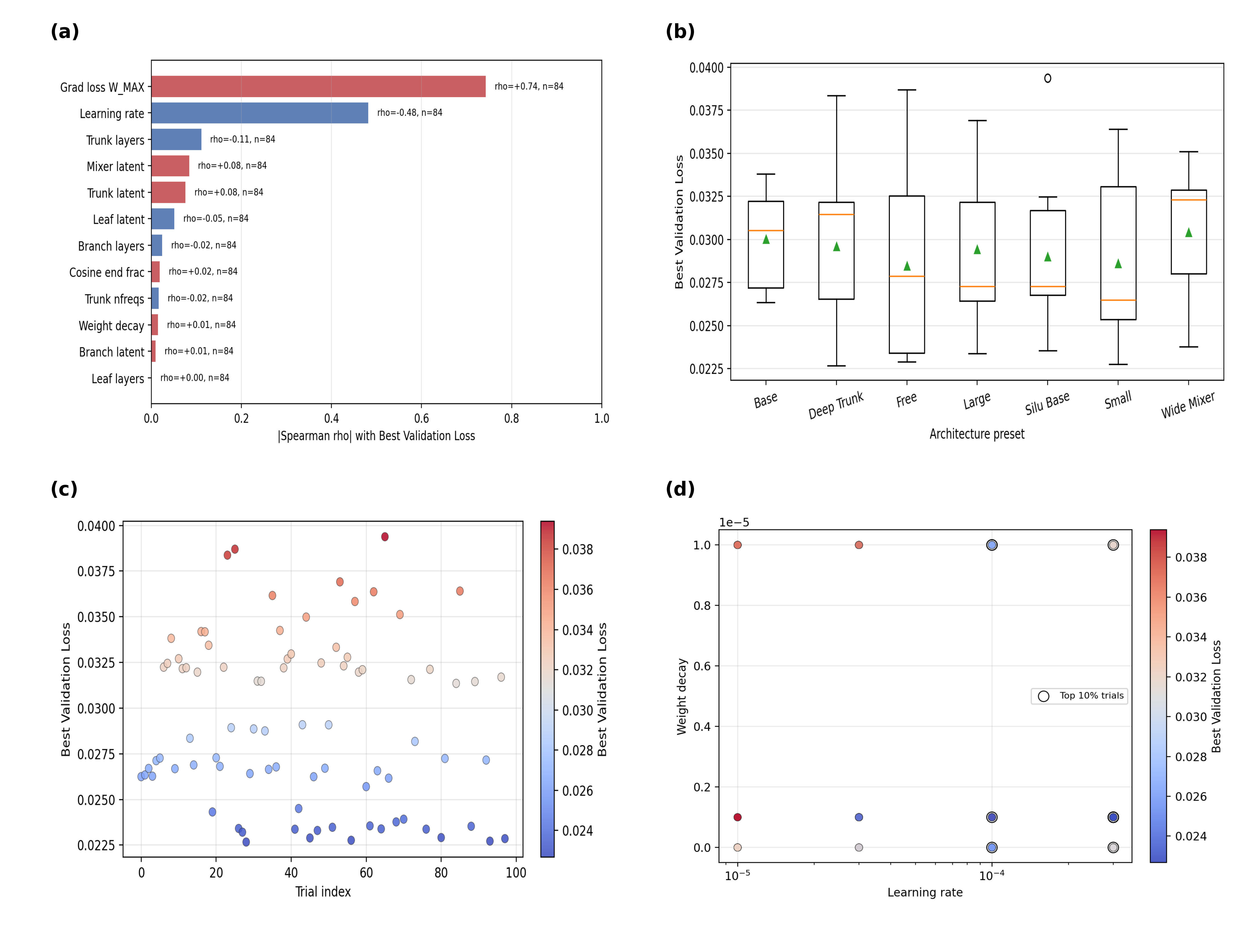}
    \caption{\textbf{Stage~I sensitivity and architecture analysis for
    GeONet ($N=500$).}
    (A) Absolute Spearman rank correlations between each sampled
    hyperparameter and best validation loss; bar color indicates the sign of
    the reported correlation. (B) Validation-loss distributions grouped by
    architecture preset; orange lines and green triangles denote medians and
    means, respectively. (C) Best validation loss across the 100-trial search.
    (D) Joint learning-rate--weight-decay projection, with outlined markers
    identifying the top 10\% of trials. Gradient-loss weight and learning rate
    dominate the Stage~I response, whereas the overlapping architecture
    distributions show that training configuration and architecture interact
    in determining performance.}
    \label{fig:geonet_sensitivity_500}
\end{figure}

Stage~II transferred and refined the five strongest Stage~I configurations
using $N=5000$ realizations. The minimum GeONet validation loss decreased from
$0.02267$ to $0.01944$, a relative reduction of $14.2\%$. The selected model,
trial~011, descended from the Stage~I free-form configuration with branch,
trunk, and mixer widths of 384, 128, and 320, respectively. The transferred
\texttt{deep\_trunk} configuration remained close, reaching $0.01951$.
Together with the Stage~II median of $0.02003$, these results demonstrate that
the reduced-data search identified a capacity regime that transferred
effectively to the full set.

\begin{table*}[t]
\centering
\caption{\textbf{HPO-selected configurations for the reduced- and full-data
stages.}
Stage~I champions seed the transferred refinement, and Stage~II champions are
used in the subsequent full-data experiments. Latent dimensions are reported
as $B/L/T/M$ for the branch, leaf, trunk, and mixer networks; DeepONet does
not contain a leaf network. WD denotes weight decay.}
\label{tab:hpo_best_configs}
\setlength{\tabcolsep}{3.5pt}
\small
\begin{tabular}{@{}llllllll@{}}
\toprule
Stage & Model & Trial & Preset & Latents $B/L/T/M$
 & $\eta$ / WD / $n_f$ / $\sigma_f$ & Best val & Epoch \\
\midrule
I & GeONet & $028$ & \texttt{deep\_trunk}
  & $256/192/320/192$
  & $10^{-4}$ / $0$ / $48$ / $20$
  & $0.02267$ & $85$ \\
I & DeepONet & $040$ & \texttt{free}
  & $256$/--/$384/256$
  & $3\times10^{-4}$ / $0$ / $0$ / $1$
  & $0.61911$ & $93$ \\
\midrule
II & GeONet & $011$ & \texttt{free}\textsuperscript{\textdagger}
  & $384/128/128/320$
  & $10^{-4}$ / $0$ / $48$ / $10$
  & $0.01944$ & $89$ \\
II & DeepONet & $001$ & \texttt{free}\textsuperscript{\textdaggerdbl}
  & $128$/--/$384/256$
  & $3\times10^{-4}$ / $10^{-5}$ / $4$ / $1$
  & $0.56718$ & $97$ \\
\bottomrule
\end{tabular}
\vspace{0.35em}
\begin{minipage}{0.98\textwidth}
\footnotesize
\textsuperscript{\textdagger}Inherited from Stage~I GeONet free-form
trial~045 and refined at $N=5000$.\quad
\textsuperscript{\textdaggerdbl}Inherited from the neighborhood of
Stage~I DeepONet free-form trial~052 and refined at $N=5000$ with the
boundary-condition loss retained. All selected runs use GELU activations.
\end{minipage}
\end{table*}

The Stage~II parameter projections show a compact low-loss band near
$0.0195$--$0.0203$ and a smaller group of clearly inferior trials
(Fig.~\ref{fig:geonet_param_5000}). The largest losses occur at the lowest
learning rate, $10^{-5}$, and when $w_{\mathrm{grad}}=0.01$; the best trials
again use $w_{\mathrm{grad}}=0$ and learning rates of
$10^{-4}$--$3\times10^{-4}$. Several transferred widths and depths reach the
same low-loss band, showing that full-data performance is robust within the
restricted architecture pool rather than dependent on one narrowly tuned
dimension.

\begin{figure}[width=\linewidth,pos=htbp]
    \centering
    \includegraphics[width=0.92\textwidth]{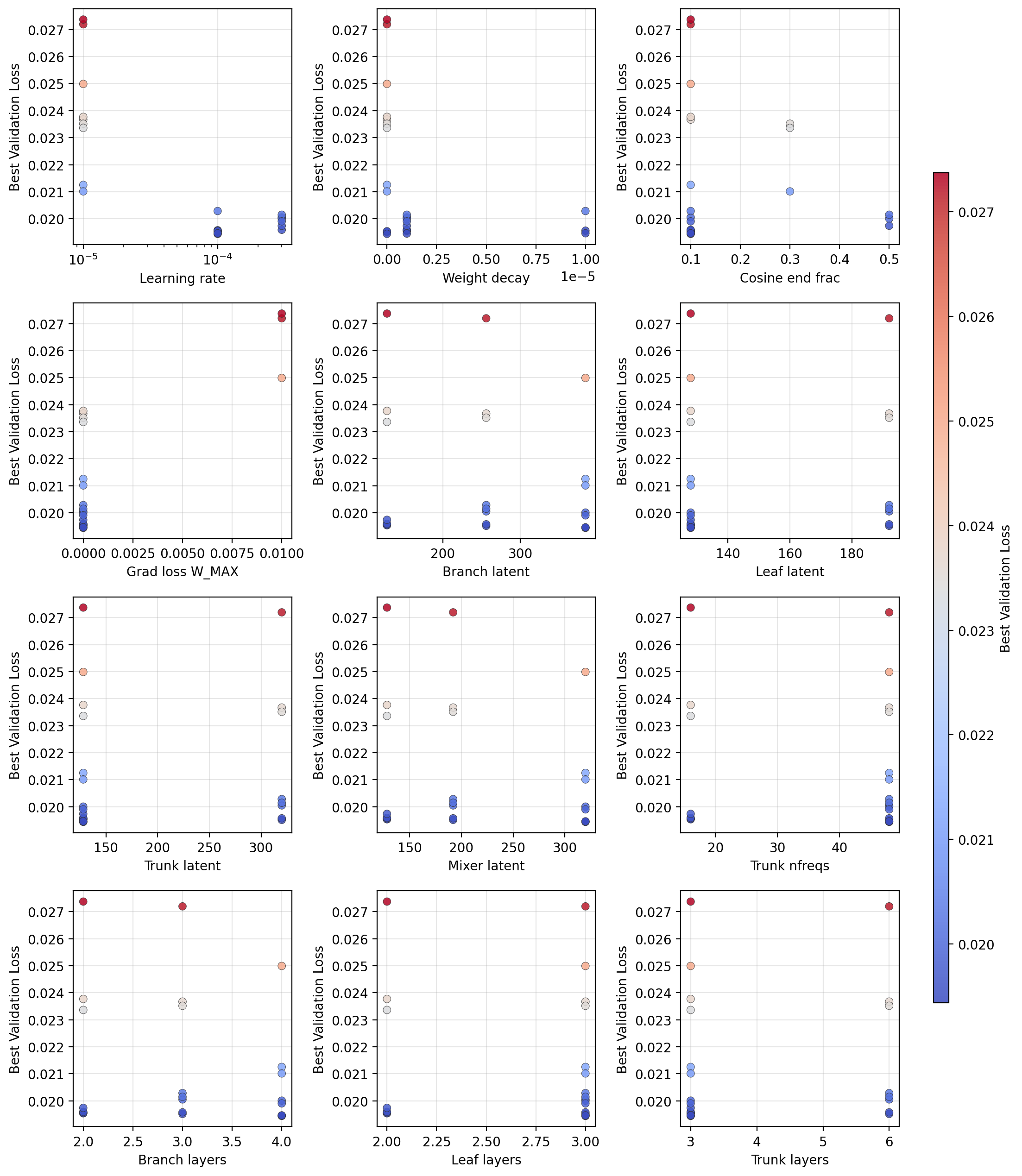}
    \caption{\textbf{Parameter-wise variation in Stage~II GeONet validation
    performance ($N=5000$).}
    Best validation loss is plotted against the optimization and architecture
    settings evaluated during transferred refinement; marker color denotes the
    same loss. Most configurations form a low-loss band near
    $0.0195$--$0.0203$. The principal degradations occur for
    $\eta=10^{-5}$ and $w_{\mathrm{grad}}=0.01$, while multiple latent
    widths and depths remain competitive within the transferred architecture
    pool.}
    \label{fig:geonet_param_5000}
\end{figure}

The Stage~II sensitivity summary reinforces this interpretation
(Fig.~\ref{fig:geonet_sensitivity_5000}). Gradient-loss weight and learning
rate remain the leading variables, with correlations of $+0.66$ and $-0.63$,
respectively (Fig.~\ref{fig:geonet_sensitivity_5000}a). Correlations of
$+0.56$ for leaf width, trunk width, and trunk depth should be interpreted
within the restricted, transferred pool rather than as independent global
effects, because only 24 locally refined trials were evaluated and several
settings co-vary by construction. Among the transferred families, the
free-form configurations exhibit the lowest and tightest distribution
(Fig.~\ref{fig:geonet_sensitivity_5000}b). The trial trajectory
(Fig.~\ref{fig:geonet_sensitivity_5000}c) contains a broad cluster of
near-optimal solutions, and the two-parameter projection
(Fig.~\ref{fig:geonet_sensitivity_5000}d) places the leading trials within
the same stable optimizer region identified during Stage~I.

\begin{figure}[width=\linewidth,pos=htbp]
    \centering
    \includegraphics[width=0.99\textwidth]{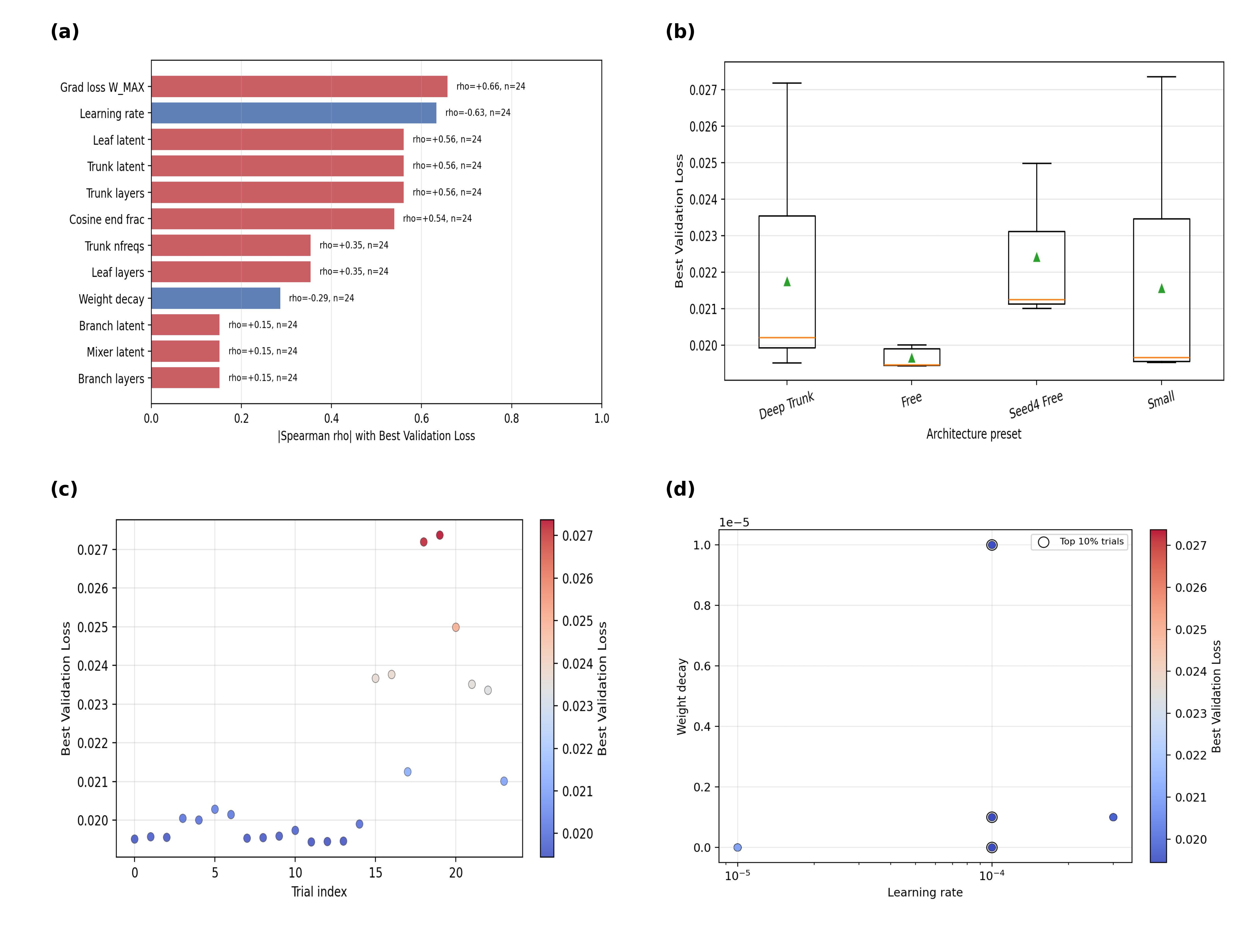}
    \caption{\textbf{Stage~II sensitivity and architecture analysis for
    GeONet ($N=5000$).}
    (A) Absolute Spearman rank correlations between the locally varied
    hyperparameters and best validation loss; bar color indicates correlation
    sign. (B) Validation-loss distributions for the transferred architecture
    families, with medians and means shown by orange lines and green triangles,
    respectively. (C) Best validation loss across the 24 refinement trials.
    (D) Learning-rate--weight-decay projection with the top 10\% of trials
    outlined. The results identify a broad near-optimal regime and confirm
    that learning rate and gradient-loss weight remain the dominant
    optimization controls after transfer. Correlations are descriptive of the
    restricted Stage~II pool and are not interpreted as independent causal
    effects.}
    \label{fig:geonet_sensitivity_5000}
\end{figure}

DeepONet also benefited from transfer to the full set. Its best
validation loss decreased from $0.61911$ in Stage~I to $0.56718$ in Stage~II,
an $8.4\%$ relative reduction. The Stage~II champion, trial~001, retained a
wide trunk (latent dimension 384), mild random Fourier features
($n_f=4$, $\sigma_f=1$), and a learning rate of $3\times10^{-4}$.
Thus, unlike the classical high-RFF DeepONet baseline considered previously,
the BC-focused family continued to benefit from additional realizations,
although its absolute validation loss remained substantially higher than that
of GeONet. The complete diagnostics are provided in
Appendix~\ref{app:deeponet_hpo}. Unless otherwise stated, subsequent
full-data experiments use GeONet Stage~II trial~011 and DeepONet Stage~II trial~001. The computational cost further
supports the two-stage design: 100 Stage~I trials required approximately
30 GPU-hours for GeONet and 18 GPU-hours for DeepONet, whereas 24 Stage~II
trials required approximately 88 and 63 GPU-hours, respectively.

\begin{table}[t]
\centering
\caption{\textbf{Transfer of the selected model families from Stage~I
($N=500$) to Stage~II ($N=5000$).}
The relative change is calculated from the best validation loss, with negative
values indicating improvement. Both models benefit from full-data refinement,
while GeONet retains a substantial absolute advantage.}
\label{tab:hpo_transfer}
\begin{tabular}{@{}lrrr@{}}
\toprule
Model & Best val ($N=500$) & Best val ($N=5000$) & Relative change \\
\midrule
GeONet      & $0.02267$ & $0.01944$ & $-14.2\%$ \\
DeepONet & $0.61911$ & $0.56718$ & $-8.4\%$ \\
\bottomrule
\end{tabular}
\end{table}

\subsection{Training Dynamics of the Best Models}
\label{Sec:results_training}

Figure~\ref{fig:training_curves} compares the training and validation loss trajectories of the HPO-selected GeONet and DeepONet models on the full-dataset. The curves are displayed over the same 50-epoch window and on a common logarithmic loss scale, allowing their convergence behavior and loss magnitudes to be compared directly. These models correspond to the full-data configurations selected through the two-stage HPO procedure summarized in Tables~\ref{tab:hpo_stage_summary} and~\ref{tab:hpo_best_configs}.
GeONet exhibits a rapid reduction in training loss during the first few epochs, followed by a slower refinement regime in which the loss continues to decrease toward approximately $2\times10^{-2}$ (Fig.~\ref{fig:training_curves}a). Its validation trajectory follows a similar pattern: after the initial decrease, the loss remains close to the training curve and fluctuates within a narrow low-loss range (Fig.~\ref{fig:training_curves}b). This behavior is consistent with the Stage~II HPO results, for which the selected GeONet configuration attained a minimum validation loss of $0.01944$ (Table~\ref{tab:hpo_best_configs}). The close proximity of the training and validation trajectories after the initial transient indicates stable optimization without a pronounced divergence between fitting and held-out performance over the displayed training interval.

DeepONet exhibits a markedly different optimization regime. Its training and validation objectives decrease only modestly after the initial epochs and subsequently remain near a substantially higher plateau of approximately $0.55$--$0.60$ over the displayed interval (Fig.~\ref{fig:training_curves}). This behavior is consistent with the Stage~II HPO results, where the selected DeepONet configuration attained a minimum validation loss of $0.56718$ (Table~\ref{tab:hpo_best_configs}). The Stage~I and Stage~II searches both showed this separation: increasing the training cohort from $N=500$ to $N=5000$ reduced the best validation loss from $0.02267$ to $0.01944$ for GeONet and from $0.61911$ to $0.56718$ for DeepONet (Table~\ref{tab:hpo_transfer}). Thus, the different loss regimes observed in Fig.~\ref{fig:training_curves} are consistent with the broader HPO results rather than being specific to a single training trajectory.
The absolute loss magnitudes should nevertheless be interpreted in the context of the model-specific training objectives. The selected GeONet configuration uses $w_{\mathrm{grad}}=0$, such that its objective reduces to the pressure MSE defined in Eq.~\ref{eq:pressure_mse}. DeepONet, in contrast, is trained with the pressure MSE together with the Dirichlet boundary-condition penalty defined in Eq.~\ref{eq:deeponet_loss}. Consequently, Fig.~\ref{fig:training_curves} is most informative as a comparison of optimization behavior and the objective values attained by the HPO-selected configurations, rather than as a direct decomposition of pressure-prediction error alone. Prediction accuracy in physical units is evaluated separately using the test-set metrics and spatial error analyses presented in the subsequent sections.

\begin{figure}[width=\linewidth,pos=htbp]
    \centering
    \includegraphics[width=0.99\textwidth]{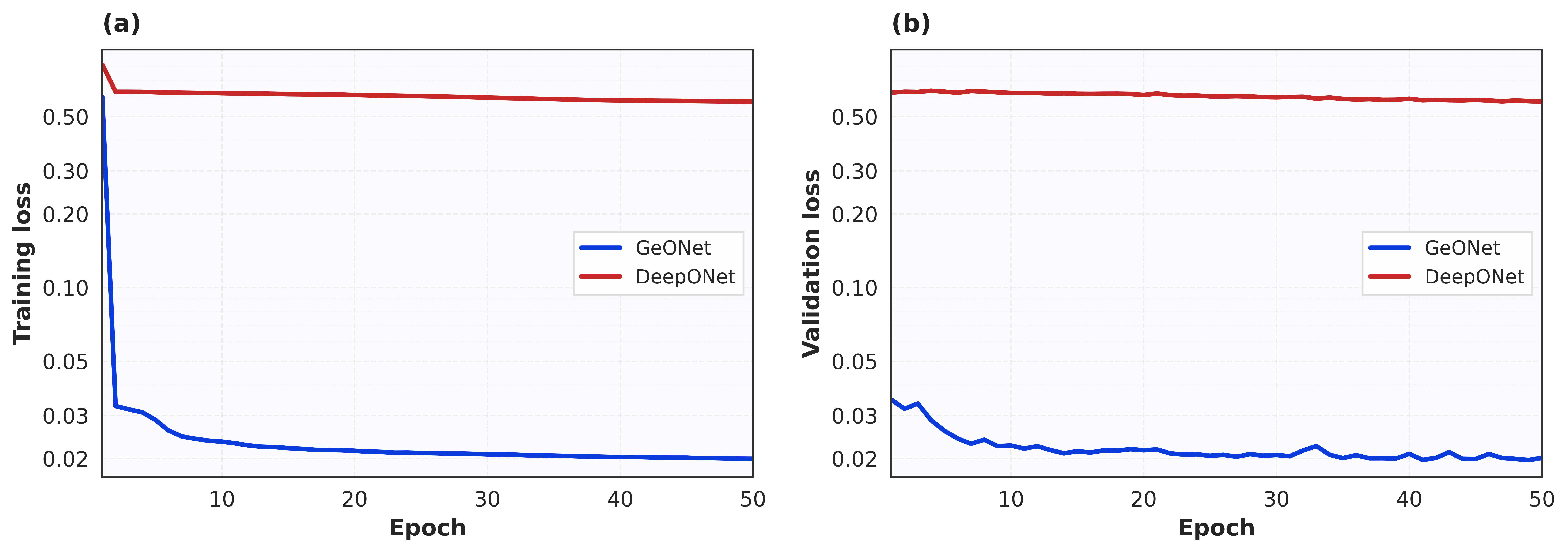}
    \caption{\textbf{Training dynamics of the HPO-selected GeONet and DeepONet models.} (a) Training loss and (b) validation loss over the first 50 epochs of full-data training. Both panels use the same logarithmic loss scale to facilitate comparison of convergence behavior and objective magnitude. GeONet undergoes a rapid initial decrease followed by gradual refinement in a low-loss regime, whereas DeepONet reaches a substantially higher loss plateau. Because the selected models use different training objectives, the absolute loss values should be interpreted as model-specific optimization objectives rather than as directly equivalent pressure-error metrics.}
    \label{fig:training_curves}
\end{figure}

\subsection{GeONet preserves geometry-induced pressure structure}
\label{Sec:results_pressure_prediction}
\begin{figure}[width=\linewidth,pos=htbp]
    \centering
    \includegraphics[width=0.99\textwidth]
    {figures/1b_three_sim_spatial_comparison_masked.png}
    \caption{\textbf{Pressure-field reconstruction for three unseen heterogeneous RCP domains.}
    PFLOTRAN reference solutions (left column) are compared with DeepONet (center column) and GeONet (right column) predictions for test simulations 91, 95, and 99, shown from top to bottom. Gray regions denote the solid phase and are excluded from the pressure comparison. All panels use the same pressure scale, ranging from 1.0 to 2.0~MPa. DeepONet reproduces the overall inlet-to-outlet pressure decline but underestimates the spatial extent and magnitude of the high-pressure region and introduces localized artifacts near the inlet boundary. GeONet more accurately reconstructs the pressure magnitude, irregular contours, and geometry-dependent spatial variations observed in the PFLOTRAN solutions across all three test domains.}
    \label{fig:pressure_comparison}
\end{figure}

Figure~\ref{fig:pressure_comparison} compares the predicted pressure fields with the corresponding PFLOTRAN solutions for three unseen RCP geometries. Each realization exhibits a general pressure decline from the high-pressure boundary on the left to the low-pressure boundary on the right. However, the detailed shape of this decline varies among the domains because the connected pore space, solid boundaries, and local constrictions redistribute the pressure gradient.

DeepONet captures the dominant left-to-right pressure trend but does not reproduce the full spatial extent of the high-pressure region. Across the three simulations, its interior predictions are shifted toward lower pressure values, producing broader green and cyan regions where the PFLOTRAN solutions retain red, orange, or yellow pressure levels. The DeepONet fields also contain localized high-pressure artifacts along portions of the inlet boundary, most visibly in simulations 95 and 99. Although the predicted contours respond partly to the domain mask, their pressure magnitudes and spatial transitions differ appreciably from the reference solutions.
In contrast, GeONet closely reproduces both the magnitude and spatial organization of the PFLOTRAN pressure fields. The high-pressure regions near the inlet extend through the connected pore space with shapes comparable to the reference solutions, while the intermediate-pressure contours follow the irregular solid--void boundaries across the domain. GeONet also preserves localized variations associated with constricted and tortuous flow pathways without introducing the pronounced inlet artifacts observed in the DeepONet predictions. The consistency of this behavior across three geometrically distinct test domains demonstrates that explicit geometry encoding improves the reconstruction of pressure fields governed by pore-network structure.

\subsection{Distribution and profile analyses show reduced bias}
\label{Sec:results_distribution_error}
Figure~\ref{fig:pressure_distribution} compares histograms from the reference PFLOTRAN simulations, the GeONet predictions, and the DeepONet predictions.
The GeONet histograms preserve the broad pressure range from approximately $1.0\times10^6$ to $2.0\times10^6$~Pa, including the low- and high-pressure tails.
The DeepONet histograms are shifted toward lower pressures and contain substantially fewer values above approximately $1.65\times10^6$~Pa, indicating that the baseline model underrepresents the high-pressure region and compresses the solution variability.

\begin{figure}[width=\linewidth,pos=htbp]
    \centering
    \includegraphics[width=0.99\textwidth]
    {figures/3b_three_sim_pressure_distributions_masked.png}
    \caption{\textbf{Pressure-value distributions for PFLOTRAN, GeONet, and DeepONet.}
    The top, middle, and bottom rows correspond to simulations 91, 95, and 99, respectively, while the columns show the pore-space pressure distributions for the PFLOTRAN reference solution, GeONet prediction, and DeepONet prediction. Soil-matrix cells are masked and excluded from all distributions.
    Across all three simulations, GeONet preserves the broad reference pressure range and more closely reproduces the heterogeneous distribution structure.
    In contrast, DeepONet shifts the distributions toward lower pressures and substantially underrepresents the high-pressure portion of the PFLOTRAN solutions.}
    \label{fig:pressure_distribution}
\end{figure}

The top, middle, and bottom rows correspond to simulations 91, 95, and 99, respectively, while the columns show the pore-space pressure distributions for the PFLOTRAN reference solution, GeONet prediction, and DeepONet prediction. Solid cells are masked and excluded from all distributions.

\begin{table}[t]
    \centering
    \caption{Pressure-distribution summary statistics calculated using values within the fluid region. Mean pressure and standard deviation are normalized by the corresponding mean reference pressure, such that the normalized reference mean equals 1.}
    \label{tab:fig3c_distribution_stats}
    \begin{tabular}{lcccccc}
        \toprule
        & \multicolumn{2}{c}{True}
        & \multicolumn{2}{c}{GeONet}
        & \multicolumn{2}{c}{DeepONet} \\
        \cmidrule(lr){2-3}
        \cmidrule(lr){4-5}
        \cmidrule(lr){6-7}
        Simulation & Mean & Std. dev. & Mean & Std. dev. & Mean & Std. dev. \\
        \midrule
        91 & 1.0000 & 0.1887 & 1.0101 & 0.1912 & 0.9155 & 0.1376 \\
        95 & 1.0000 & 0.1834 & 0.9992 & 0.1923 & 0.9032 & 0.1337 \\
        99 & 1.0000 & 0.2155 & 1.0232 & 0.2236 & 0.9089 & 0.1417 \\
        \bottomrule
    \end{tabular}
\end{table}

Figure~\ref{fig:error_analysis} further compares the prediction-error distributions and one-dimensional pressure profiles.
Across all three simulations, the GeONet error distributions are more tightly concentrated near zero, whereas the DeepONet errors are broader and predominantly shifted toward negative values.
This shift confirms that DeepONet systematically underestimates the pressure magnitude.
Along the horizontal midline at $Y=0.10$~m, both models reproduce the overall inlet-to-outlet pressure decline.
However, the DeepONet profiles remain below the GeONet profiles across most connected fluid segments, consistent with the negative bias observed in the error distributions.
Along the vertical midline at $X=0.10$~m, GeONet retains the higher pressure levels associated with the reference fields, whereas DeepONet substantially underestimates the pressure across all three simulations.
Gaps in the profiles correspond to intersections of the midlines with masked solid regions.

\begin{figure}[width=\linewidth,pos=htbp]
    \centering
    \includegraphics[width=0.99\textwidth]
    {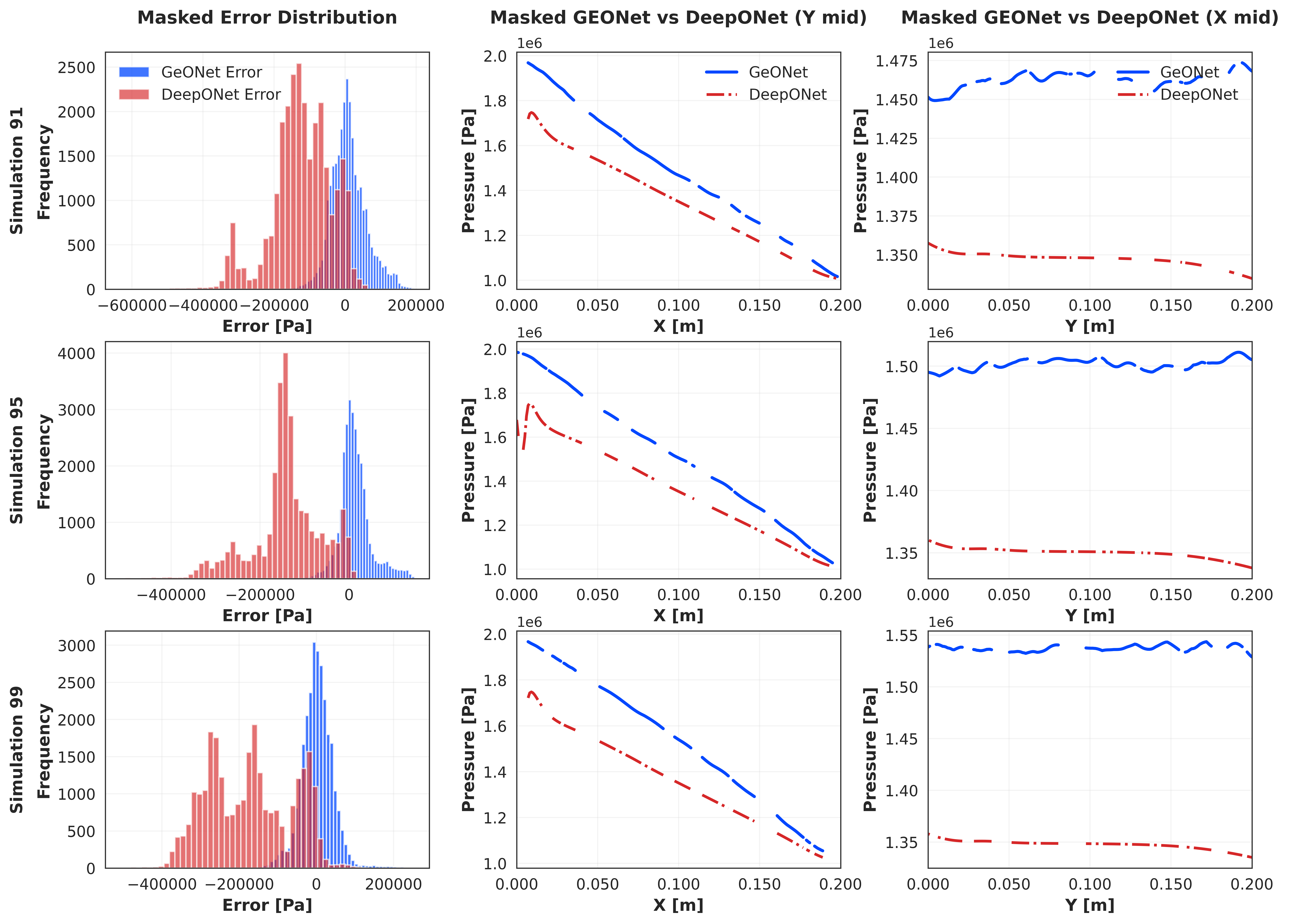}
    \caption{\textbf{Error distributions and pressure-profile comparisons.}
    The top, middle, and bottom rows correspond to simulations 91, 95, and 99, respectively.
    The left column shows histograms of prediction errors relative to the PFLOTRAN reference solutions.
    GeONet errors remain more tightly concentrated near zero, while DeepONet errors span a broader range and exhibit a predominantly negative bias.
    The middle column compares GeONet and DeepONet pressure profiles along the horizontal domain midline, $Y=0.10$~m, and the right column presents the corresponding profiles along the vertical midline, $X=0.10$~m.
    Gaps in the profiles denote masked solid regions.
    Across all three test simulations, DeepONet systematically underestimates pressure relative to GeONet, particularly along the vertical midline.}
    \label{fig:error_analysis}
\end{figure}

Together, the distribution and profile results show that GeONet does more than reduce the average error.
It preserves the physically important portions of the solution that are most strongly affected by geometry, including the broad pressure range, localized variations, and heterogeneous pressure distributions.

\begin{figure}[width=\linewidth,pos=htbp]
    \centering
    \includegraphics[width=0.75\textwidth]
    {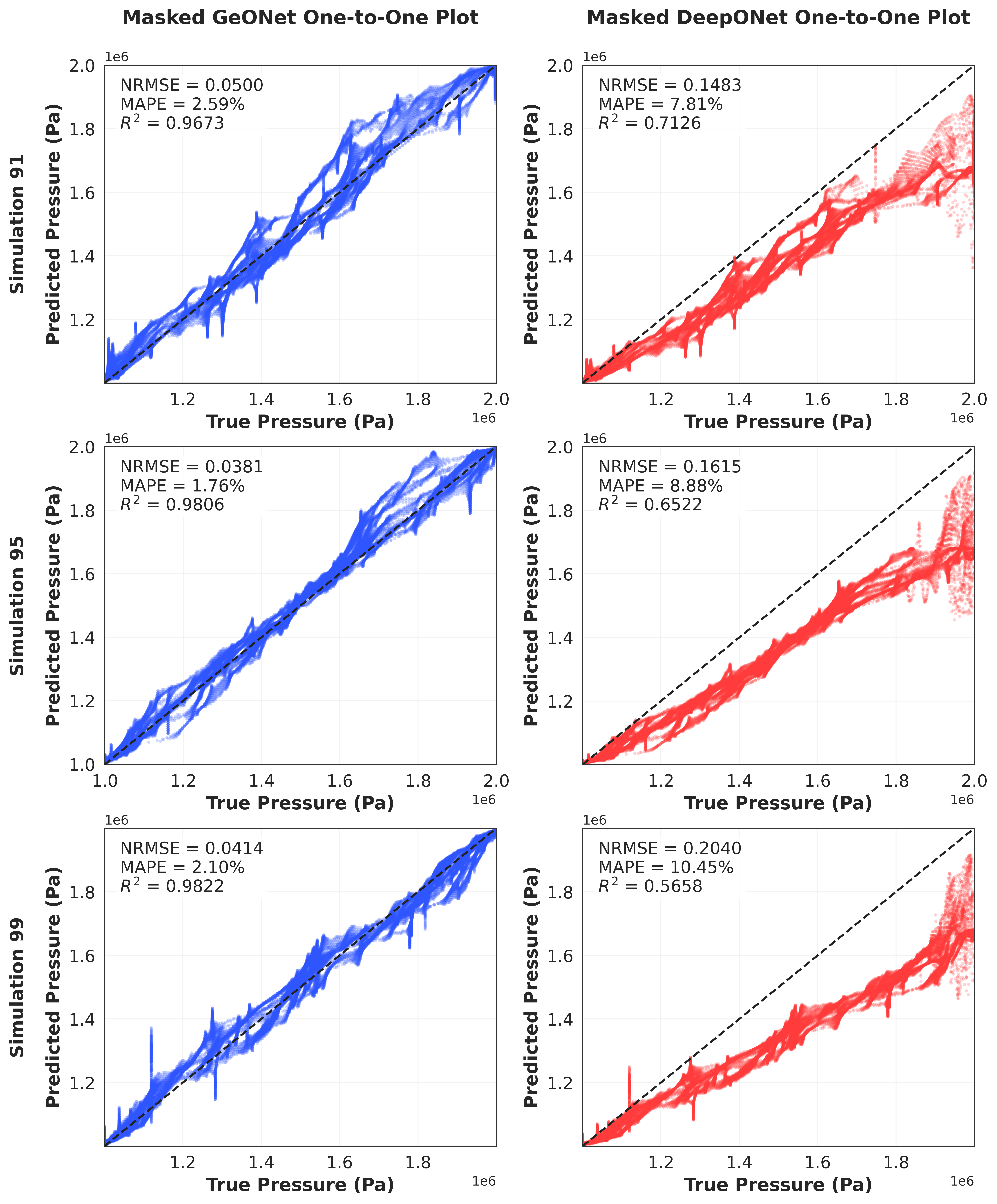}
    \caption{\textbf{One-to-one comparison of predicted and reference pressures for GeONet and DeepONet.}
    The top, middle, and bottom rows correspond to simulations 91, 95, and 99, respectively, while the left and right columns show GeONet and DeepONet predictions.
    The black dashed line represents perfect agreement between the predicted and PFLOTRAN reference pressures.
    GeONet predictions remain closely distributed around the one-to-one line across all three simulations, resulting in lower NRMSE and MAPE values and higher $R^2$.
    In contrast, DeepONet increasingly underpredicts pressure at higher reference values and exhibits weaker overall agreement with the PFLOTRAN solutions.}
    \label{fig:one-to-one_analysis}
\end{figure}

\begin{table}[t]
    \centering
    \caption{One-to-one agreement metrics for Figure~\ref{fig:one-to-one_analysis}, evaluated within the fluid region. MAPE is reported as a percentage.}
    \label{tab:fig4c_one_to_one_metrics}
    \begin{tabular}{lcccccc}
        \toprule
        & \multicolumn{3}{c}{GeONet}
        & \multicolumn{3}{c}{DeepONet} \\
        \cmidrule(lr){2-4}
        \cmidrule(lr){5-7}
        Simulation & NRMSE & MAPE (\%) & $R^2$
                   & NRMSE & MAPE (\%) & $R^2$ \\
        \midrule
        91 & 0.0500 & 2.59 & 0.9673 & 0.1483 & 7.81  & 0.7126 \\
        95 & 0.0381 & 1.76 & 0.9806 & 0.1615 & 8.88  & 0.6522 \\
        99 & 0.0414 & 2.10 & 0.9822 & 0.2040 & 10.45 & 0.5658 \\
        \bottomrule
    \end{tabular}
\end{table}

The one-to-one comparisons reinforce the distribution and profile analyses.
GeONet remains closely aligned with the reference pressure across the full range, with NRMSE values of 0.0381--0.0500, MAPE values of 1.76--2.59\%, and $R^2$ values of 0.9673--0.9822.
DeepONet increasingly deviates below the identity line as the reference pressure increases, confirming that its largest errors arise in the high-pressure portion of the solution.

\subsection{GeONet generalizes to unseen RCP geometries}
\label{Sec:results_generalization}
A useful geometry-encoded operator must generalize to chip layouts that were not observed during training.
Figure~\ref{fig:cross_domain_1} evaluates GeONet on three higher-index RCP geometries held out from model training and selection.
In all three cases, the predicted pressure fields reproduce the global inlet-to-outlet pressure drop and the local pressure variations induced by the pore structure.

\begin{figure}[width=\linewidth,pos=htbp]
    \centering
    \includegraphics[width=0.99\textwidth]
    {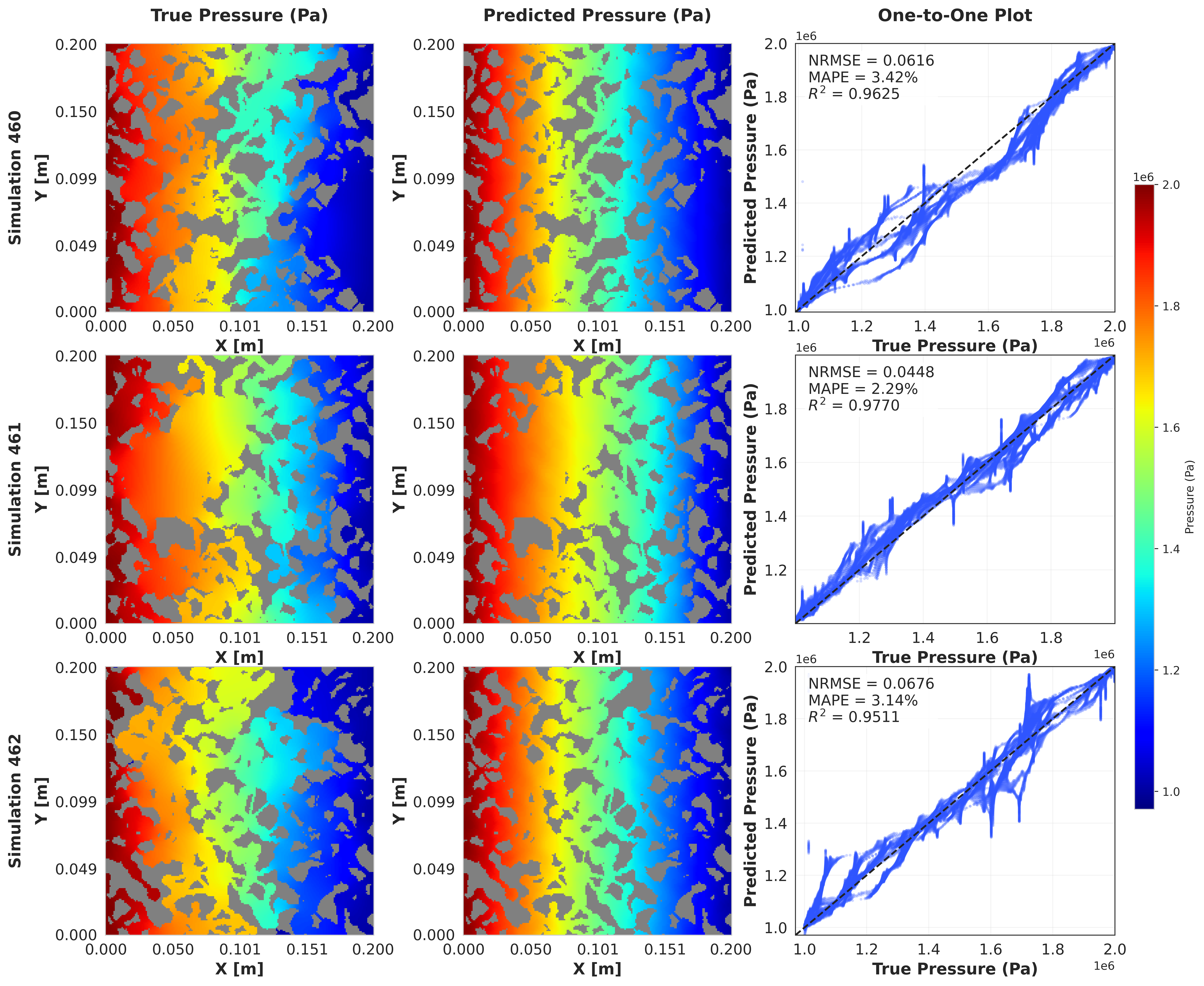}
    \caption{\textbf{Pressure reconstruction for higher-index held-out RCP geometries.}
    The top, middle, and bottom rows correspond to simulations 460, 461, and 462, respectively.
    The left and middle columns show the PFLOTRAN reference pressure fields and GeONet predictions, while the right column presents the corresponding one-to-one comparisons evaluated within the fluid region.
    Gray regions denote the solid phase.
    GeONet accurately reconstructs the dominant pressure gradients, irregular contours, and geometry-induced spatial variations across the three held-out RCP domains.
    The predicted pressures remain closely distributed around the one-to-one line, with NRMSE values of 0.0448--0.0676, MAPE values of 2.29--3.42\%, and $R^2$ values of 0.9511--0.9770.}
    \label{fig:cross_domain_1}
\end{figure}

\begin{table}[t]
    \centering
    \caption{Prediction-agreement metrics for the higher-index held-out RCP geometries shown in Figure~\ref{fig:cross_domain_1}. MAPE is reported as a percentage.}
    \label{tab:fig5c_other_test_metrics}
    \begin{tabular}{lccc}
        \toprule
        Simulation & NRMSE & MAPE (\%) & $R^2$ \\
        \midrule
        460 & 0.0616 & 3.42 & 0.9625 \\
        461 & 0.0448 & 2.29 & 0.9770 \\
        462 & 0.0676 & 3.14 & 0.9511 \\
        \bottomrule
    \end{tabular}
\end{table}

For simulation 460, GeONet achieves NRMSE $=0.0616$, MAPE $=3.42\%$, and $R^2=0.9625$.
The predicted field reproduces the overall pressure range and the geometry-dependent progression of the pressure contours, although localized deviations are visible in the one-to-one comparison.
GeONet produces its strongest performance for simulation 461, with NRMSE $=0.0448$, MAPE $=2.29\%$, and $R^2=0.9770$.
The one-to-one distribution remains closely aligned with the diagonal across most of the pressure range, and the predicted field accurately preserves the broad high-pressure region and its transition toward the outlet.
For simulation 462, GeONet achieves NRMSE $=0.0676$, MAPE $=3.14\%$, and $R^2=0.9511$.
Despite the moderately larger scatter around the one-to-one line, the model preserves the dominant inlet-to-outlet pressure decline, overall pressure range, and geometry-dependent spatial organization of the reference solution.
The consistent performance across these higher-index, structurally distinct geometries indicates that the model has learned a transferable geometry-conditioned operator rather than memorizing individual training domains.
These generalization tests are important because practical LoC and porous-media applications frequently involve new geometries, permeability arrangements, and experimental designs.
The results indicate that GeONet can transfer learned relationships among geometry, material heterogeneity, and pressure response across previously unseen RCP domains.

\subsection{Computational Analysis and Data Efficiency}
\label{sec:computational_cost}
The practical utility of a neural-operator surrogate depends not only on its
predictive accuracy but also on the computational resources required to
generate reference data, train the model, and subsequently deploy it. For
GeONet, the computational workflow involves a substantial upfront investment
in high-fidelity PFLOTRAN simulations followed by the considerably smaller
cost of training the surrogate. We therefore evaluate how GeONet training
scales with dataset size and hardware configuration and examine the resulting
trade-off between computational cost and predictive accuracy. The analysis is
performed independently on the EMSL Tahoma (e.g., using NVIDIA Tesla V100) and PNNL Deception (e.g., using NVIDIA A100) high-performance computing (HPC) platforms using both GPU-accelerated and CPU-only training, providing a cross-platform assessment of computational
scaling, convergence behavior, and data efficiency.

\subsubsection{Data Generation Cost}
\label{subsubsec:data_generation}

The dataset used in this study comprises approximately 5,000 high-resolution
PFLOTRAN simulations of heterogeneous RCP domains and occupies approximately
30~TB of storage. Generating the reference dataset required approximately
13 million CPU core-hours on the EMSL Tahoma HPC system. A representative
PFLOTRAN realization required approximately two days on two compute nodes,
with 36 CPU cores per node. Construction of the supervised dataset therefore
represents a substantial one-time computational investment.
This cost provides an important motivation for developing a surrogate model.
The high-fidelity simulations encode the relationships among pore geometry,
heterogeneous material properties, and the resulting pressure response. Once
these relationships have been learned by GeONet, the trained operator can be
reused for subsequent predictions without repeatedly solving the
high-fidelity flow problem. An important practical consideration is therefore
how much of this computationally expensive training dataset is required to
obtain useful predictive accuracy. The following experiments address this
question by jointly examining training cost and test-set performance as the
training-data fraction is varied.

\subsubsection{Training Computational Cost}
\label{subsubsec:training_cost}

GeONet was trained on the EMSL Tahoma and PNNL Deception HPC platforms using
both GPU-accelerated and CPU-only configurations. The implementation uses JAX
with Flax and Optax, together with just-in-time (JIT) compilation and XLA
optimization. To isolate the effects of dataset size and computational
hardware, all experiments used the same model configuration and optimization
protocol, including AdamW optimization, cosine-decay learning-rate scheduling
with warmup, gradient clipping, coordinate sampling, and a maximum training
budget of 100 epochs. GeONet was trained using 20\%, 40\%, 60\%, 80\%, and
100\% of the available training data, while the validation and test sets were
held fixed across experiments.
The measured wall-clock training times are summarized in
Table~\ref{tab:training_time} and visualized in
Figure~\ref{fig:comp_cost}. Training time increases approximately
proportionally with the amount of training data for both platforms and
hardware configurations. On Tahoma, increasing the training fraction from
20\% to 100\% increases GPU training time from 39~min~19~s to
3~h~33~min~35~s and CPU training time from 6~h~19~min~46~s to
38~h~23~min~30~s. On Deception, the corresponding GPU training time
increases from 42~min~42~s to 3~h~16~min~2~s, whereas CPU training
increases from 12~h~25~min~14~s to 61~h~7~min~57~s
(Table~\ref{tab:training_time}).

\begin{table}[htbp]
    \centering
    \caption{Wall-clock training time for GeONet over 100 epochs as a
    function of training-data fraction on the Tahoma and Deception HPC
    platforms. GPU-accelerated and CPU-only configurations use the same
    training protocol and model configuration. All values are measured
    wall-clock times from completed training runs.}
    \label{tab:training_time}
    \renewcommand{\arraystretch}{1.25}
    \begin{tabular}{ccccc}
        \toprule
        \textbf{Training data (\%)} &
        \textbf{Tahoma GPU} &
        \textbf{Tahoma CPU} &
        \textbf{Deception GPU} &
        \textbf{Deception CPU} \\
        \midrule
        20  & 39m 19s    & 6h 19m 46s  & 42m 42s    & 12h 25m 14s \\
        40  & 1h 14m 9s  & 14h 10m 42s & 1h 18m 37s & 24h 15m 51s \\
        60  & 2h 4m 10s  & 21h 11m 3s  & 2h 8m 43s  & 34h 40m 8s \\
        80  & 2h 46m 54s & 27h 17m 0s  & 2h 47m 25s & 46h 36m 14s \\
        100 & 3h 33m 35s & 38h 23m 30s & 3h 16m 2s  & 61h 7m 57s \\
        \bottomrule
    \end{tabular}
\end{table}

GPU acceleration substantially reduces the wall-clock cost throughout the
investigated data range. Based on the measured times in
Table~\ref{tab:training_time}, the CPU-to-GPU speedup is approximately
9.7--11.5$\times$ on Tahoma and 16.2--18.7$\times$ on Deception
(Figure~\ref{fig:comp_cost}). At the full data scale, GPU acceleration
reduces the 100-epoch training time from approximately 38.4~h to 3.6~h on
Tahoma and from approximately 61.1~h to 3.3~h on Deception. The larger
relative speedup observed on Deception results primarily from its higher
CPU-only wall-clock cost rather than from uniformly faster GPU execution.
Indeed, GPU training times on the two systems are comparable: Tahoma is
slightly faster for the 20--80\% data fractions, whereas Deception is faster
for the full-data experiment. These results demonstrate that the magnitude
of GPU acceleration depends on the underlying HPC configuration, while the
approximately proportional scaling of training time with dataset size is
observed on both systems.

\begin{figure}[width=\linewidth,pos=htbp]
    \centering
    \includegraphics[width=0.95\textwidth]
    {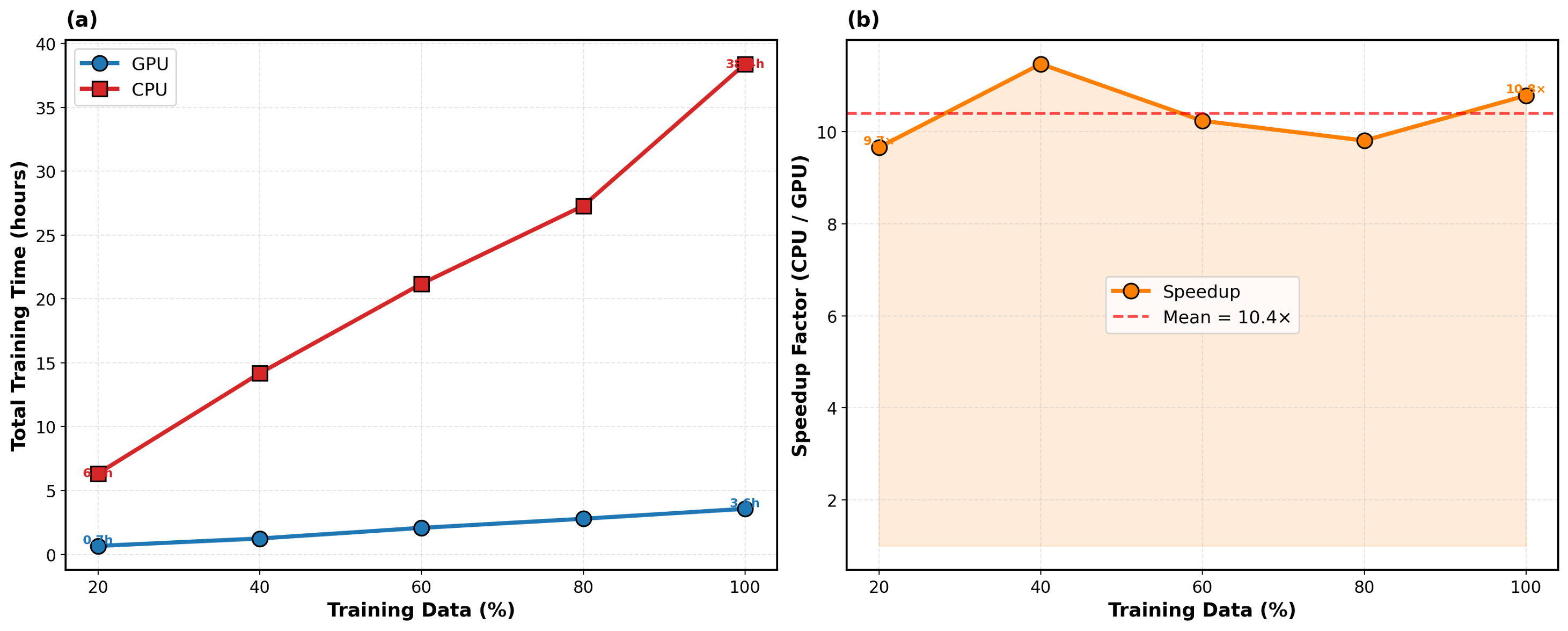}
    \\[6pt]
    \includegraphics[width=0.95\textwidth]
    {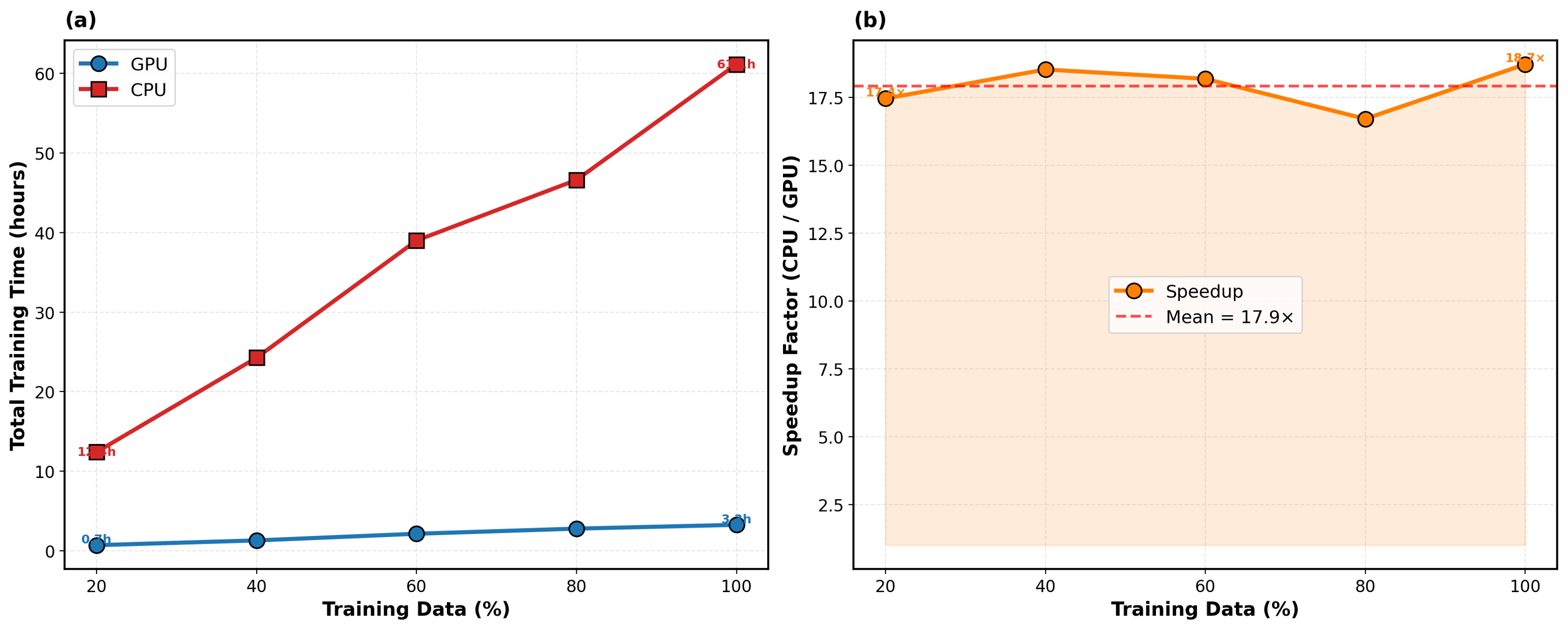}
    \caption{Computational scaling of GeONet training on the Tahoma (top)
    and Deception (bottom) HPC platforms. (a) Total wall-clock time for
    100 training epochs as a function of training-data fraction for GPU
    and CPU execution. (b) CPU-to-GPU wall-clock speedup, with the dashed
    line indicating the mean speedup across the evaluated data fractions.
    Training time increases approximately proportionally with dataset size,
    while GPU acceleration substantially reduces the computational cost on
    both platforms.}
    \label{fig:comp_cost}
\end{figure}

Figure~\ref{fig:val_loss_conv} compares validation-loss convergence across
training-data fractions, platforms, and computational backends. For all
configurations, validation loss decreases sharply during approximately the
first 5--10 epochs and subsequently enters a slower refinement regime. The
GPU and CPU trajectories exhibit similar qualitative behavior on both
Tahoma and Deception, indicating that changing the computational backend
primarily affects wall-clock time rather than the overall optimization
trajectory. Models trained with larger data fractions generally attain lower
validation losses, with the 100\% models reaching the lowest values on both HPC
platforms. The 60\% and 80\% trajectories are considerably closer,
indicating diminishing separation among intermediate-to-large training
fractions. The similarity of the convergence behavior across Tahoma and
Deception further demonstrates that the observed optimization characteristics
are reproducible across the two investigated HPC environments.

\begin{figure}[width=\linewidth,pos=htbp]
    \centering
    \includegraphics[width=0.95\textwidth]
    {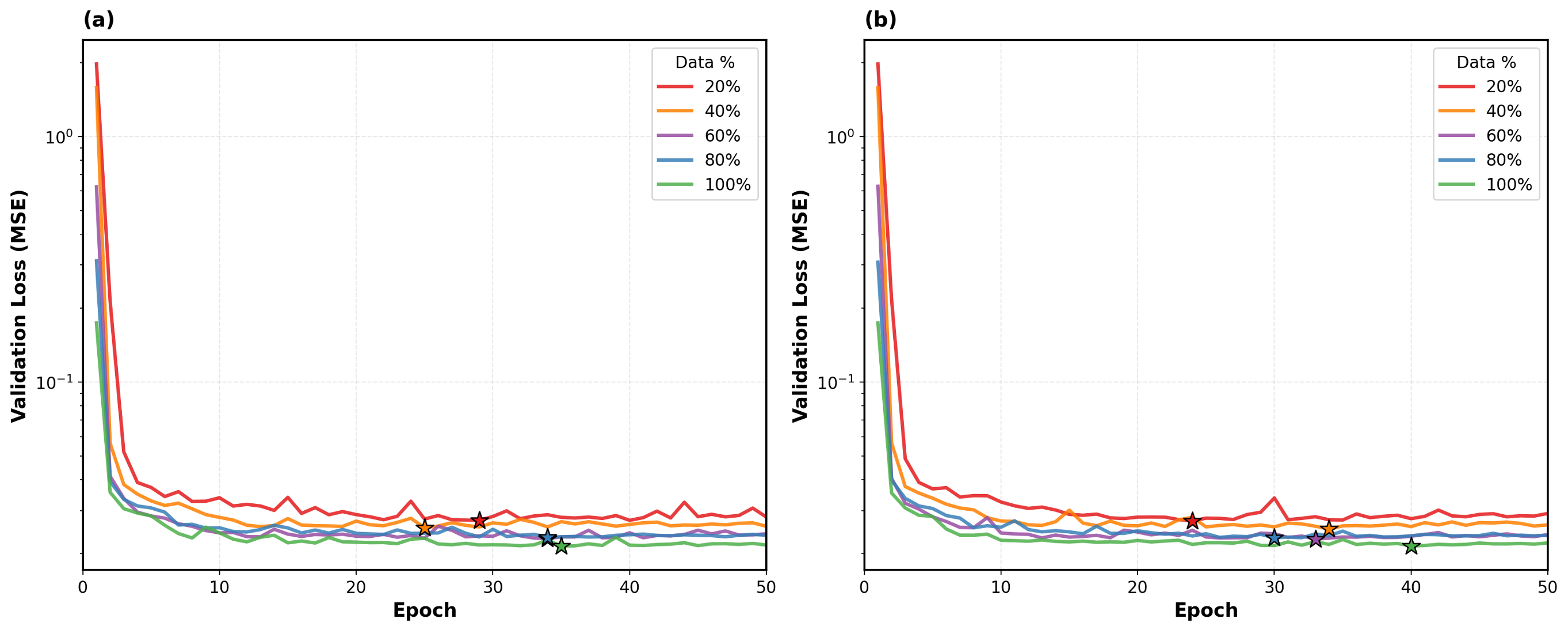}
    \\[6pt]
    \includegraphics[width=0.95\textwidth]
    {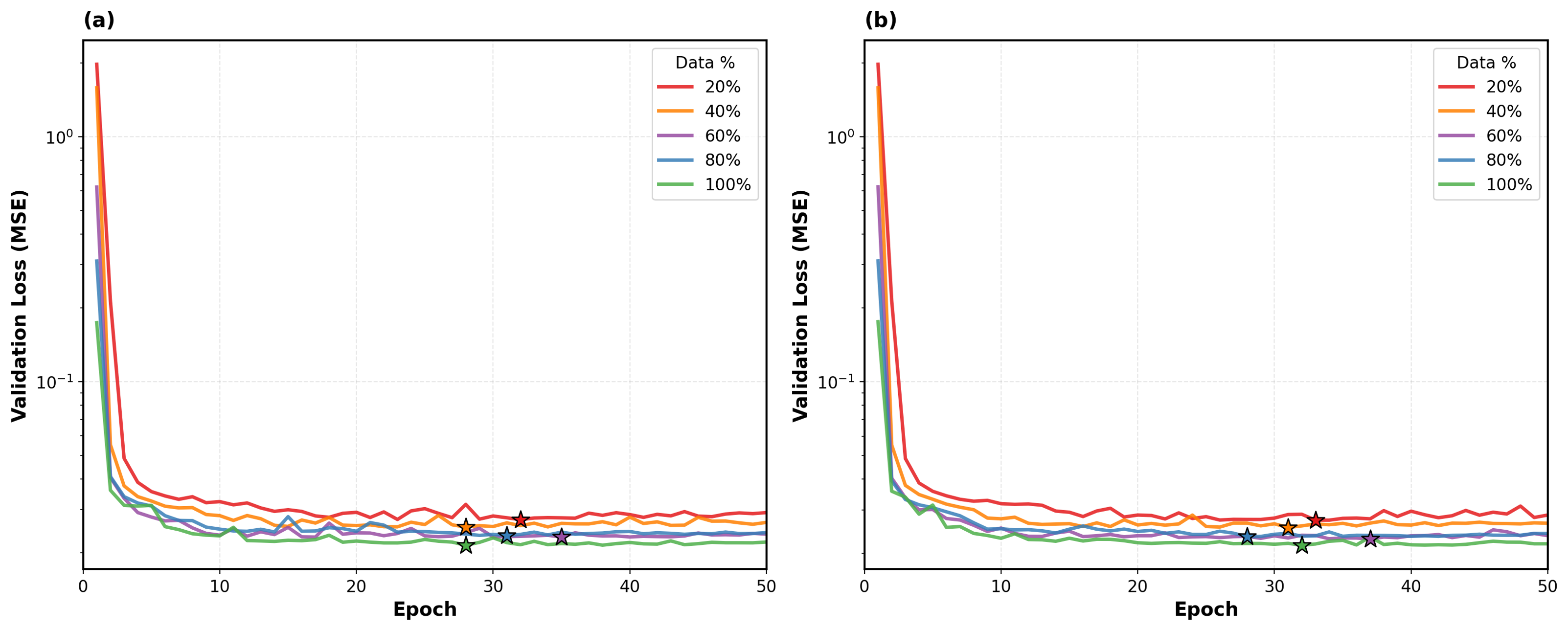}
    \caption{Validation-loss convergence for GeONet across training-data
    fractions. The top and bottom rows correspond to Tahoma and Deception,
    respectively, while the left and right columns correspond to GPU and
    CPU training. Stars denote the epochs attaining the minimum validation
    loss. Across configurations, validation loss decreases rapidly during
    the initial 5--10 epochs and subsequently enters a slower refinement
    regime.}
    \label{fig:val_loss_conv}
\end{figure}

\subsubsection{Data-Scaling Effects on Model Accuracy}
\label{subsubsec:data_scaling}

The relationship between training-data volume and predictive accuracy is
summarized in Tables~\ref{tab:tahoma_metrics} and
\ref{tab:deception_metrics} and visualized in
Figures~\ref{fig:data_scaling} and~\ref{fig:pred_accuracy}. The tables report
NRMSE, MAPE, and $R^2$ as mean $\pm$ standard deviation over 20 held-out
test samples. Figure~\ref{fig:data_scaling} additionally compares the best
validation loss and test-set $R^2$ as the training-data fraction increases.
GeONet exhibits substantial data efficiency. Even when trained using only
20\% of the available data, the model achieves mean $R^2$ values between
0.9761 and 0.9781 across the four platform--device combinations, while mean
MAPE remains between 2.11\% and 2.23\%
(Tables~\ref{tab:tahoma_metrics} and~\ref{tab:deception_metrics}). Thus,
models trained with only one fifth of the available corpus already capture
most of the spatial variability in the held-out pressure fields while
maintaining a mean percentage error close to 2\%.

\begin{figure}[width=\linewidth,pos=htbp]
    \centering
    \includegraphics[width=0.95\textwidth]
    {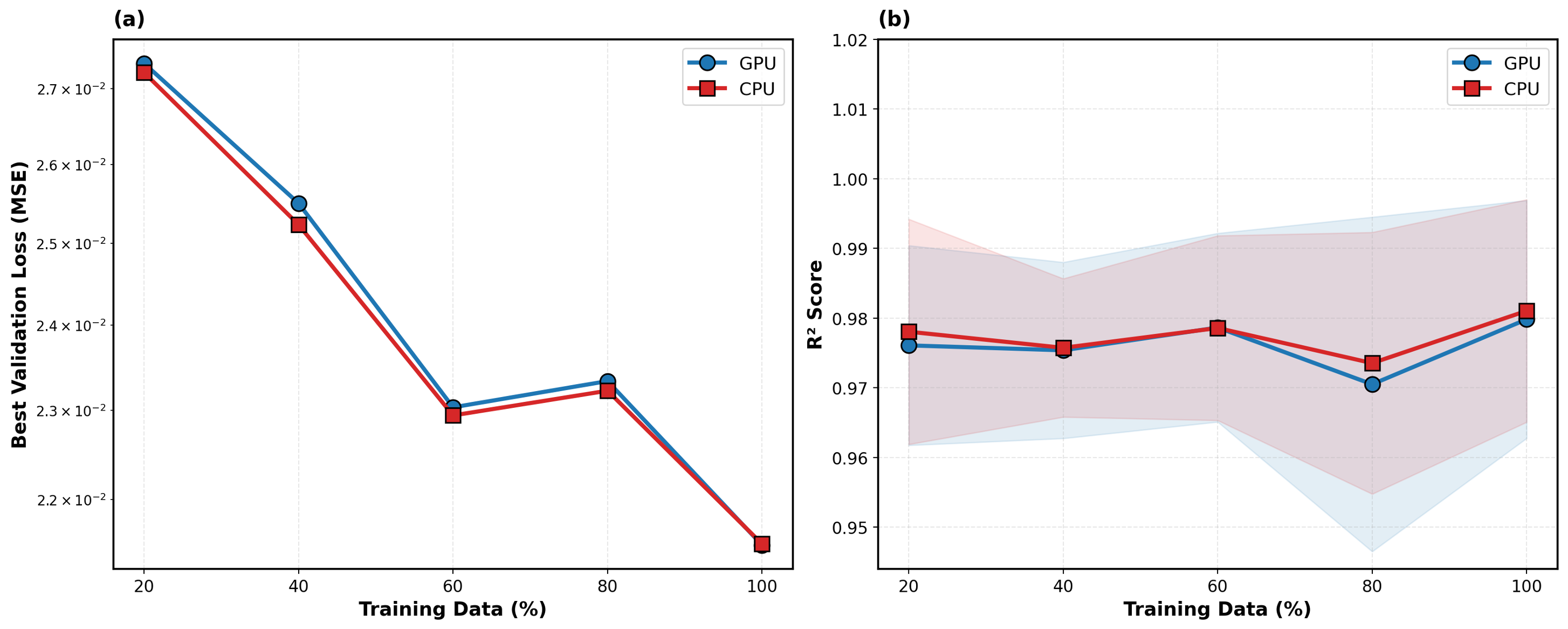}
    \\[6pt]
    \includegraphics[width=0.95\textwidth]
    {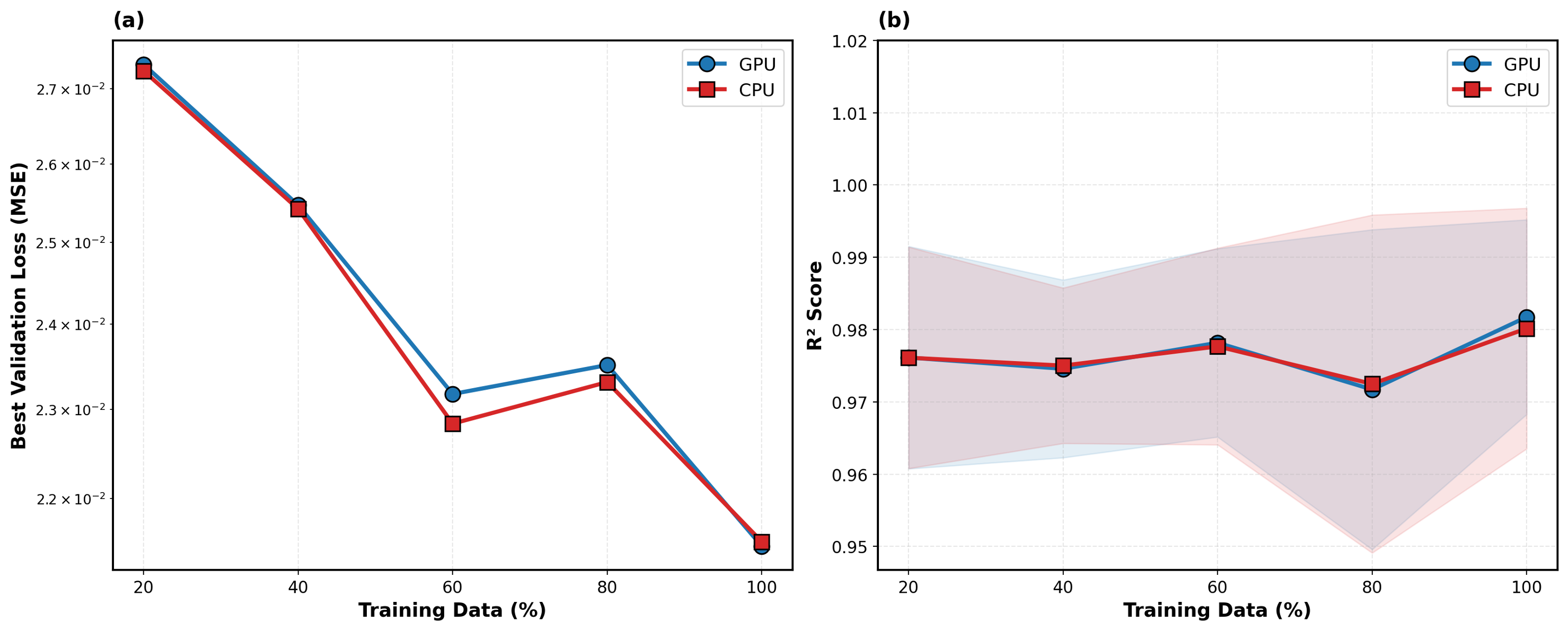}
    \caption{Data-scaling behavior of GeONet on the Tahoma (top) and
    Deception (bottom) HPC platforms. (a) Best validation loss as a function
    of training-data fraction. (b) Test-set $R^2$ as a function of
    training-data fraction for GPU- and CPU-trained models; shaded regions
    represent variability across the held-out test realizations. Validation
    loss generally decreases as the training-data fraction increases,
    whereas test-set $R^2$ remains above 0.97 across all configurations and
    reaches its highest mean values with the complete training set.}
    \label{fig:data_scaling}
\end{figure}

\begin{table}[htbp]
\centering
\caption{Test-set prediction accuracy for GeONet trained on Tahoma as a
function of training-data fraction. Values represent mean $\pm$ standard
deviation over 20 held-out test samples. GPU- and CPU-trained models exhibit
closely comparable predictive accuracy across the evaluated data fractions.}
\label{tab:tahoma_metrics}
\renewcommand{\arraystretch}{1.3}
\begin{tabular}{lccccc}
\toprule
\textbf{Metric} &
\textbf{20\%} &
\textbf{40\%} &
\textbf{60\%} &
\textbf{80\%} &
\textbf{100\%} \\
\midrule
NRMSE (GPU) &
$0.0506 \pm 0.0154$ &
$0.0502 \pm 0.0126$ &
$0.0464 \pm 0.0127$ &
$0.0540 \pm 0.0194$ &
$0.0453 \pm 0.0172$ \\

NRMSE (CPU) &
$0.0480 \pm 0.0166$ &
$0.0504 \pm 0.0100$ &
$0.0466 \pm 0.0124$ &
$0.0518 \pm 0.0178$ &
$0.0436 \pm 0.0177$ \\
\midrule
MAPE (\%) (GPU) &
$2.23 \pm 0.52$ &
$2.12 \pm 0.51$ &
$1.93 \pm 0.48$ &
$2.24 \pm 0.86$ &
$1.86 \pm 0.70$ \\

MAPE (\%) (CPU) &
$2.11 \pm 0.50$ &
$2.10 \pm 0.39$ &
$1.94 \pm 0.47$ &
$2.12 \pm 0.74$ &
$1.77 \pm 0.69$ \\
\midrule
$R^2$ (GPU) &
$0.9761 \pm 0.0143$ &
$0.9754 \pm 0.0126$ &
$0.9786 \pm 0.0135$ &
$0.9705 \pm 0.0240$ &
$0.9798 \pm 0.0171$ \\

$R^2$ (CPU) &
$0.9781 \pm 0.0162$ &
$0.9757 \pm 0.0099$ &
$0.9786 \pm 0.0132$ &
$0.9735 \pm 0.0188$ &
$0.9810 \pm 0.0160$ \\
\bottomrule
\end{tabular}
\end{table}

\begin{table}[htbp]
\centering
\caption{Test-set prediction accuracy for GeONet trained on Deception as a
function of training-data fraction. Values represent mean $\pm$ standard
deviation over 20 held-out test samples. GPU- and CPU-trained models exhibit
closely comparable predictive accuracy across the evaluated data fractions.}
\label{tab:deception_metrics}
\renewcommand{\arraystretch}{1.3}
\begin{tabular}{lccccc}
\toprule
\textbf{Metric} &
\textbf{20\%} &
\textbf{40\%} &
\textbf{60\%} &
\textbf{80\%} &
\textbf{100\%} \\
\midrule
NRMSE (GPU) &
$0.0503 \pm 0.0162$ &
$0.0512 \pm 0.0117$ &
$0.0471 \pm 0.0121$ &
$0.0530 \pm 0.0193$ &
$0.0433 \pm 0.0158$ \\

NRMSE (CPU) &
$0.0503 \pm 0.0158$ &
$0.0511 \pm 0.0108$ &
$0.0474 \pm 0.0129$ &
$0.0522 \pm 0.0189$ &
$0.0448 \pm 0.0170$ \\
\midrule
MAPE (\%) (GPU) &
$2.21 \pm 0.56$ &
$2.16 \pm 0.46$ &
$1.96 \pm 0.46$ &
$2.18 \pm 0.84$ &
$1.78 \pm 0.64$ \\

MAPE (\%) (CPU) &
$2.17 \pm 0.52$ &
$2.13 \pm 0.42$ &
$1.96 \pm 0.51$ &
$2.17 \pm 0.82$ &
$1.86 \pm 0.69$ \\
\midrule
$R^2$ (GPU) &
$0.9761 \pm 0.0154$ &
$0.9746 \pm 0.0123$ &
$0.9782 \pm 0.0130$ &
$0.9717 \pm 0.0221$ &
$0.9817 \pm 0.0135$ \\

$R^2$ (CPU) &
$0.9761 \pm 0.0153$ &
$0.9750 \pm 0.0107$ &
$0.9777 \pm 0.0136$ &
$0.9725 \pm 0.0234$ &
$0.9802 \pm 0.0166$ \\
\bottomrule
\end{tabular}
\end{table}

Increasing the training-data fraction produces an overall improvement in
predictive performance, although the relationship is not strictly monotonic.
The 60\% models generally improve upon the 20\% and 40\% configurations,
whereas the 80\% models exhibit a temporary degradation in NRMSE, MAPE, and
$R^2$ before performance improves again at 100\%
(Tables~\ref{tab:tahoma_metrics} and~\ref{tab:deception_metrics};
Figures~\ref{fig:data_scaling} and~\ref{fig:pred_accuracy}). For example,
on Tahoma GPU, mean NRMSE decreases from 0.0506 at 20\% data to 0.0464 at
60\%, increases to 0.0540 at 80\%, and then decreases to 0.0453 with the
complete training set. A similar response is observed for the Tahoma CPU and
Deception configurations. The consistency of this intermediate fluctuation
across the experiments indicates that predictive performance should be
assessed using held-out validation and test statistics rather than assuming
that every increment in training-data volume necessarily produces a lower
test error.
The complete training corpus nevertheless provides the best overall
performance. On Tahoma, the 100\% models attain NRMSE values of
$0.0453 \pm 0.0172$ and $0.0436 \pm 0.0177$ for GPU and CPU training,
respectively, with corresponding $R^2$ values of
$0.9798 \pm 0.0171$ and $0.9810 \pm 0.0160$
(Table~\ref{tab:tahoma_metrics}). On Deception, the GPU- and CPU-trained
models attain NRMSE values of $0.0433 \pm 0.0158$ and
$0.0448 \pm 0.0170$, respectively, while $R^2$ reaches
$0.9817 \pm 0.0135$ and $0.9802 \pm 0.0166$
(Table~\ref{tab:deception_metrics}). Mean MAPE for the full-data models
ranges from 1.77\% to 1.86\% across the four configurations.

The GPU- and CPU-trained models exhibit close agreement at a fixed
training-data fraction. At 100\% data, for example, the difference between
GPU and CPU mean $R^2$ is approximately 0.0012 on Tahoma and 0.0015 on
Deception. Similar agreement is observed for NRMSE and MAPE
(Tables~\ref{tab:tahoma_metrics} and~\ref{tab:deception_metrics}). These
differences are small relative to the corresponding test-set standard
deviations. Figures~\ref{fig:data_scaling} and~\ref{fig:pred_accuracy}
further show that the GPU and CPU accuracy trends closely track one another
as the training fraction changes. Consequently, the principal advantage of
GPU execution in these experiments is the substantial reduction in
wall-clock training time rather than a systematic improvement in the
predictive accuracy of the converged model.
The similarity between the Tahoma and Deception results provides an
additional measure of computational reproducibility. Both platforms exhibit
comparable validation-loss scaling (Figure~\ref{fig:data_scaling}), similar
GPU--CPU agreement in test accuracy (Tables~\ref{tab:tahoma_metrics}
and~\ref{tab:deception_metrics}), and the same qualitative response to
increasing training-data fraction. Within the hardware and training
configurations investigated here, the data-efficiency behavior is therefore
reproducible across the two HPC environments.

\begin{figure}[width=\linewidth,pos=htbp]
    \centering
    \includegraphics[width=0.95\textwidth]
    {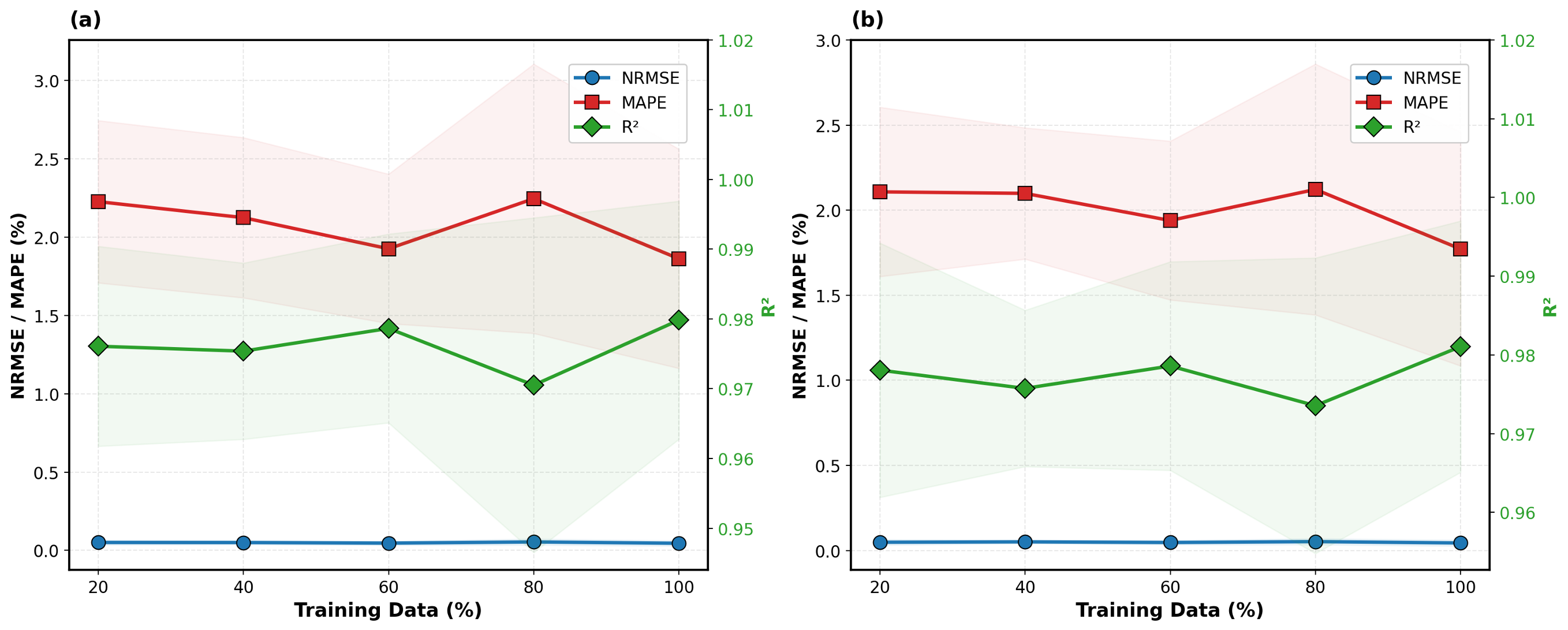}
    \\[6pt]
    \includegraphics[width=0.95\textwidth]
    {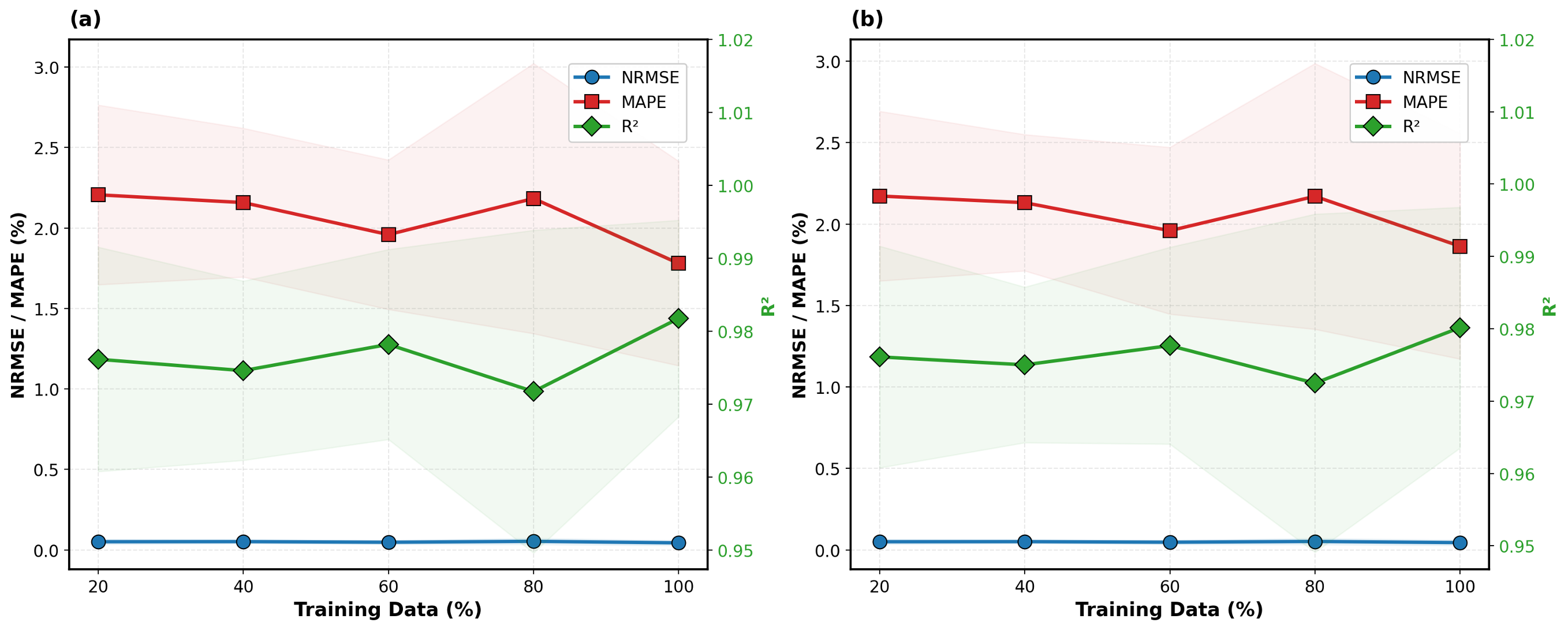}
    \caption{Test-set prediction accuracy versus training-data fraction for GeoNet models trained on Tahoma (top) and Deception (bottom). Columns show GPU- and CPU-trained models, respectively. Prediction quality is quantified using normalized root-mean-square error (NRMSE), mean absolute percentage error (MAPE), and coefficient of determination ($R^2$). Shaded bands indicate variability across held-out test realizations. The GPU- and CPU-trained models exhibit comparable data-scaling trends, indicating that the training device has negligible influence on predictive accuracy. Increasing the training-data fraction yields modest improvements, with the full training set achieving the best overall performance.}
    \label{fig:pred_accuracy}
\end{figure}

Figure~\ref{fig:error_dist} complements the aggregate accuracy metrics by
showing the distribution of per-sample mean relative prediction error over
the held-out test realizations. The error distributions remain concentrated
at relatively low values for all training fractions, including the 20\%
models. The full-data models generally exhibit lower central errors,
particularly for the GPU runs, although the widths of the distributions do
not decrease monotonically with increasing data fraction. The 80\%
configurations exhibit broader variability and several larger-error
realizations, consistent with the temporary degradation observed in
Tables~\ref{tab:tahoma_metrics} and~\ref{tab:deception_metrics} and
Figure~\ref{fig:pred_accuracy}. The subsequent improvement at 100\% data
shows that the complete training corpus provides lower mean errors and strong
overall predictive performance, while the intermediate fluctuations
emphasize the importance of evaluating data efficiency across multiple
held-out realizations rather than relying on a single aggregate statistic.

\begin{figure}[width=\linewidth,pos=htbp]
    \centering
    \begin{minipage}{0.48\textwidth}
        \centering
        \includegraphics[width=\textwidth]
        {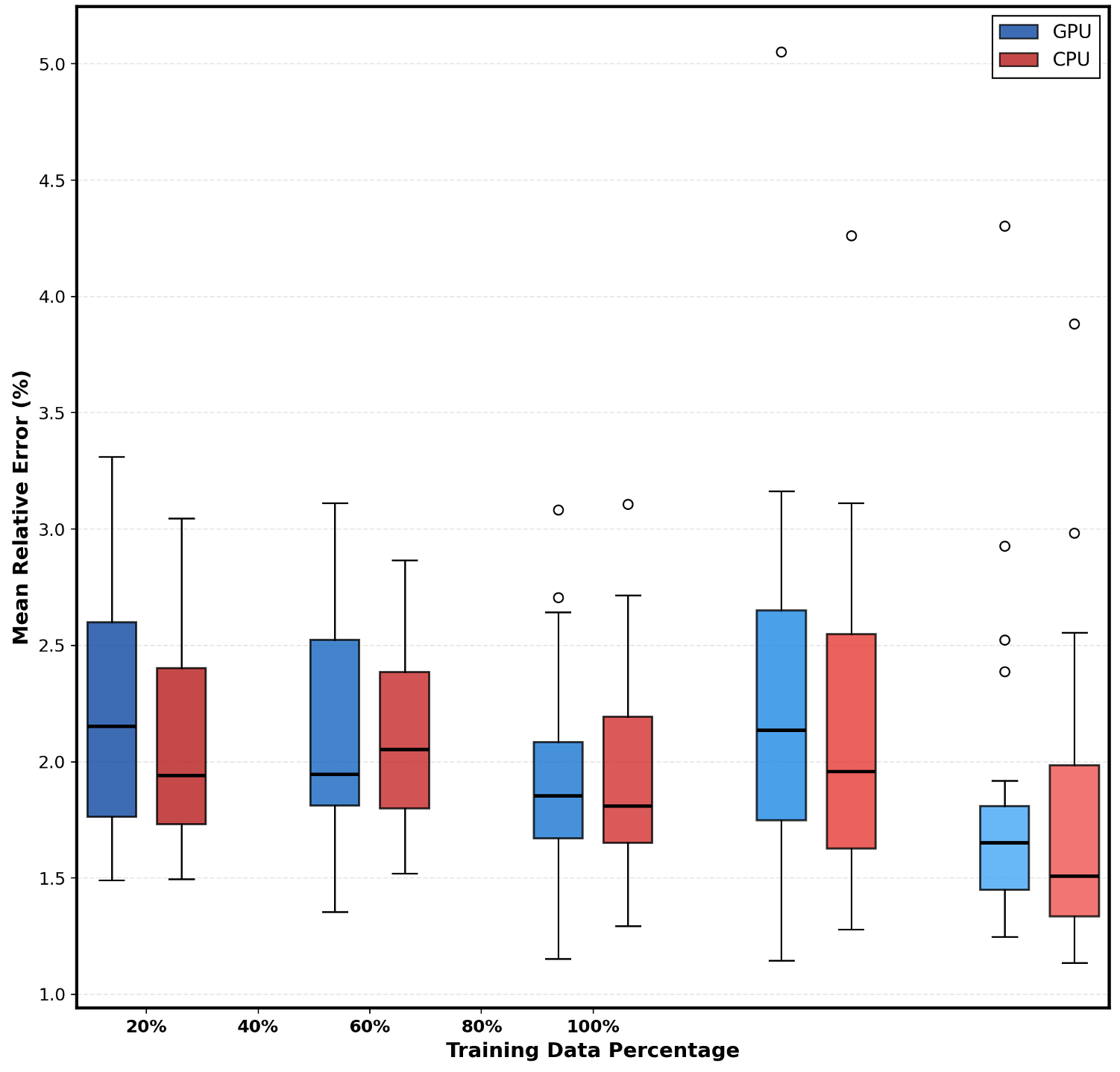}
    \end{minipage}
    \hfill
    \begin{minipage}{0.48\textwidth}
        \centering
        \includegraphics[width=\textwidth]
        {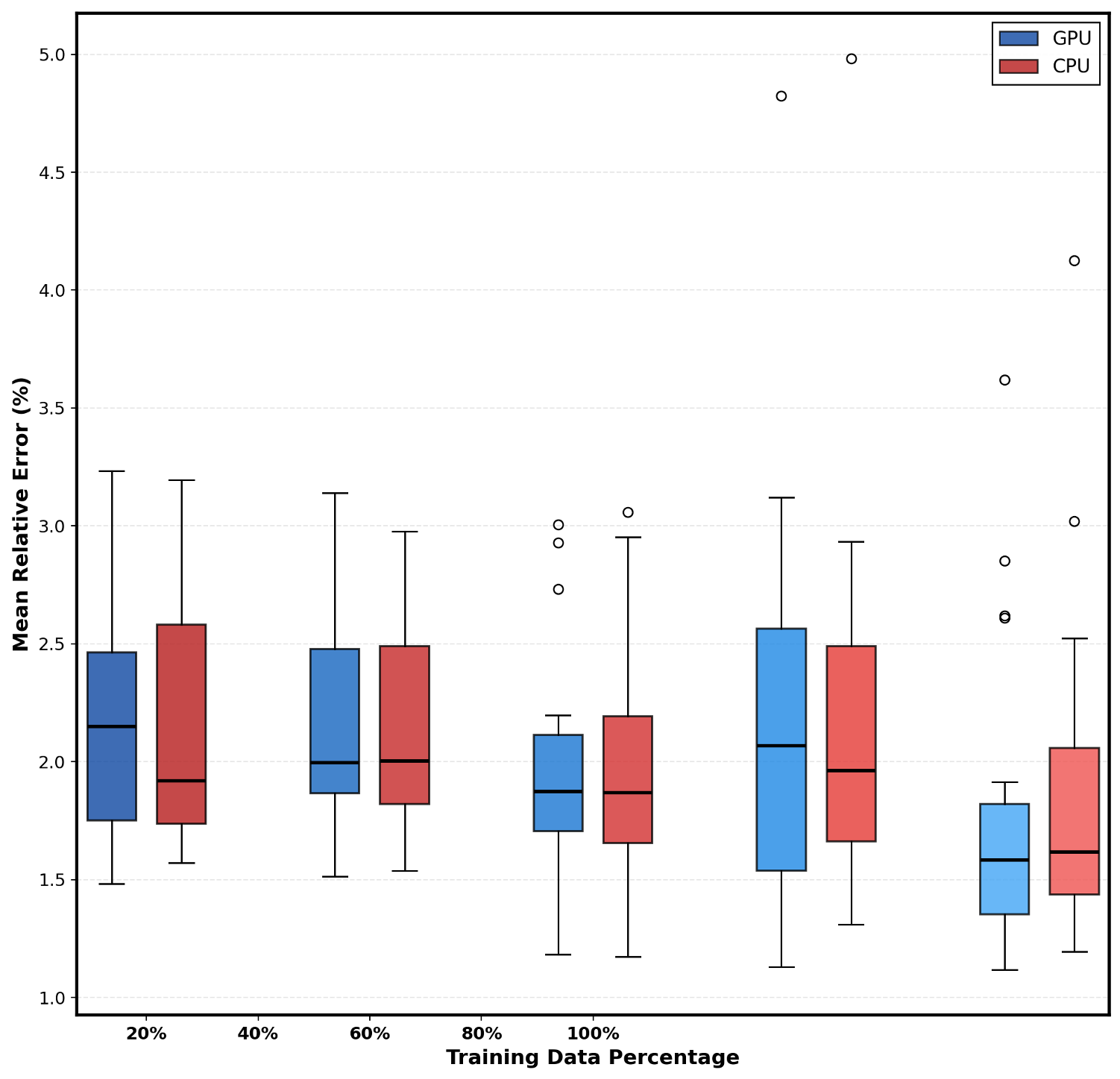}
    \end{minipage}
    \caption{Distribution of per-sample mean relative prediction error
    across held-out test realizations for Tahoma (left) and Deception
    (right). Box plots show the distributions obtained with GPU- and
    CPU-trained models at each training-data fraction. The full-data models
    generally exhibit lower central errors, whereas the intermediate
    fractions show non-monotonic variation in error spread and several
    higher-error test realizations.}
    \label{fig:error_dist}
\end{figure}

Taken together, the computational and predictive results reveal a favorable
accuracy--cost trade-off. Training with 20\% of the available data requires
only 39~min~19~s on Tahoma GPU and 42~min~42~s on Deception GPU
(Table~\ref{tab:training_time}), while retaining mean $R^2$ values of
approximately 0.976--0.978 and MAPE values of approximately 2.1--2.2\%
(Tables~\ref{tab:tahoma_metrics} and~\ref{tab:deception_metrics}). This
reduced-data regime therefore provides an efficient setting for
computationally intensive model-development activities such as architecture
screening and hyperparameter exploration.
Increasing the training fraction to 60\% requires approximately 2.1~h of
GPU training on either platform and yields mean $R^2$ values near 0.978 and
MAPE values of approximately 1.9--2.0\%. At the full data scale, training
requires approximately 3.6~h on Tahoma GPU and 3.3~h on Deception GPU and
provides the best overall test performance, with mean $R^2$ near 0.98 and
MAPE near 1.8\%
(Tables~\ref{tab:training_time}--\ref{tab:deception_metrics}). These results
show that the training-data fraction can be selected according to the
available computational budget and required predictive fidelity: reduced
subsets provide an efficient environment for model development, whereas the
complete dataset is preferable when minimizing predictive error is the
primary objective.

\subsubsection{Inference Cost and Prediction Efficiency}
\label{subsubsec:inference_cost}

The computational advantage of GeONet becomes particularly relevant after
the one-time costs of high-fidelity data generation and surrogate training
have been incurred. As discussed in
Section~\ref{subsubsec:data_generation}, a representative PFLOTRAN
realization requires approximately two days on two 36-core compute nodes.
By comparison, the complete 100-epoch GeONet training procedure requires
approximately 3.3--3.6~h on the GPU configurations considered here
(Table~\ref{tab:training_time} and Figure~\ref{fig:comp_cost}). Once this
training stage is complete, predictions for additional input realizations
are obtained through evaluation of the learned neural operator rather than
through repeated numerical solution of the governing flow equations.
This distinction is particularly relevant for applications that require
repeated forward evaluations, including uncertainty quantification,
sensitivity analysis, inverse modeling, and design-space exploration. In
such applications, the substantial cost of constructing the PFLOTRAN
training corpus can be amortized over repeated use of the trained surrogate.
Moreover, because GeONet represents the solution through spatial query
coordinates, the trained model provides a coordinate-based mechanism for
evaluating the predicted pressure field for new input realizations.

The present computational experiments quantify high-fidelity data-generation
and model-training costs but do not include a controlled inference-latency
benchmark. Consequently, a numerical inference speedup relative to PFLOTRAN
is not reported. Establishing such a value would require direct measurements
of GeONet inference latency and throughput under specified hardware,
batch-size, and spatial-resolution conditions. The results reported here
nevertheless demonstrate that the dominant computational investment is
concentrated in generation of the high-fidelity reference data, after which
GeONet provides a reusable learned representation for subsequent
pressure-field evaluation.

\subsubsection{Practical Computational Implications}
\label{subsubsec:practical_considerations}

The combined results in Table~\ref{tab:training_time},
Tables~\ref{tab:tahoma_metrics}--\ref{tab:deception_metrics}, and
Figures~\ref{fig:data_scaling}--\ref{fig:error_dist} provide practical
guidance for selecting computational resources and training-data volume.
GPU acceleration is particularly advantageous for full-scale model
development, reducing wall-clock training time by approximately one order of
magnitude on Tahoma and by more than an order of magnitude on Deception.
CPU-only training remains feasible for reduced-data experiments, but the
difference becomes substantial at full scale, where CPU training requires
approximately 38.4~h on Tahoma and 61.1~h on Deception compared with
approximately 3.6~h and 3.3~h, respectively, on GPU
(Table~\ref{tab:training_time} and Figure~\ref{fig:comp_cost}).

For exploratory model development, the 20\% data configuration provides a
particularly efficient operating regime because it completes in less than
45 minutes on either GPU platform while retaining mean $R^2>0.976$ and MAPE
below 2.3\% (Tables~\ref{tab:tahoma_metrics} and
\ref{tab:deception_metrics}). The 60\% configuration provides an
intermediate regime, reducing MAPE to approximately 2\% while requiring
only about 2.1~h of GPU training. The complete training set provides the
highest overall predictive accuracy at a GPU cost of approximately
3--4~h. This separation between inexpensive reduced-data experiments and
full-data training is useful for workflows in which multiple architectures
or optimization configurations must be evaluated before selecting a final
model.

Finally, the approximately proportional increase in wall-clock time with
training-data fraction in Figure~\ref{fig:comp_cost} provides a practical
basis for estimating the computational resources required for larger
training campaigns. The convergence curves in
Figure~\ref{fig:val_loss_conv} further show that most of the reduction in
validation loss occurs during the early stage of optimization, followed by
slower refinement. Validation-based early stopping may therefore provide an
additional mechanism for reducing training cost, particularly during
exploratory experiments. For final models, however, the stopping criterion
should be determined from validation performance rather than from a fixed
epoch threshold because the epoch attaining the minimum validation loss
varies with training-data fraction, platform, and computational backend.

\subsection{Limitations}
\label{Sec:limitations}
The present study focuses on steady-state pressure prediction for single-phase flow in RCP geometries generated using PFLOTRAN. 
While the held-out tests demonstrate strong generalization across previously unseen RCP configurations, the current evaluation remains within the broader class of geometries and physical conditions represented in the training data.
Future work will extend GeONet to experimentally derived porous-media geometries.
This includes X-ray computed tomography and other chemical imaging modalities (e.g., Scanning Electron Microscopy coupled with Energy-Dispersive X-ray Spectroscopy, Matrix-Assisted Laser Desorption/Ionization), as well as domains with greater morphological and material complexity (e.g., critical minerals embedded in the chip). 
The existing dataset also contains tracer-transport information, providing a natural pathway to extend the framework toward transient transport, coupled flow--transport processes, and ultimately more complex multiphysics applications.

The current implementation is primarily supervised and deterministic, with the reported model optimized for pressure-field reconstruction. 
Future developments will investigate physics- and conservation-informed objectives, uncertainty quantification, and out-of-distribution diagnostics to further improve model reliability when applied to new experimental conditions. 
Future studies will further evaluate the contribution of individual GeONet components, including the Leaf Network, local geometric descriptors, Fourier features, and DeepMixer fusion architecture.
Finally, systematic benchmarking of inference time, memory requirements, and scalability across multiscale problems will help quantify GeONet's computational advantages and establish its practical applicability to experimental design, uncertainty analysis, inverse modeling, and real-time digital-twin applications.

\subsection{Discussion related to experimental data toward multi-physics model calibration}
\label{Sec:results_experimental_data}
The simulation-based analyses presented above demonstrate that GeONet can learn geometry-conditioned pressure fields across heterogeneous RCP domains. 
An important step toward experimentally constrained prediction is to connect this framework with pore structures and flow responses measured directly from physical samples. 
To explore this connection, future work will use complementary X-ray computed tomography (XCT), pore-network analysis, and Pore2Chip experiments \cite{truong2025pore2chip} on soil aggregates before and after freeze--thaw disturbance. 
The experimental workflow translates three-dimensional changes in pore architecture into reduced-complexity micromodels that retain key structural characteristics of the original pore network. 
Rather than selecting an arbitrary two-dimensional slice from the XCT volume, which may not adequately represent disturbance-induced structural changes, Pore2Chip extracts pore-network characteristics such as pore-size distributions, pore-throat-size distributions, and pore coordination numbers, and uses these descriptors to construct experimentally realizable micromodels. In this way, the platform provides a controlled bridge between the complex three-dimensional pore structure observed by XCT and laboratory-scale flow and transport experiments.

%----------------------------------------------------------
% Undisturbed pore network
%----------------------------------------------------------
\begin{figure}[width=\linewidth,pos=htbp]
    \centering
    \includegraphics[
        page=1,
        width=\textwidth
    ]{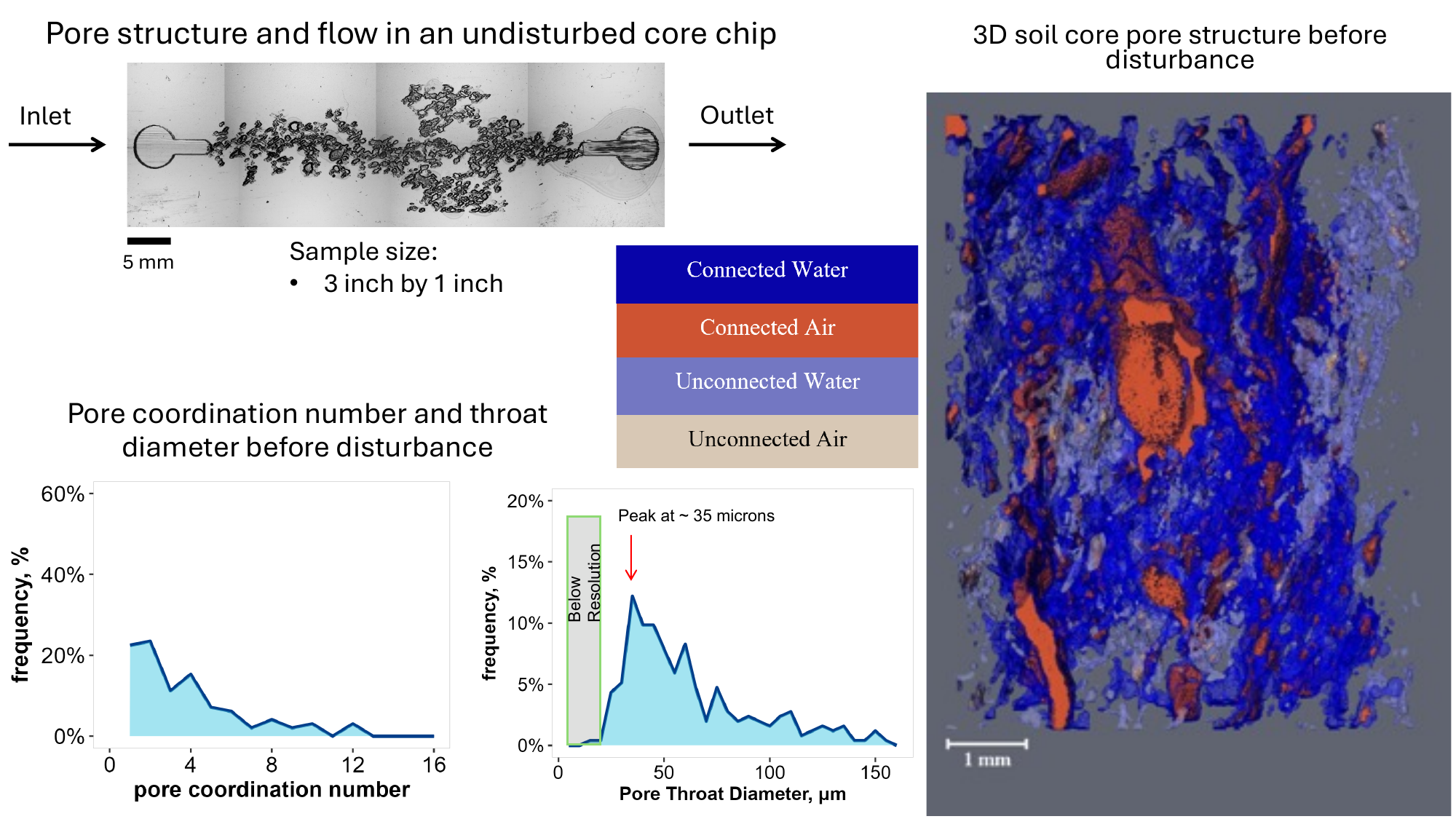}
    \caption{
    Experimental characterization of the undisturbed soil pore network and its corresponding Pore2Chip micromodel. 
    The micromodel (upper left) represents the pore structure used for controlled flow experiments, with flow imposed from the inlet to the outlet. 
    We will use these experimental data in subsequent modeling work in PFLOTRAN. 
    The lower panels show the pore coordination number distribution and pore throat diameter distribution derived from pore-network analysis. 
    The resolved pore throat distribution exhibits a dominant peak near approximately $35~\mu\mathrm{m}$.
    The three-dimensional XCT-based visualization of a soil core (right) illustrates the spatial organization and connectivity of the pore structure before disturbance.
    }
    \label{fig:exp_undisturbed}
\end{figure}

The before--after comparison reveals clear changes in pore-network organization following disturbance (see Figure~\ref{fig:exp_undisturbed}). 
In the undisturbed sample, the pore network exhibits a broader distribution of coordination numbers, reflecting a larger fraction of multiply connected pores and a greater diversity of potential flow pathways. 
Following freeze--thaw disturbance, the distribution shifts toward lower coordination numbers, with a larger fraction of singly connected pores. 
This structural change matters because coordination number directly reflects network connectivity.
That is, reduced pore coordination decreases the number of alternative pathways for fluid redistribution and can increase bulk-flow sensitivity to individual pore throats or local bottlenecks. 
The XCT reconstructions qualitatively support this interpretation by showing substantial reorganization of the pore structure following disturbance. 
These observations are consistent with the underlying Pore2Chip motivation that disturbance modifies not only pore volume but also the connectivity and topology of the pore network, which we will use to calibrate the PFLOTRAN multiphysics model in future work.

%----------------------------------------------------------
% Disturbed pore network
%----------------------------------------------------------
\begin{figure}[width=\linewidth,pos=htbp]
    \centering
    \includegraphics[
        page=2,
        width=\textwidth
    ]{figures/GeoNet_Exp_Figures.pdf}
    \caption{Experimental characterization of the disturbed soil pore network and its corresponding Pore2Chip micromodel. 
    The micromodel (upper left) represents the altered pore structure following disturbance and provides a controlled platform for examining the resulting flow response. 
    Different colors in the right figure show the connected pore network with (or including) water- and air-filled pores. 
    Pore network analysis (bottom figures) shows a shift toward lower pore coordination numbers and a change in the pore-throat diameter distribution, with the dominant resolved peak near $25~\mu\mathrm{m}$.
    The three-dimensional XCT-based rendering (right) illustrates the corresponding reorganization of the pore structure following disturbance on the same soil core.}
    \label{fig:exp_disturbed}
\end{figure}

Changes in pore-throat geometry provide a complementary description of this structural response. 
The resolved pore-throat-size distribution before disturbance shows a dominant peak near $35~\mu$m, whereas the corresponding distribution after disturbance shifts toward $25~\mu$m. 
A shift toward smaller characteristic pore throats can increase hydraulic resistance and alter flow partitioning among connected pathways. 
When combined with the observed increase in low-coordination pores, these changes provide a physically plausible explanation for the slower flow behavior observed in the disturbed Pore2Chip experiments. 
The experimental results therefore indicate that disturbance can affect flow by reducing network connectivity and modifying constrictions connecting neighboring pores. 
At the same time, pore-throat statistics close to the XCT resolution limit should be interpreted with care because changes in feature detectability can influence the extracted pore-network distributions. 
The experimental analysis therefore motivates using multiple complementary descriptors rather than relying on a single geometric metric to characterize disturbance.

These experimental observations reinforce the central hypothesis underlying GeONet: a porous system's response depends not only on material properties such as porosity and permeability, but also on the spatial arrangement and connectivity of the pore domain. 
In GeONet, these complementary sources of information are represented explicitly through separate modeling pathways. 
The Leaf Network represents the solid--void geometry, the Branch Network represents hydraulic material properties, and the Trunk Network provides the spatial query location together with local geometric descriptors. 
The predicted pressure at each query location is determined by the combined representation of pore geometry, material properties such as porosity and permeability, and local spatial information.
The disturbance experiments illustrate why this separation matters: two systems may have comparable bulk properties but differ substantially in pore coordination number, pore throat-size distribution, and pore network connectivity, leading to different flow responses.

The combined XCT--Pore2Chip dataset also provides a pathway toward experimentally constrained model calibration. 
Segmented XCT or Pore2Chip geometries can provide direct geometric inputs to the Leaf Network, while experimentally estimated porosity and effective hydraulic properties can constrain the Branch inputs. 
Measured inlet and outlet conditions define the hydraulic forcing, and micromodel observations provide independent quantities against which PFLOTRAN and GeONet predictions can be evaluated. 
In this workflow, uncertain parameters such as effective permeability can be adjusted so simulated and observed hydraulic responses are mutually consistent. 
Pore-network descriptors, including coordination number and pore-throat-size distributions, can additionally be used to characterize how closely a new experimental geometry resembles the structural domain represented by the RCP training ensemble. 
This creates an opportunity to distinguish uncertainty in hydraulic parameters from uncertainty in geometric representation.

The same experimental--computational framework can be progressively extended to multi-physics model calibration. 
Flow experiments provide an initial basis for constraining hydraulic parameters and geometry-dependent pressure responses. 
Tracer experiments can then provide time-dependent information to constrain reactive-transport behavior, including effective dispersion and accessible flow pathways. 
The Pore2Chip platform can also support experiments involving solute transport, mineral embedding, oxygen-rich flow, drying, or other perturbations designed to probe coupled physical, chemical, and biogeochemical processes. 
Such experiments would introduce additional observables that could be incorporated into PFLOTRAN-based calibration and, subsequently, into geometry-encoded neural-operator surrogates. 
The existing simulation dataset already retains tracer-concentration fields, providing a natural pathway to extend the present pressure-focused GeONet formulation toward coupled flow--transport prediction.

This integration suggests a broader experimental--computational workflow in which (i) XCT characterizes three-dimensional structural change, (ii) pore-network analysis identifies the geometric descriptors most affected by disturbance, (iii) Pore2Chip reproduces those characteristics in controlled laboratory platforms, (iv) PFLOTRAN provides mechanistic simulations of the associated flow and transport processes, and (v) GeONet provides a computationally efficient surrogate for repeated model evaluation. 
This workflow would enable experimental observations to constrain PFLOTRAN process model parameters, while simulations provide physically interpretable connections between pore-scale structural changes and system-scale responses. 
In the longer term, repeated interaction among experiments, high-fidelity simulations, and neural-operator predictions could support uncertainty analysis, inverse modeling, experimental design, and digital-twin applications for dynamically evolving porous-media systems.

%% file: sections/s5_conclusions.tex
\section{Conclusions}
\label{Sec:S5_conclusions}

This work introduced GeONet, a geometry-encoded neural operator for predicting pressure-driven flow in heterogeneous microfluidic porous-media domains. The central architectural contribution is the explicit separation of domain-scale geometry, hydraulic material properties, and query-local spatial information. A dedicated Leaf Network uses a multiscale spatial encoder to represent the complete binary solid--void geometry, while the Branch Network represents porosity and permeability, and the Trunk Network represents spatial coordinates together with local geometric descriptors and random Fourier features. These complementary representations are integrated through the DeepMixer, allowing pressure predictions to depend jointly on the hydraulic state, pore-space organization, and spatial location.
The numerical results establish the consistency of the reference data and the predictive performance of the proposed framework. Direct comparison of absolute, unscaled PFLOTRAN and pore-scale Stokes pressure fields for three representative RCP geometries yielded Pearson correlation coefficients of 0.9870--0.9911 and coefficients of determination of $R^2=0.9743$--$0.9823$. These comparisons were performed on common pore-space grid cells without pressure-offset subtraction, normalization, or pressure-drop scaling, supporting the use of the PFLOTRAN pressure fields as reference data for surrogate development.

Under matched data partitions and the two-stage HPO protocol, GeONet consistently attained lower validation objectives and substantially higher test-set pressure accuracy than the boundary-condition-aware DeepONet baseline. For test simulations 91, 95, and 99, GeONet achieved NRMSE values of 0.0381--0.0500, MAPE values of 1.76--2.59\%, and $R^2$ values of 0.9673--0.9822, compared with NRMSE values of 0.1483--0.2040, MAPE values of 7.81--10.45\%, and $R^2$ values of 0.5658--0.7126 for DeepONet. The spatial, distributional, and profile analyses further show that GeONet more accurately reproduces the pressure magnitude, irregular pressure contours, broad pressure distribution, and localized geometry-dependent variations observed in the PFLOTRAN reference solutions.
GeONet also maintained strong predictive performance on three additional higher-index RCP geometries excluded from model training and selection. For simulations 460--462, the model achieved NRMSE values of 0.0448--0.0676, MAPE values of 2.29--3.42\%, and $R^2$ values of 0.9511--0.9770. These results indicate that the learned operator generalizes to previously unseen RCP configurations within the geometry class represented in the training distribution. The results therefore support treating solid--void geometry as an explicit model input rather than relying solely on globally aggregated material properties and spatial coordinates to represent geometry-dependent pressure behavior.

The computational analysis further demonstrates favorable data efficiency and training scalability. Across the Tahoma and Deception HPC platforms, models trained with only 20\% of the available training data already achieved mean $R^2$ values of approximately 0.976--0.978 and mean MAPE values of approximately 2.1--2.2\%. Using the complete training set improved overall test performance, yielding mean $R^2$ values near 0.98 and MAPE values near 1.8\%. GPU acceleration reduced full-data, 100-epoch training from approximately 38.4 h to 3.6 h on Tahoma and from 61.1 h to 3.3 h on Deception. The close agreement between GPU- and CPU-trained models also indicates that the primary benefit of GPU execution is reduced wall-clock training time rather than a systematic change in converged predictive accuracy.
The high computational cost of generating the approximately 5000 PFLOTRAN realizations (about 13 million CPU-core-hours and nearly 30 TB of simulation data) provides an important motivation for reusable operator-learning surrogates. Once trained, GeONet evaluates the learned pressure operator without repeatedly solving the high-fidelity numerical flow problem. However, the present study did not conduct a controlled inference-latency benchmark; therefore, a numerical inference speedup relative to PFLOTRAN is claimed. Such a future benchmarking will be important for establishing performance under specified hardware, spatial resolution, and batch size conditions.

The present results are limited to supervised, deterministic prediction of steady-state pressure within the RCP geometry and physical-condition classes represented in the training data. Future work will extend the framework to the tracer-concentration fields already retained in the simulation dataset, investigate physics- and conservation-informed objectives and uncertainty quantification, and directly quantify inference latency and scalability. A further direction is integration with experimentally derived geometries from X-ray computed tomography, chemical imaging, and Pore2Chip platforms, providing a pathway toward experimentally constrained flow--transport prediction and multiphysics model calibration.
Overall, the results demonstrate that explicitly incorporating domain-scale solid--void structure together with material properties and query-local geometric information provides an effective operator-learning framework for heterogeneous porous-media flow. GeONet achieves accurate pressure reconstruction across held-out RCP geometries while retaining favorable data efficiency and computational training characteristics, providing a foundation for repeated simulation, uncertainty analysis, inverse modeling, experimental design, and future digital-twin applications.

%% file: sections/s6_math_appendix.tex
\section{Mathematical formulation and analysis of GeONet}
\label{app:geonet_math}

\setcounter{equation}{0}
\renewcommand{\theequation}{A-\arabic{equation}}

This appendix provides a mathematical formulation of the GeONet model. 
The goal is not to introduce a generic DeepONet analysis, but to formalize the geometry-conditioned operator that is specific to GeONet.
The model receives a binary solid--void geometry through the Leaf Network, heterogeneous porosity and permeability through the Branch Network, and a query coordinate through the Trunk Network. 
These inputs are fused through a gated bilinear module to approximate the pressure or transport state at the queried point.
The mathematics below describes the architecture used in the pressure-dominated experiments reported in the Sec.~\ref{Sec:S3_methods}. 
This formulation clarifies why explicit geometry encoding is required, how the GeONet hypothesis class differs from coordinate-centric operator networks, and how the reported pressure metrics relate to the underlying Darcy operator.

\subsection{Geometry-conditioned flow operator}
\label{app:operator_definition}

Let $\Omega\subset {\mathbb R}^d$, $d=2$ or $3$, denote the fixed chip bounding box. 
A voxelized Reduced Complexity Platform (RCP) geometry is represented by a binary mask
\begin{equation}
    m_G:\Omega_h\rightarrow \{0,1\},
\end{equation}
where $\Omega_h$ is the computational grid, $m_G(\mathbf{x})=1$ indicates a fluid-accessible cell, and $m_G(\mathbf{x})=0$ indicates a solid or impermeable cell. 
The corresponding fluid region is denoted by
\begin{equation}
    \Omega_f(G)=\{\mathbf{x}\in \Omega: m_G(\mathbf{x})=1\},
\end{equation}
with inlet boundary $\Gamma_{\rm in}(G)$, outlet boundary $\Gamma_{\rm out}(G)$, solid--fluid interface $\Gamma_s(G)$, and remaining no-flow boundary $\Gamma_{\rm nf}(G)$.

\begin{definition}[RCP state]
\label{def:rcp_state}
An RCP state is the tuple
\begin{equation}
    a_G = \left(m_G,\phi_G,k_G,\mathcal{B}_G\right),
\end{equation}
where $m_G$ is the binary geometry mask, $\phi_G$ is porosity, $k_G$ is intrinsic permeability, and $\mathcal{B}_G$ denotes boundary information, including inlet, outlet, and no-flow boundaries.
In the simulation dataset described in Sec.~\ref{Sec:S2_data_generation}, $\mathcal{B}_G$ is fixed up to the geometry-induced pore mask, while $m_G$, $\phi_G$, and $k_G$ vary across samples.
\end{definition}

Let $\mathbf{s}=(\mathbf{x},t)$ denote a query point in space--time.
For pressure-only steady-flow evaluation, $\mathbf{s}=\mathbf{x}$ is sufficient.
The time coordinate is retained to keep the notation compatible with the pressure--tracer data representation for our future work.
The physical solution operator is
\begin{equation}
    \mathcal{T}^{\dagger}: a_G \mapsto y_G,
    \qquad
    y_G(\mathbf{s})=\left(p_G(\mathbf{s}),c_G(\mathbf{s})\right),
\end{equation}
where $p_G$ is pressure and $c_G$ is tracer concentration.
The pressure simulations in the Sec.~\ref{Sec:S2_data_generation} evaluate the scalar component $p_G$.

\begin{definition}[GeONet target operator]
\label{def:geonet_target_operator}
The GeONet target is the coordinate-evaluation operator
\begin{equation}
    \mathcal{G}^{\dagger}(m_G,\phi_G,k_G,\mathbf{s}) = y_G(\mathbf{s}),
\end{equation}
where $y_G=\mathcal{T}^{\dagger}(a_G)$.
Thus GeONet is not an image-to-image interpolator only, it is an evaluator of the geometry-conditioned solution field at arbitrary grid or off-grid query coordinates.
\end{definition}

The supervised dataset can therefore be written as
\begin{equation}
    \mathcal{D}_N=\left\{\left(m_{G_j},\phi_{G_j},k_{G_j},\mathbf{s}_{j\ell},y_{G_j}(\mathbf{s}_{j\ell})\right):
    j=1,\ldots,N_G,\;\ell=1,\ldots,N_j\right\}.
\end{equation}
The coordinate-sampling strategy used in training corresponds to drawing a subset of the indices $\ell$ for each geometry $G_j$ rather than presenting the full grid at every optimization step.

\subsection{Assumptions for admissible GeONet inputs}
\label{app:axioms}

The following assumptions define the mathematical class of geometries and material fields for the analysis and training presented in Sec.~\ref{Sec:S3_methods}. 
They are aligned with the RCP pressure-prediction setting rather than with a fully general neural-operator problem.

\begin{assumption}[Inlet--outlet connectivity:]
\label{ax:connectivity}
Each admissible geometry contains at least one connected fluid path from $\Gamma_{\rm in}(G)$ to $\Gamma_{\rm out}(G)$. Solid cells may create isolated dead-end regions and stagnant regions, but the pressure problem is evaluated on the connected pore component that communicates with the inlet and outlet.
\end{assumption}

\begin{assumption}[Bounded hydraulic coefficients:]
\label{ax:coefficients}
On the connected fluid region, porosity and permeability satisfy
\begin{equation}
    0<\phi_{\min}\leq \phi_G(\mathbf{x})\leq \phi_{\max}\leq 1,
    \qquad
    0<k_{\min}\leq k_G(\mathbf{x})\leq k_{\max}<\infty .
\end{equation}
As a fixed-domain formulation is used, solid cells are assigned a residual permeability $k_s$ with $0<k_s\ll k_{\min}$, or are masked out of the pressure loss and residual evaluations.
\end{assumption}

\begin{assumption}[Pressure-drop normalization:]
\label{ax:normalization}
Pressure is compared after removing the arbitrary additive constant and enforcing a common inlet--outlet pressure drop. 
Equivalently, pressure is evaluated modulo constants unless Dirichlet inlet and outlet values are specified. 
The normalized comparison used in our analysis fixes the outlet mean and inlet--outlet pressure drop.
\end{assumption}

\begin{assumption}[Distribution of geometries:]
\label{ax:distribution}
Training and test geometries are drawn from a probability distribution $\rho_G$ supported on admissible RCP masks. Cross-domain generalization means that a test geometry is not seen during training but remains in, or near, the support of $\rho_G$ under the geometry metric defined below.
\end{assumption}

\begin{assumption}[Nonredundant geometry:]
\label{ax:nonredundant_geometry}
The binary geometry mask is not assumed to be recoverable from the scalar porosity and permeability values available to the Branch Network. Therefore, $m_G$ is treated as a distinct physical input rather than as a redundant feature.
\end{assumption}

Assumption~\ref{ax:nonredundant_geometry} is central to GeONet. 
It mathematically separates the role of the Leaf Network from the Branch Network: porosity and permeability encode fluid storage and conductance, while the geometry mask encodes pore connectivity, pore boundaries, and blocked pathways.

\subsection{Weak pressure formulation and well-posedness}
\label{app:weak_form}

For a fixed geometry $G$, define the hydraulic conductivity
\begin{equation}
    \kappa_G(\mathbf{x})=\frac{k_G(\mathbf{x})}{\mu}
\end{equation}
on $\Omega_f(G)$.
Note that each geometry has a different solid–void mask; the fluid domain, pore connectivity, and boundary layout can change.
These geometries are binary solid–void masks, in which void regions are fluid-flow pathways and solid regions constrain the flow.
Pressure is prescribed at the inlet/outlet boundaries, with no flow elsewhere.
For steady incompressible Darcy flow with no sources and negligible gravity, the pressure satisfies
\begin{equation}
    -\nabla\cdot\left(\kappa_G\nabla p_G\right)=0
    \qquad \mathrm{in}\;\Omega_f(G),
\end{equation}
with prescribed pressure on $\Gamma_{\rm in}(G)$ and $\Gamma_{\rm out}(G)$ and no-flow conditions on $\Gamma_{\rm nf}(G)\cup\Gamma_s(G)$.

Let $p_D$ be any function satisfying the prescribed inlet and outlet pressures.
Define
\begin{equation}
    V_{0,G}=\left\{v\in H^1(\Omega_f(G)): v=0\;\mathrm{on}\;\Gamma_{\rm in}(G)\cup\Gamma_{\rm out}(G)\right\}.
\end{equation}
where $H^1(\Omega_G)$ is the Sobolev space of square-integrable functions whose gradients are also square-integrable.
The weak pressure problem is: find $p_G=p_D+w_G$, with $w_G\in V_{0,G}$, such that
\begin{equation}
    \int_{\Omega_f(G)} \kappa_G\nabla w_G\cdot \nabla v\,d\mathbf{x}
    =-
    \int_{\Omega_f(G)} \kappa_G\nabla p_D\cdot \nabla v\,d\mathbf{x}
    \qquad \forall v\in V_{0,G}.
\end{equation}
$(w_G \in V_{0,G})$ means: $(w_G)$ is an admissible test function for geometry $(G)$, and it vanishes on the pressure-prescribed boundaries.

\begin{theorem}[Well-posedness of the geometry-conditioned pressure map:]
\label{thm:wellposedness}
Under Assumptions~\ref{ax:connectivity}--\ref{ax:normalization}, the weak Darcy pressure problem has a unique solution in $p_D+V_{0,G}$.
Moreover, there exists a constant $C_G$ depending on the geometry, boundary partition, and coefficient bounds such that
\begin{equation}
    \|p_G\|_{H^1(\Omega_f(G))}
    \leq C_G\, |\Delta p|,
\end{equation}
where $\Delta p=p_{\mathrm{in}}-p_{\mathrm{out}}$ is the imposed pressure drop.
\end{theorem}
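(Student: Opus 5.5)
The argument is the Lax--Milgram theorem applied on the Hilbert space $V_{0,G}$, followed by a lifting estimate for the Dirichlet data. We work on the connected pore component $\Omega_f(G)$ singled out by Assumption~\ref{ax:connectivity}, a bounded Lipschitz domain. On it we define
\begin{equation*}
a_G(w,v)=\int_{\Omega_f(G)}\kappa_G\nabla w\cdot\nabla v\,d\mathbf{x},
\qquad
\ell_G(v)=-\int_{\Omega_f(G)}\kappa_G\nabla p_D\cdot\nabla v\,d\mathbf{x}.
\end{equation*}
Continuity follows from Assumption~\ref{ax:coefficients} and Cauchy--Schwarz. Indeed, $|a_G(w,v)|\le (k_{\max}/\mu)\|w\|_{H^1}\|v\|_{H^1}$. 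Likewise $|\ell_G(v)|\le (k_{\max}/\mu)\|\nabla p_D\|_{L^2}\|v\|_{H^1}$.

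Coercivity is the main obstacle, and it rests on a Poincar\'e--Friedrichs inequality on $V_{0,G}$. We need
\begin{equation*}
\|v\|_{L^2(\Omega_f(G))}\le C_P(G)\,\|\nabla v\|_{L^2(\Omega_f(G))}\qquad \forall v\in V_{0,G}.
\end{equation*}
This holds because $\Omega_f(G)$ is connected and the Dirichlet set $\Gamma_{\rm in}(G)\cup\Gamma_{\rm out}(G)$ has positive surface measure; Assumption~\ref{ax:connectivity} guarantees that the component touches both boundaries. We prove it by the standard compactness argument. Suppose a sequence $v_n$ has $\|v_n\|_{L^2}=1$ and $\|\nabla v_n\|_{L^2}\to 0$. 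By Rellich, $v_n$ converges in $L^2$ to a function with zero gradient. That limit is constant on the connected domain, and by continuity of the trace it vanishes on the Dirichlet portion, so it is zero. This contradicts $\|v_n\|_{L^2}=1$.

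Dead-end pockets that are disconnected from both the inlet and the outlet have no Dirichlet set, so the inequality fails there. Assumption~\ref{ax:connectivity} handles this by excluding them, or alternatively one uses the residual permeability $k_s>0$ of Assumption~\ref{ax:coefficients} on the fixed box $\Omega$. With the inequality in hand we get
\begin{equation*}
a_G(v,v)\ge \frac{k_{\min}}{\mu(1+C_P^2)}\|v\|_{H^1}^2 .
\end{equation*}
Lax--Milgram then gives a unique $w_G\in V_{0,G}$ with
\begin{equation*}
\|w_G\|_{H^1}\le \frac{k_{\max}}{k_{\min}}(1+C_P^2)\,\|\nabla p_D\|_{L^2}.
\end{equation*}

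Uniqueness of $p_G=p_D+w_G$ must not depend on the choice of lift. Take two lifts $p_D$ and $p_D'$; their difference lies in $V_{0,G}$. Both resulting pressures satisfy the same homogeneous problem in their difference, so the difference is zero by coercivity. Assumption~\ref{ax:normalization} fixes the additive constant through the Dirichlet values.

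The remaining task is to choose a lift whose size scales with $|\Delta p|$. The constant offset $p_{\rm out}$ has zero gradient and only contributes a fixed constant to $p_G$. Assumption~\ref{ax:normalization} removes this constant by fixing the outlet level, so we may take $p_{\rm out}=0$. We then set $p_D=\Delta p\,\chi_G$, where $\chi_G\in H^1(\Omega_f(G))$ equals $1$ on $\Gamma_{\rm in}(G)$ and $0$ on $\Gamma_{\rm out}(G)$. Such a $\chi_G$ exists because the inlet and outlet are separated portions of the boundary. Concretely, one can take a smooth cutoff in the flow coordinate $x$, restricted to $\Omega_f(G)$, since the inlet and outlet sit on opposite faces of the box. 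The triangle inequality then yields the stated bound
\begin{equation*}
\|p_G\|_{H^1}\le C_G|\Delta p|,
\qquad
C_G=\|\chi_G\|_{H^1}\Bigl(1+\tfrac{k_{\max}}{k_{\min}}(1+C_P(G)^2)\Bigr).
\end{equation*}
This $C_G$ depends exactly on the geometry, through $C_P(G)$ and $\chi_G$, on the boundary partition, and on the coefficient bounds, as the theorem claims.
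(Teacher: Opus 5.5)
Your proof is correct and follows essentially the same route as the paper. Both arguments normalize $p_{\rm out}=0$, split $p_G=p_D+w_G$, apply Lax--Milgram on $V_{0,G}$ with coercivity supplied by the Poincar\'e inequality, and combine the bounds on $w_G$ and the lift via the triangle inequality. The differences are only in execution: you prove the Poincar\'e inequality by Rellich compactness, which makes explicit why pore components not touching $\Gamma_{\rm in}\cup\Gamma_{\rm out}$ must be excluded, and you build the lift explicitly as $\Delta p\,\chi_G$ instead of citing the abstract $H^{1/2}$ trace-lifting theorem; this is more elementary and gives the same estimate up to constants.
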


\begin{proof}
Let $g_D$ denote the prescribed pressure data on $\Gamma_{D,G}$:
\begin{equation}
    g_D =
    \begin{cases}
    p_{\mathrm{in}}, & \text{on } \Gamma_{\mathrm{in},G},\\
    p_{\mathrm{out}}, & \text{on } \Gamma_{\mathrm{out},G}.
    \end{cases}
\end{equation}
Because pressure is defined only up to an additive constant for incompressible flow, we may subtract the outlet pressure and write
\begin{subequations}
\begin{align}
    p_{\mathrm{out}}&=0,\\
    p_{\mathrm{in}}&=\Delta p.
\end{align}
\end{subequations}
By the trace and lifting theorem on the Lipschitz domain $\Omega_f(G)$ \cite{mudunuru2016enforcing}, there exists a lift $p_D\in H^1(\Omega_f(G))$ such that
\begin{equation}
    T_G p_D = g_D
    \qquad
    \text{on } \Gamma_{D,G},
\end{equation}
and
\begin{equation}
    \|p_D\|_{H^1(\Omega_f(G))}
    \le
    C_{\mathrm{lift},G}
    \|g_D\|_{H^{1/2}(\Gamma_{D,G})}.
\end{equation}
Since the boundary data are determined by the imposed pressure drop,
\begin{equation}
    \|g_D\|_{H^{1/2}(\Gamma_{D,G})}
    \le
    C_{\Gamma,G} |\Delta p|,
\end{equation}
and therefore
\begin{equation}
    \|p_D\|_{H^1(\Omega_f(G))}
    \le
    C_{\mathrm{lift},G}C_{\Gamma,G}|\Delta p|.
\end{equation}

We write the pressure solution as
\begin{equation}
    p_G = p_D + w_G,
    \qquad
    w_G \in V_{0,G}.
\end{equation}
The weak Darcy problem is then equivalent to finding $w_G\in V_{0,G}$ such that
\begin{equation}
    a_G(w_G,v)
    =
    -a_G(p_D,v)
    \qquad
    \forall v\in V_{0,G},
\end{equation}
where
\begin{equation}
    a_G(u,v)
    =
    \int_{\Omega_f(G)}
    \kappa_G \nabla u\cdot\nabla v\,d\mathbf{x}.
\end{equation}

The bilinear form $a_G$ is bounded on $H^1(\Omega_f(G))$.
Indeed, using $\kappa_G\le \kappa_{\max}$,
\begin{equation}
    |a_G(u,v)|
    \le
    \kappa_{\max}
    \|\nabla u\|_{L^2(\Omega_f(G))}
    \|\nabla v\|_{L^2(\Omega_f(G))}
    \le
    \kappa_{\max}
    \|u\|_{H^1(\Omega_f(G))}
    \|v\|_{H^1(\Omega_f(G))}.
\end{equation}

The bilinear form is also coercive on $V_{0,G}$.
Since $\kappa_G\ge \kappa_{\min}$,
\begin{equation}
    a_G(v,v)
    =
    \int_{\Omega_f(G)}
    \kappa_G |\nabla v|^2\,d\mathbf{x}
    \ge
    \kappa_{\min}
    \|\nabla v\|^2_{L^2(\Omega_f(G))}.
\end{equation}
Because functions in $V_{0,G}$ vanish on the inlet and outlet Dirichlet boundaries, the Poincar\'e inequality \cite{mudunuru2016enforcing} gives
\begin{equation}
    \|v\|_{L^2(\Omega_f(G))}
    \le
    C_{P,G}
    \|\nabla v\|_{L^2(\Omega_f(G))}.
\end{equation}
Hence
\begin{equation}
    \|v\|^2_{H^1(\Omega_f(G))}
    =
    \|v\|^2_{L^2(\Omega_f(G))}
    +
    \|\nabla v\|^2_{L^2(\Omega_f(G))}
    \le
    (1+C_{P,G}^2)
    \|\nabla v\|^2_{L^2(\Omega_f(G))},
\end{equation}
or equivalently,
\begin{equation}
    \|\nabla v\|^2_{L^2(\Omega_f(G))}
    \ge
    \frac{1}{1+C_{P,G}^2}
    \|v\|^2_{H^1(\Omega_f(G))}.
\end{equation}
Therefore,
\begin{equation}
    a_G(v,v)
    \ge
    \frac{\kappa_{\min}}{1+C_{P,G}^2}
    \|v\|^2_{H^1(\Omega_f(G))}.
\end{equation}
Thus $a_G$ is coercive on $V_{0,G}$.

The linear functional
\begin{equation}
    \ell_G(v) = -a_G(p_D,v)
\end{equation}
is bounded because
\begin{equation}
    |\ell_G(v)|
    =
    |a_G(p_D,v)|
    \le
    \kappa_{\max}
    \|\nabla p_D\|_{L^2(\Omega_f(G))}
    \|\nabla v\|_{L^2(\Omega_f(G))}
    \le
    \kappa_{\max}
    \|p_D\|_{H^1(\Omega_f(G))}
    \|v\|_{H^1(\Omega_f(G))}.
\end{equation}
The Lax--Milgram theorem therefore guarantees the existence and uniqueness of $w_G\in V_{0,G}$.
Consequently,
\begin{equation}
    p_G = p_D+w_G
\end{equation}
is the unique solution in $p_D+V_{0,G}$.
To derive the energy estimate, taking $v=w_G$ in the weak problem gives
\begin{equation}
    a_G(w_G,w_G)
    =
    -a_G(p_D,w_G).
\end{equation}
Using coercivity on the left and boundedness on the right,
\begin{equation}
    \kappa_{\min}
    \|\nabla w_G\|^2_{L^2(\Omega_f(G))}
    \le
    \kappa_{\max}
    \|\nabla p_D\|_{L^2(\Omega_f(G))}
    \|\nabla w_G\|_{L^2(\Omega_f(G))}.
\end{equation}
If $w_G\neq 0$, division by $\|\nabla w_G\|_{L^2(\Omega_f(G))}$ yields
\begin{equation}
    \|\nabla w_G\|_{L^2(\Omega_f(G))}
    \le
    \frac{\kappa_{\max}}{\kappa_{\min}}
    \|\nabla p_D\|_{L^2(\Omega_f(G))}.
\end{equation}
The Poincar\'e inequality then implies
\begin{equation}
    \|w_G\|_{H^1(\Omega_f(G))}
    \le
    \sqrt{1+C_{P,G}^2}
    \frac{\kappa_{\max}}{\kappa_{\min}}
    \|p_D\|_{H^1(\Omega_f(G))}.
\end{equation}
Therefore,
\begin{equation}
    \|p_G\|_{H^1(\Omega_f(G))}
    \le
    \|p_D\|_{H^1(\Omega_f(G))}
    +
    \|w_G\|_{H^1(\Omega_f(G))}.
\end{equation}
Combining the previous estimates gives
\begin{equation}
    \|p_G\|_{H^1(\Omega_f(G))}
    \le
    \left[
    1+
    \sqrt{1+C_{P,G}^2}
    \frac{\kappa_{\max}}{\kappa_{\min}}
    \right]
    C_{\mathrm{lift},G}C_{\Gamma,G}
    |\Delta p|.
\end{equation}
Thus,
\begin{equation}
    \|p_G\|_{H^1(\Omega_f(G))}
    \le
    C_G |\Delta p|,
\end{equation}
where
\begin{equation}
    C_G =
    \left[
    1+
    \sqrt{1+C_{P,G}^2}
    \frac{\kappa_{\max}}{\kappa_{\min}}
    \right]
    C_{\mathrm{lift},G}C_{\Gamma,G}.
\end{equation}
This completes the proof.
\end{proof}

Theorem~\ref{thm:wellposedness} justifies treating
\begin{equation}
    (m_G,\phi_G,k_G)\mapsto p_G
\end{equation}
as a deterministic operator on admissible geometries.
The role of GeONet is to learn the coordinate-evaluation form of this operator.

\subsection{Finite-volume graph formulation of RCP pressure flow: pore-space connectivity and edge transmissibility}
\label{app:graph_formulation}

For the pressure component, this two-point finite-volume discretization on a structured grid can be interpreted as a weighted graph problem \cite{karra2018modeling}.
Let $V_G$ be the set of fluid cells and let $E_G$ be the set of neighboring cell pairs connected through fluid-accessible faces.
Each edge $e=(i,j)$ has a transmissibility $\tau_e(G,k)$ determined by geometry, face area, distance, and permeability.
For a fixed geometry $G$, let $V_G$ denote the set of active pore-space cells or graph vertices in the discretized fluid domain.
The discrete pressure field is represented by the vector
\begin{equation}
    \mathbf{p}_G = (p_i)_{i\in V_G} \in \mathbb{R}^{|V_G|},
\end{equation}
where $p_i$ approximates the continuum pressure $p_G(\mathbf{x}_i)$ at the cell center or graph vertex $\mathbf{x}_i$.
The discrete pressure vector $\mathbf{p}_G\in {\mathbb R}^{|V_G|}$ solves
\begin{subequations}
\begin{align}
    A_G\mathbf{p}_G&=\mathbf{f}_G,\\
    A_G&=B_G^{T}C_GB_G + A_{D,G},
\end{align}
\end{subequations}
where $B_G$ is a signed incidence matrix, $C_G=\operatorname{diag}(\tau_e(G,k))$ is the edge-conductance matrix, and $A_{D,G}$ contains Dirichlet boundary contributions.

\begin{proposition}[The solid--void mask controls the discrete pressure operator:]
\label{prop:graph_geometry}
Let $m_G$ denote the binary solid--void mask associated with geometry $G$, where $m_G(i)=1$ indicates an active fluid cell and $m_G(i)=0$ indicates a solid cell.
In a cell-centered finite-volume discretization, the pressure field satisfies a graph-based linear system
\begin{equation}
    A_G \mathbf{p}_G = \mathbf{b}_G,
\end{equation}
where the graph $H_G=(V_G,E_G)$ is induced by the pore space.
The node set
\begin{equation}
    V_G=\{i:\,m_G(i)=1\}
\end{equation}
contains the active fluid cells, and the edge set $E_G$ contains pairs of neighboring fluid cells that share an unblocked face.
For an interior edge $(i,j)\in E_G$, the flux has the form
\begin{equation}
    q_{ij,G}
    =
    -T_{ij,G}\,(p_{j,G}-p_{i,G}),
\end{equation}
where $T_{ij,G}$ is the face transmissibility.
Hence the entries of the discrete pressure matrix satisfy
\begin{subequations}
\begin{align}
    (A_G)_{ij}&=-T_{ij,G},\qquad i\neq j,\\
    (A_G)_{ii}
    &=
    \sum_{j:(i,j)\in E_G} T_{ij,G}
    +
    T^{D}_{i,G},
\end{align}
\end{subequations}
where $T^{D}_{i,G}$ denotes any transmissibility contribution associated with Dirichlet boundary faces.

Changing the solid--void mask changes $V_G$, $E_G$, and therefore the sparsity pattern and weights of $A_G$.
Equivalently, on a fixed background grid, the mask introduces an edge indicator $\chi_{ij,G}$ such that
\begin{subequations}
\begin{align}
    T_{ij,G}
    &=
    \chi_{ij,G}
    \frac{|F_{ij}|}{d_{ij}}
    \left(\frac{k_G}{\mu}\right)^{H}_{ij},\\
    \chi_{ij,G}&\in\{0,1\}.
\end{align}
\end{subequations}
Here $|F_{ij}|$ is the shared face area, $d_{ij}$ is the distance between cell centers, and $\left(k_G/\mu\right)^{H}_{ij}$ is a harmonic average of the neighboring hydraulic conductivities.
If either cell is solid or if the face is blocked by the geometry, then $\chi_{ij,G}=0$ and the corresponding flux connection is removed.
Therefore, the binary mask affects the pressure solution through graph topology, boundary connectivity, and conductance weights, not merely through pointwise material values.
\end{proposition}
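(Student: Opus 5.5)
We would prove Proposition~\ref{prop:graph_geometry} constructively: take the two-point flux approximation (TPFA) finite-volume scheme for the steady Darcy problem of Theorem~\ref{thm:wellposedness}, assemble it on the fixed $200\times200$ background grid, and read off the matrix entries. Let $\mathcal{C}_i$ be the control volume of cell $i$. Integrating $-\nabla\cdot(\kappa_G\nabla p_G)=0$ over $\mathcal{C}_i$ and applying the divergence theorem turns the equation into a balance of face fluxes. On each face $F_{ij}$ shared with a neighbour $j$, we approximate the normal flux by the two-point formula. To get it, we enforce continuity of pressure and of normal flux at the face midpoint between the two half-cells of lengths $d_i,d_j$ with $d_i+d_j=d_{ij}$. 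Eliminating the face pressure gives
\begin{equation*}
q_{ij,G}=-T_{ij,G}(p_{j,G}-p_{i,G}),\qquad T_{ij,G}=\frac{|F_{ij}|}{d_{ij}}\Bigl(\frac{k_G}{\mu}\Bigr)^{H}_{ij},
\end{equation*}
where $(\cdot)^H_{ij}$ is the distance-weighted harmonic mean. This is the standard series-resistance computation and is routine.

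Next we incorporate the mask. Define $\chi_{ij,G}=m_G(i)\,m_G(j)$, further multiplied by $0$ for faces flagged as blocked. A face with $\chi_{ij,G}=0$ either touches the solid wall $\Gamma_s(G)$ or belongs to the no-flow boundary $\Gamma_{\rm nf}(G)$. On such faces the no-flow condition makes the flux vanish, so these faces contribute nothing to the balance. This justifies the factor $\chi_{ij,G}$ in $T_{ij,G}$.

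The balance equations are written only for $i\in V_G=\{i:m_G(i)=1\}$. For cells adjacent to the inlet or outlet, the Dirichlet face is handled with a half-cell transmissibility $T^D_{i,G}$ and the known boundary pressure. This produces the diagonal term $T^D_{i,G}p_i$ and moves $T^D_{i,G}p_{\rm in/out}$ into $\mathbf{b}_G$. Collecting coefficients of $p_j$ yields $(A_G)_{ij}=-T_{ij,G}$ for $i\neq j$ and $(A_G)_{ii}=\sum_{j}T_{ij,G}+T^D_{i,G}$. Writing the edge sum with the signed incidence matrix then shows $A_G=B_G^TC_GB_G+A_{D,G}$, which confirms consistency with the earlier display.

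The step requiring care is the treatment of solid cells, where Assumption~\ref{ax:coefficients} allows two conventions. Under the masked convention the argument above applies directly. Under the residual-permeability convention, $0<k_s\ll k_{\min}$, the harmonic mean across a fluid--solid face is $O(k_s)$. We would then show that $\chi$ is recovered in the limit $k_s\to0$, so the topology statement holds up to an $O(k_s/k_{\min})$ perturbation of the weights. We would state this explicitly rather than claim exact equality.

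Finally, for the claim that the mask changes the solution and not just the matrix, we would argue in two steps. First, $V_G$ and $E_G$ determine the sparsity pattern of $A_G$, so two masks differing at a cell give different matrices. Second, $A_G$ is nonsingular by the discrete analogue of the coercivity in Theorem~\ref{thm:wellposedness}. Assumption~\ref{ax:connectivity} ensures every component we solve on touches a Dirichlet face, which makes $A_G$ an irreducibly diagonally dominant M-matrix. Given invertibility, the solution $\mathbf{p}_G=A_G^{-1}\mathbf{b}_G$ depends on the mask through topology and weights. We would illustrate genuine dependence with a simple example: blocking a single throat edge changes the effective resistance between inlet and outlet. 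That dependence is not captured by the pointwise values of $\phi_G,k_G$, which is Assumption~\ref{ax:nonredundant_geometry}.
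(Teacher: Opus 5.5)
Your proposal is correct and takes essentially the same approach as the paper's proof: write mass conservation over each active fluid cell, use the two-point flux $q_{ij,G}=-T_{ij,G}(p_{j,G}-p_{i,G})$, and observe that the mask forces $T_{ij,G}=0$ (removing the edge) whenever a cell is solid or the face is blocked, which fixes $V_G$, $E_G$, and the sparsity pattern of $A_G$. You go further than the paper in three ways: you derive the harmonic-mean transmissibility, you treat the residual-permeability convention as an $O(k_s)$ perturbation, and you prove that $A_G$ is invertible, so that the solution-dependence claim, which the paper only asserts, is actually justified; strictly, the invertibility step needs isolated pore clusters discarded from $V_G$, as Assumption~\ref{ax:connectivity} permits.
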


\begin{proof}
For a finite-volume discretization of Darcy flow, conservation of mass in an active fluid cell $i\in V_G$ gives
\begin{equation}
    \sum_{j:(i,j)\in E_G} q_{ij,G} + q^{D}_{i,G} = f_{i,G},
\end{equation}
where $q_{ij,G}$ is the flux between neighboring fluid cells, $q^{D}_{i,G}$ is the contribution from pressure-prescribed boundary faces, and $f_{i,G}$ contains sources or sinks.
The two-point flux approximation gives
\begin{equation}
    q_{ij,G}
    =
    -T_{ij,G}\,(p_{j,G}-p_{i,G}).
\end{equation}
Thus cell $j$ contributes to the equation for cell $i$ only if the edge $(i,j)$ exists and has positive transmissibility.
The solid--void mask determines whether this edge exists.
If either cell is solid, or if the common face is blocked by a solid barrier, then no physical fluid flux can cross that face.
In that case,
\begin{equation}
    T_{ij,G}=0,
\end{equation}
which is equivalent to removing the edge $(i,j)$ from the graph.
If both cells are fluid and the face is open, then $T_{ij,G}>0$ and its magnitude is controlled by the face geometry, the cell spacing, viscosity, and permeability.
Therefore, the mask determines the active degrees of freedom $V_G$, the neighbor connections $E_G$, and the sparsity pattern of the discrete operator $A_G$.
Permeability controls the magnitude of the retained conductances, but the mask determines which conductances exist in the first place.
Hence the pressure field depends on geometry through the topology and weights of the finite-volume graph.
\end{proof}

This graph interpretation clarifies the role of the Leaf Network in GeONet.
The Branch Network encodes material fields such as porosity and permeability, which influence the magnitudes of local conductances.
However, these fields alone generally do not specify the graph incidence structure: two neighboring cells with similar porosity and permeability values may be either hydraulically connected by an open pore throat or disconnected by a solid barrier.
These two cases produce different transmissibilities, different sparsity patterns in $A_G$, and therefore different pressure solutions.
The Leaf Network supplies the missing structural information by encoding the binary geometry mask, enabling GeONet to condition its pressure prediction on both material properties and pore-space connectivity.

\subsection{GeONet hypothesis class: geometry-conditioned operator representation}
\label{app:hypothesis_class}
This section formalizes the GeONet hypothesis class, defined as the family of geometry-conditioned solution operators representable by the proposed Leaf--Branch--Trunk architecture.
For the RCP pressure problem, the target map is not a conventional coefficient-to-solution operator depending only on porosity, permeability, and spatial coordinates.
The solid--void mask $m_G$ defines the fluid subdomain $\Omega_f(G)$, the pressure and no-flow boundary partitions, and the finite-volume connectivity graph through which pressure equilibrates.
As a result, two samples with similar local values of $\phi_G$ and $k_G$ may produce different pressure responses if their pore-space connectivity or nearby solid barriers differ.
The appropriate target map is therefore a geometry-conditioned operator,
\begin{equation}
    \mathcal{G}^{\dagger}
    :
    (m_G,\phi_G,k_G,\mathbf{s})
    \longmapsto
    y_G(\mathbf{s}),
\end{equation}
where $y_G$ denotes the pressure or transport response evaluated at the query coordinate $\mathbf{s}$.
This viewpoint motivates the defining feature of GeONet: the geometry mask is encoded by a dedicated Leaf Network before being fused with material and coordinate features.
The resulting hypothesis class can therefore represent operators whose coordinate-wise predictions depend explicitly on pore topology, connectivity, and boundary-induced flow constraints.

Let $G$ denote an admissible RCP geometry, and let
\begin{equation}
    m_G:\Omega\rightarrow\{0,1\}
\end{equation}
be its binary solid--void mask, where $m_G(\mathbf{x})=1$ denotes fluid-accessible pore space and $m_G(\mathbf{x})=0$ denotes solid material.
Let
\begin{subequations}
\begin{align}
    \phi_G&:\Omega\rightarrow\mathbb{R},\\
    k_G&:\Omega\rightarrow\mathbb{R}
\end{align}
\end{subequations}
denote the porosity and permeability fields associated with geometry $G$.
For a query coordinate
\begin{equation}
    \mathbf{s}=(x,y,z,t)\in\mathcal{S},
\end{equation}
the target solution is written as
\begin{equation}
    y_G(\mathbf{s})
    =
    \mathcal{G}^{\dagger}(m_G,\phi_G,k_G)(\mathbf{s}),
\end{equation}
where $y_G$ may represent pressure alone, $y_G=p_G$, or a vector of physical outputs, for example $y_G=(p_G,c_G)$ for pressure and tracer concentration.

GeONet approximates this geometry-conditioned solution operator using three input streams.
The Leaf Network encodes the solid--void geometry, the Branch Network encodes material properties, and the Trunk Network encodes the query coordinate.
Let
\begin{equation}
    L_{\theta_L},\qquad B_{\theta_B},\qquad T_{\theta_T}
\end{equation}
denote the Leaf, Branch, and Trunk networks, respectively.
Let $\gamma(\mathbf{s})$ be a Fourier-feature embedding of the query coordinate.
The latent features are
\begin{subequations}
\begin{align}
    \boldsymbol{\ell}_G(\mathbf{s})
    &=
    L_{\theta_L}(m_G;\mathbf{s}),\\
    \mathbf{b}_G
    &=
    B_{\theta_B}(\phi_G,k_G),\\
    \boldsymbol{\tau}(\mathbf{s})
    &=
    T_{\theta_T}(\gamma(\mathbf{s})).
\end{align}
\end{subequations}
Here $\boldsymbol{\ell}_G(\mathbf{s})$ denotes the geometry feature available at the query coordinate.
This notation covers two common implementations.
In the first, the Leaf Network outputs a global geometry latent vector, in which case $\boldsymbol{\ell}_G(\mathbf{s})=\boldsymbol{\ell}_G$ is constant with respect to $\mathbf{s}$.
In the second, the Leaf Network outputs a spatial feature map, and $\boldsymbol{\ell}_G(\mathbf{s})$ is obtained by sampling or interpolating that feature map at the query coordinate.
The second form is particularly natural for pore-scale pressure prediction because it allows the geometry representation to vary with spatial location.

With these definitions, the GeONet hypothesis class is
\begin{equation}
\mathcal{H}_{\mathrm{GeONet}}
=
\left\{
\widehat{\mathcal{G}}_{\theta}:
(m_G,\phi_G,k_G,\mathbf{s})
\mapsto
D_{\theta_D}
\left[
M_{\theta_M}
\left(
L_{\theta_L}(m_G;\mathbf{s}),
B_{\theta_B}(\phi_G,k_G),
T_{\theta_T}(\gamma(\mathbf{s}))
\right)
\right]
:
\theta\in\Theta
\right\},
\end{equation}
where
\begin{equation}
    \theta =
    (\theta_L,\theta_B,\theta_T,\theta_M,\theta_D)
\end{equation}
collects all trainable parameters, $M_{\theta_M}$ is the feature-fusion module, and $D_{\theta_D}$ is the decoder.
Thus, each element of $\mathcal{H}_{\mathrm{GeONet}}$ is a candidate geometry-conditioned operator that maps a solid--void mask, material fields, and a coordinate to a predicted physical response.

A representative gated bilinear fusion module can be written as follows.
First, the three latent streams are projected to a common dimension $r$:
\begin{subequations}
\begin{align}
    \widetilde{\boldsymbol{\ell}}_G(\mathbf{s})
    &=
    P_{\ell}\boldsymbol{\ell}_G(\mathbf{s}),\\
    \widetilde{\mathbf{b}}_G
    &=
    P_b\mathbf{b}_G,\\
    \widetilde{\boldsymbol{\tau}}(\mathbf{s})
    &=
    P_{\tau}\boldsymbol{\tau}(\mathbf{s}),
\end{align}
\end{subequations}
where
\begin{equation}
    \widetilde{\boldsymbol{\ell}}_G(\mathbf{s}),
    \widetilde{\mathbf{b}}_G,
    \widetilde{\boldsymbol{\tau}}(\mathbf{s})
    \in \mathbb{R}^r.
\end{equation}
The gate is then defined by
\begin{equation}
    \mathbf{g}_G(\mathbf{s})
    =
    \sigma
    \left(
    W_g
    \left[
    \widetilde{\boldsymbol{\ell}}_G(\mathbf{s}),
    \widetilde{\mathbf{b}}_G,
    \widetilde{\boldsymbol{\tau}}(\mathbf{s})
    \right]
    +
    \mathbf{c}_g
    \right),
\end{equation}
where $\sigma$ is applied componentwise.
The fused representation is
\begin{equation}
\begin{aligned}
    \mathbf{h}_G(\mathbf{s})
    =
    &\;
    \mathbf{g}_G(\mathbf{s})
    \odot
    U_{\ell\tau}
    \left(
    \widetilde{\boldsymbol{\ell}}_G(\mathbf{s})
    \odot
    \widetilde{\boldsymbol{\tau}}(\mathbf{s})
    \right)
    \\[1mm]
    &+
    \left(
    \mathbf{1}-\mathbf{g}_G(\mathbf{s})
    \right)
    \odot
    U_{b\tau}
    \left(
    \widetilde{\mathbf{b}}_G
    \odot
    \widetilde{\boldsymbol{\tau}}(\mathbf{s})
    \right)
    \\[1mm]
    &+
    U_{\ell b}
    \left(
    \widetilde{\boldsymbol{\ell}}_G(\mathbf{s})
    \odot
    \widetilde{\mathbf{b}}_G
    \right)
    \\[1mm]
    &+
    U_{\ell b\tau}
    \left(
    \widetilde{\boldsymbol{\ell}}_G(\mathbf{s})
    \odot
    \widetilde{\mathbf{b}}_G
    \odot
    \widetilde{\boldsymbol{\tau}}(\mathbf{s})
    \right).
\end{aligned}
\end{equation}
where \(\odot\) means the element-wise (Hadamard) product. The predicted output is then
\begin{equation}
    \widehat{y}_{\theta}(\mathbf{s};m_G,\phi_G,k_G)
    =
    D_{\theta_D}\left(\mathbf{h}_G(\mathbf{s})\right).
\end{equation}
The exact channel widths, nonlinearities, and linear maps used in an implementation may vary.
Please see the data availability and software section, which provides details on the GeONet code availability.
The defining property is the same: GeONet forms multiplicative interactions among geometry, material state, and query coordinates before decoding the physical response.

\begin{definition}[Geometry-coordinate cross-interaction]
\label{def:cross_interaction}
A predictor
\begin{equation}
    \widehat{y}
    =
    F(\boldsymbol{\ell},\mathbf{b},\boldsymbol{\tau})
\end{equation}
has a geometry-coordinate cross-interaction if the sensitivity of the output with respect to the coordinate latent variable depends on the geometry latent variable.
Equivalently,
\begin{equation}
    D_{\boldsymbol{\ell}}
    \left(
    D_{\boldsymbol{\tau}} F
    \right)
    \neq 0
\end{equation}
for at least some admissible inputs.
For scalar outputs, this condition reduces to the existence of a nonzero mixed derivative
\begin{equation}
    \frac{\partial^2 \widehat{y}}
    {\partial \ell_i\,\partial \tau_j}
    \neq 0
\end{equation}
for some pair of latent coordinates $(i,j)$.
\end{definition}

This condition is important for the RCP pressure problem.
If changing the geometry only shifts a global latent vector but does not alter the coordinate sensitivity of the prediction, then the model has limited ability to represent geometry-dependent pressure gradients.
In contrast, a geometry-coordinate cross-interaction allows the same coordinate $\mathbf{s}$ to be interpreted differently depending on whether it lies near an open channel, a pore throat, a dead-end region, or a solid barrier.

\begin{proposition}[Gated bilinear fusion induces geometry-coordinate cross-interactions]
\label{prop:geonet_cross_interaction}
Assume that the decoder and fusion weights are not identically zero.
Then the GeONet fusion module defined above contains geometry-coordinate cross-interactions.
\end{proposition}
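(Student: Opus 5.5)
The plan is to compute the mixed derivative of Definition~\ref{def:cross_interaction} directly for the fused representation $\mathbf{h}_G(\mathbf{s})$. We then push it through the decoder with the chain rule. Write $\widetilde{\boldsymbol{\ell}},\widetilde{\mathbf{b}},\widetilde{\boldsymbol{\tau}}$ for the projected streams. Since $P_\ell,P_b,P_\tau$ are linear, $D_{\boldsymbol{\ell}}(D_{\boldsymbol{\tau}}F)$ equals $P_\tau^{T}\big[\partial^2 F/\partial\widetilde{\boldsymbol{\ell}}\,\partial\widetilde{\boldsymbol{\tau}}\big]P_\ell$. It therefore suffices to exhibit one admissible input at which the projected mixed Hessian is nonzero, with nonzero projections. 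I will treat the nondegeneracy hypothesis as saying that $P_\ell$, $P_\tau$, $U_{\ell\tau}$ and the decoder are not identically zero. I will also choose (or assume) the relevant rows and columns to be compatible, so that the contraction does not vanish.

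First, isolate the term $\mathbf{g}\odot U_{\ell\tau}(\widetilde{\boldsymbol{\ell}}\odot\widetilde{\boldsymbol{\tau}})$. Differentiating its $k$-th component in $\widetilde\tau_j$ and then $\widetilde\ell_i$ gives $g_k (U_{\ell\tau})_{ki}\delta_{ij}$ plus terms containing derivatives of the gate. The first part is nonzero whenever $(U_{\ell\tau})_{ki}\neq0$, because $g_k=\sigma(\cdot)\in(0,1)$ is strictly positive. The gate-derivative terms are $\sigma'\cdot(W_g)$ factors multiplied by bilinear expressions of the form $\widetilde\ell\odot\widetilde\tau$ or $\widetilde b\odot\widetilde\tau$. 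I will evaluate at a point where those factors vanish, namely $\widetilde{\boldsymbol{\ell}}=\widetilde{\boldsymbol{\tau}}=\mathbf{0}$ with $\widetilde{\mathbf{b}}$ arbitrary, or alternatively at a small perturbation of it. The contributions from the $U_{b\tau}$ branch, from $U_{\ell b}$, and from the gate cross-terms then reduce to explicit quantities. The trilinear term $U_{\ell b\tau}$ contributes $(U_{\ell b\tau})_{ki}\widetilde b_i\delta_{ij}$, which can be switched off by setting $\widetilde{\mathbf{b}}=\mathbf{0}$, or used as an independent knob. Choosing two values of $\widetilde{\mathbf{b}}$ shows that the total mixed derivative of $h_k$ cannot vanish for both values unless $U_{\ell\tau}$ vanishes. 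The total is affine in $\widetilde{\mathbf{b}}$, and its constant part is $\tfrac12(U_{\ell\tau})_{ki}$ plus gate corrections that are zero at the chosen point.

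Next, compose with the decoder. By the chain rule,
\[
\frac{\partial^2\widehat y}{\partial\widetilde\ell_i\,\partial\widetilde\tau_j}
=\nabla D(\mathbf{h})\cdot\frac{\partial^2\mathbf{h}}{\partial\widetilde\ell_i\,\partial\widetilde\tau_j}
+\Big(\frac{\partial\mathbf{h}}{\partial\widetilde\ell_i}\Big)^{T}\nabla^2D(\mathbf{h})\,\frac{\partial\mathbf{h}}{\partial\widetilde\tau_j}.
\]
At $\widetilde{\boldsymbol{\ell}}=\widetilde{\boldsymbol{\tau}}=\mathbf{0}$ and $\widetilde{\mathbf{b}}=\mathbf{0}$, both first derivatives of $\mathbf{h}$ vanish, since every term of $\mathbf{h}$ is at least bilinear. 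The second summand therefore drops out. The mixed derivative reduces to $\nabla D(\mathbf{0})^{T}\mathbf{g}(\mathbf{0})\odot (U_{\ell\tau})_{:,i}$ for $i=j$. This quantity is nonzero for a suitable $i$ provided $\nabla D(\mathbf{0})$ is not orthogonal to every column of $\mathrm{diag}(\mathbf{g})U_{\ell\tau}$.

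The main obstacle is this last nondegeneracy: ``not identically zero'' does not by itself exclude accidental cancellation, for instance $\nabla D(\mathbf{0})\perp \mathrm{range}(\mathrm{diag}(\mathbf{g})U_{\ell\tau})$. My first remedy is to move away from the origin. The mixed derivative is a real-analytic function of the input point when $D$ has analytic activations (GELU, SiLU, sigmoid), and of the weights. I would argue that it is not identically zero as a function, so it is nonzero somewhere, which is exactly what the definition requires. If that argument fails, I would state the precise generic condition, $\nabla D(\mathbf{h})^{T}\mathrm{diag}(\mathbf{g})U_{\ell\tau}\neq0$ at some point, and note that it holds for all weights outside a measure-zero set.
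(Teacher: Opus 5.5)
Your proof is correct and uses the same mechanism as the paper's. In both, the Hadamard term $U_{\ell\tau}(\widetilde{\boldsymbol{\ell}}\odot\widetilde{\boldsymbol{\tau}})$, whose $\widetilde{\boldsymbol{\tau}}$-Jacobian is $\operatorname{diag}(\widetilde{\boldsymbol{\ell}})$, produces the mixed derivative. Where you differ is in rigor, and your version is tighter.

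The paper stops at that Jacobian and asserts a nonzero mixed derivative ``whenever the corresponding weights in $U_{\ell\tau}$ and the decoder are nonzero.'' It cites the gate only as a source of additional interactions. It never rules out cancellation by the gate-derivative, $U_{b\tau}$ or trilinear contributions, or by the decoder's curvature.

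Your evaluation at $\widetilde{\boldsymbol{\ell}}=\widetilde{\boldsymbol{\tau}}=\widetilde{\mathbf{b}}=\mathbf{0}$ settles all of this at once. There $\mathbf{h}$ and both of its first derivatives vanish, so the $\nabla^2 D$ term drops out. Every contribution other than the $U_{\ell\tau}$ one is also killed, leaving exactly $\delta_{ij}\sum_k \partial_k D(\mathbf{0})\,\sigma\bigl((\mathbf{c}_g)_k\bigr)(U_{\ell\tau})_{ki}$.

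You are also right that ``not identically zero'' is too weak, if the proposition is read (as its hypothesis suggests) as a claim about a fixed set of weights. Take a linear decoder $D(\mathbf{h})=h_1$. Zero the first rows of $U_{\ell\tau}$, $U_{\ell b}$, $U_{\ell b\tau}$ and of the $\widetilde{\boldsymbol{\ell}}$-block of $W_g$, and leave all other weights nonzero. Then $\widehat{y}$ does not depend on $\widetilde{\boldsymbol{\ell}}$ at all. So your explicit condition $\nabla D(\mathbf{0})^{\top}\operatorname{diag}(\mathbf{g}(\mathbf{0}))\,U_{\ell\tau}\neq\mathbf{0}$, together with compatible projections, or its generic version, is the hypothesis the paper's argument also silently needs.

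Several side claims in your proposal are inaccurate and should be removed; the $\widetilde{\mathbf{b}}=\mathbf{0}$ evaluation does not use them.
\begin{itemize}
\item With $\widetilde{\boldsymbol{\ell}}=\widetilde{\boldsymbol{\tau}}=\mathbf{0}$ but $\widetilde{\mathbf{b}}$ arbitrary, the gate cross-term $-\partial_{\widetilde{\ell}_i}g_k\,(U_{b\tau})_{kj}\widetilde{b}_j$ does not vanish. This is because $\partial_{\widetilde{\tau}_j}$ of the $U_{b\tau}$ branch is $(U_{b\tau})_{kj}\widetilde{b}_j$.
\item The mixed derivative is not affine in $\widetilde{\mathbf{b}}$, since $\mathbf{g}$ depends on $\widetilde{\mathbf{b}}$ through $\sigma$. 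The two-values-of-$\widetilde{\mathbf{b}}$ argument is therefore unjustified, and in any case unnecessary.
\item The value at the origin is $\sigma\bigl((\mathbf{c}_g)_k\bigr)(U_{\ell\tau})_{ki}$, not $(U_{\ell\tau})_{ki}/2$, unless $\mathbf{c}_g=\mathbf{0}$.
\item In your last paragraph, ``real-analytic in the input and not identically zero'' is circular for fixed weights. What works is to fix the input at the origin and use analyticity in the weights. Your computation exhibits a weight configuration with a nonvanishing value there, so the exceptional weight set has measure zero, which is the statement you end with.
\end{itemize}
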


\begin{proof}
Consider the bilinear term
\begin{equation}
    U_{\ell\tau}
    \left(
    \widetilde{\boldsymbol{\ell}}_G(\mathbf{s})
    \odot
    \widetilde{\boldsymbol{\tau}}(\mathbf{s})
    \right).
\end{equation}
For simplicity, we write
\begin{equation}
    \mathbf{z}_{\ell\tau}
    =
    \widetilde{\boldsymbol{\ell}}
    \odot
    \widetilde{\boldsymbol{\tau}}.
\end{equation}
The derivative of $\mathbf{z}_{\ell\tau}$ with respect to $\widetilde{\boldsymbol{\tau}}$ is
\begin{equation}
    D_{\widetilde{\boldsymbol{\tau}}}\mathbf{z}_{\ell\tau}
    =
    \operatorname{diag}
    \left(
    \widetilde{\boldsymbol{\ell}}
    \right).
\end{equation}
Therefore, the coordinate sensitivity of this term depends directly on the geometry latent variable.
Differentiating once more with respect to $\widetilde{\boldsymbol{\ell}}$ gives a nonzero mixed derivative whenever the corresponding weights in $U_{\ell\tau}$ and the decoder are nonzero.
The gate $\mathbf{g}_G(\mathbf{s})$ also depends jointly on $\widetilde{\boldsymbol{\ell}}_G(\mathbf{s})$, $\widetilde{\mathbf{b}}_G$, and $\widetilde{\boldsymbol{\tau}}(\mathbf{s})$, which introduces additional geometry-coordinate interactions.
Hence GeONet contains terms for which the coordinate response changes with the geometry representation.
\end{proof}

\begin{proposition}[DeepONet is a limiting case of GeONet]
\label{prop:deeponet_subset}
A standard Branch--Trunk DeepONet of finite rank $r$ is contained in the GeONet hypothesis class as a special case.
\end{proposition}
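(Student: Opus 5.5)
The plan is to construct the DeepONet explicitly by choosing GeONet parameters. A DeepONet of finite rank $r$ has the form $\widehat{u}(\mathbf{s}) = \sum_{j=1}^r b_j(\phi_G,k_G)\,t_j(\mathbf{s}) + b_0$, with a Branch map $\mathbf{b}=(b_1,\dots,b_r)$ and a Trunk map $\mathbf{t}=(t_1,\dots,t_r)$. I will exhibit $\theta\in\Theta$ such that the element of $\mathcal{H}_{\mathrm{GeONet}}$ coincides with this DeepONet for every input $(m_G,\phi_G,k_G,\mathbf{s})$. The map must therefore not depend on $m_G$. Since the definition of $\mathcal{H}_{\mathrm{GeONet}}$ allows arbitrary networks $L_{\theta_L},B_{\theta_B},T_{\theta_T}$ and linear projections, I set $B_{\theta_B}=\mathbf{b}$ and $P_b=I_r$. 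I take $T_{\theta_T}\circ\gamma=\mathbf{t}$, which assumes the Trunk class can realize $\mathbf{t}$ after the Fourier embedding $\gamma$; this holds, for instance, if $\gamma$ includes the raw coordinates or if the DeepONet trunk is itself defined on $\gamma(\mathbf{s})$. I also set $P_\tau=I_r$.

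Next I neutralize the geometry and gating pathways in the fusion module. I set $P_\ell=0$, which makes $\widetilde{\boldsymbol{\ell}}_G\equiv\mathbf{0}$. This kills the $U_{\ell\tau}$, $U_{\ell b}$ and $U_{\ell b\tau}$ terms identically, so no choice of these matrices is needed. For the gate I set $W_g=0$ and let $\mathbf{c}_g\to-\infty$ componentwise, so that $\mathbf{g}_G\to\mathbf{0}$. This is why the statement speaks of a limiting case. Alternatively, I can avoid the limit by setting $\mathbf{c}_g=\mathbf{0}$, which gives $\mathbf{g}_G=\tfrac12\mathbf{1}$, and absorbing the factor $2$ into $U_{b\tau}$. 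I will present the exact version and mention the limit only as an interpretation. With $U_{b\tau}=2I_r$, the fused vector becomes $\mathbf{h}_G(\mathbf{s})=\mathbf{b}_G\odot\mathbf{t}(\mathbf{s})$.

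Finally I choose the decoder as the affine map $D_{\theta_D}(\mathbf{h})=\mathbf{1}^{T}\mathbf{h}+b_0$. This yields $\sum_j b_j t_j + b_0$, exactly the DeepONet factorization in Eq.~\eqref{eq:deeponet_factorization}. Vector-valued outputs are handled by stacking $r$-blocks or by using a decoder matrix. The main obstacle is architectural admissibility rather than analysis. I need to justify that $\Theta$ contains zero projections, a constant gate and a linear decoder, and that the Trunk class with embedding $\gamma$ contains $\mathbf{t}$. I will state these as mild closure assumptions on the parameterized families. If the decoder class is nonlinear, for example an MLP with GELU, exact containment fails, and I would downgrade the claim to approximation via the universal approximation theorem on compact sets. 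Remark: this construction has zero mixed derivative in $\boldsymbol{\ell}$, consistent with Proposition~\ref{prop:geonet_cross_interaction} requiring nonzero weights.
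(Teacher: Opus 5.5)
Your construction is correct and is essentially the paper's own argument: neutralize the Leaf stream, let the gate pass only the Branch--Trunk product so that $\mathbf{h}_G(\mathbf{s})=\mathbf{b}_G\odot\boldsymbol{\tau}(\mathbf{s})$, and decode with $\mathbf{1}^{\top}\mathbf{h}+c$. Your choice of $P_\ell=0$ with the constant gate $\sigma(\mathbf{0})=\frac{1}{2}\mathbf{1}$ absorbed into $U_{b\tau}=2I_r$ is tighter than the paper's version, which only makes the Leaf output constant (so the $U_{\ell b}$ and $U_{\ell b\tau}$ terms must be zeroed implicitly) and asks a sigmoid gate to select a single term, which it does only in a limit. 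One side remark needs correcting: exact containment does not fail for a GELU (or SiLU, ReLU) MLP decoder, because $\mathrm{GELU}(z)-\mathrm{GELU}(-z)=z\,\Phi(z)+z\,\Phi(-z)=z$, so two hidden units per layer reproduce the affine readout exactly and no universal-approximation downgrade is needed.
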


\begin{proof}
A scalar DeepONet approximation has the form
\begin{equation}
    \widehat{y}_{\mathrm{DON}}(\mathbf{s})
    =
    \sum_{i=1}^{r}
    b_i(\phi_G,k_G)\,
    t_i(\mathbf{s})
    +
    c,
\end{equation}
where $b_i$ are Branch-network features and $t_i$ are Trunk-network features.
To recover this form from GeONet, choose the Leaf output to be a constant vector independent of $m_G$, set the gate to select only the Branch--Trunk interaction, set
\begin{equation}
    \mathbf{h}_G(\mathbf{s})
    =
    \mathbf{b}_G
    \odot
    \boldsymbol{\tau}(\mathbf{s}),
\end{equation}
and choose a linear decoder
\begin{equation}
    D_{\theta_D}(\mathbf{h})
    =
    \mathbf{1}^{\top}\mathbf{h}+c.
\end{equation}
Then, we have
\begin{equation}
    \widehat{y}_{\theta}(\mathbf{s})
    =
    \sum_{i=1}^{r}
    b_i(\phi_G,k_G)\,
    t_i(\mathbf{s})
    +
    c,
\end{equation}
which is exactly the canonical DeepONet representation.
Therefore, DeepONet is a special limiting case of GeONet obtained when the geometry stream is ignored or made constant.
\end{proof}

\begin{remark}\textbf{Why the hypothesis class is geometry-aware?}~The preceding proposition shows that GeONet can reproduce the Branch--Trunk structure of DeepONet.
However, GeONet is not limited to that structure.
Through the Leaf stream and the gated bilinear fusion, GeONet can represent terms such as
\begin{equation}
    \boldsymbol{\ell}_G(\mathbf{s})
    \odot
    \boldsymbol{\tau}(\mathbf{s}),
    \qquad
    \boldsymbol{\ell}_G(\mathbf{s})
    \odot
    \mathbf{b}_G,
    \qquad
    \boldsymbol{\ell}_G(\mathbf{s})
    \odot
    \mathbf{b}_G
    \odot
    \boldsymbol{\tau}(\mathbf{s}).
\end{equation}
These terms allow the predicted pressure at a coordinate to depend jointly on local geometry, material state, and spatial position.
This is the mathematical mechanism by which GeONet represents pore-boundary effects, connectivity constraints, bottlenecks, and geometry-induced pressure gradients.
\end{remark}

\begin{remark}\textbf{Connection to the finite-volume pressure operator}:~In the finite-volume graph view, the solid--void mask determines the active pore-cell set $V_G$, the edge set $E_G$, and the transmissibility pattern of the discrete pressure matrix $A_G$.
The Branch Network can encode hydraulic properties that affect the magnitudes of transmissibilities, while the Leaf Network encodes the pore-space structure that determines which transmissibility connections exist.
The GeONet hypothesis class is therefore aligned with the discrete physics: it provides separate latent streams for material coefficients, coordinate evaluation, and geometry-induced connectivity, and then fuses those streams before predicting the pressure response.
\end{remark}

\subsection{Why explicit geometry is mathematically identifiable}
\label{app:identifiability}
Here, we formalize why the geometry mask is not an optional auxiliary input for the RCP pressure problem.
In a geometry-conditioned flow problem, the material fields $(\phi_G,k_G)$ and the query coordinate $\mathbf{s}$ do not, in general, uniquely determine the pressure response.
The binary solid--void mask $m_G$ defines the fluid domain, the admissible flow paths, and the local pore-space connectivity.
Therefore, unless the mask is already encoded redundantly in the Branch inputs, a model that ignores $m_G$ may be unable to distinguish two physically different pressure operators.
Let the target pressure operator be
\begin{equation}
    p_G(\mathbf{s})
    =
    \mathcal{G}^{\dagger}(m_G,\phi_G,k_G)(\mathbf{s}),
\end{equation}
A predictor that ignores geometry has the form
\begin{equation}
    f(\phi_G,k_G,\mathbf{s}),
\end{equation}
and must return the same value for any two geometries that share the same Branch inputs and query coordinate.

\begin{theorem}[Irreducible error without explicit geometry]
\label{thm:irreducible_geometry_error}
Suppose there exist two admissible RCP geometries $G_a$ and $G_b$ whose Branch-network inputs are identical:
\begin{subequations}
\begin{align}
    \mathcal{B}_{\rm in}(G_a)
    &=
    \mathcal{B}_{\rm in}(G_b),\\
    \mathcal{B}_{\rm in}(G)
    &:=
    (\phi_G,k_G).
\end{align}
\end{subequations}
Equivalently,
\begin{subequations}
\begin{align}
    \phi_{G_a}&=\phi_{G_b},\\
    k_{G_a}&=k_{G_b}.
\end{align}
\end{subequations}
However, assume their geometry masks differ, $m_{G_a}\neq m_{G_b}$, and that this difference produces distinct pressure fields on a set of positive measure.
That is, their pressure fields differ on a measurable set $S\subset\Omega$ with positive measure:
\begin{equation}
    |p_{G_a}(\mathbf{s})-p_{G_b}(\mathbf{s})|
    \geq
    \delta
    \qquad
    \text{for } \mathbf{s}\in S,
\end{equation}
for some $\delta>0$.
Assume that, conditional on the observed Branch and Trunk inputs, the two geometries occur with probabilities
\begin{subequations}
\begin{align}
    \mathbb{P}(G=G_a)&=\pi,\\
    \mathbb{P}(G=G_b)&=1-\pi,
\end{align}
\end{subequations}
where $0<\pi<1$.
Then every predictor $f(\phi,k,\mathbf{s})$ that does not observe $m_G$ has pointwise conditional mean-squared error at least
\begin{equation}
    \pi(1-\pi)\delta^2
    \qquad
    \text{for } \mathbf{s}\in S.
\end{equation}
In the equal-probability case $\pi=1/2$, the lower bound is
\begin{equation}
    \frac{\delta^2}{4}.
\end{equation}
If the risk is averaged over coordinates and $\nu(S)$ denotes the probability of sampling a coordinate in $S$, then the integrated conditional risk is at least
\begin{equation}
    \nu(S)\,\pi(1-\pi)\delta^2 .
\end{equation}
\end{theorem}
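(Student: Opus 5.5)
The argument is the standard two-point Bayes-risk bound. We fix $\mathbf{s}\in S$ and condition on the observed Branch and Trunk inputs $(\phi,k,\mathbf{s})$. Conditionally, the target $p_G(\mathbf{s})$ is a two-point random variable: it takes the value $u_a:=p_{G_a}(\mathbf{s})$ with probability $\pi$ and $u_b:=p_{G_b}(\mathbf{s})$ with probability $1-\pi$. Any predictor $f$ that does not observe $m_G$ must return a single deterministic number $c=f(\phi,k,\mathbf{s})$ in both cases, because by hypothesis $\mathcal{B}_{\rm in}(G_a)=\mathcal{B}_{\rm in}(G_b)$ and the query coordinate is shared. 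The conditional mean-squared error is therefore the scalar function
\begin{equation*}
R(c)=\pi(c-u_a)^2+(1-\pi)(c-u_b)^2 .
\end{equation*}

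We minimize $R$ over $c\in\mathbb{R}$. It is a convex quadratic, minimized at the conditional mean $c^\ast=\pi u_a+(1-\pi)u_b$. Substituting $c^\ast-u_a=(1-\pi)(u_b-u_a)$ and $c^\ast-u_b=\pi(u_a-u_b)$ gives
\begin{equation*}
R(c^\ast)=\pi(1-\pi)^2(u_a-u_b)^2+(1-\pi)\pi^2(u_a-u_b)^2=\pi(1-\pi)(u_a-u_b)^2 .
\end{equation*}
This is the conditional variance of the target. Since $|u_a-u_b|\ge\delta$ on $S$, we obtain $R(c)\ge R(c^\ast)\ge\pi(1-\pi)\delta^2$ for every $c$, and hence for every geometry-blind $f$. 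Setting $\pi=1/2$ gives $\delta^2/4$.

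For the integrated statement, we write the coordinate-averaged conditional risk as $\int R_{\mathbf{s}}(f(\phi,k,\mathbf{s}))\,d\nu(\mathbf{s})$. We drop the nonnegative contribution from $\Omega\setminus S$ and apply the pointwise bound on $S$, which yields at least $\nu(S)\pi(1-\pi)\delta^2$.

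The main obstacle is measure-theoretic bookkeeping rather than analysis. Three points need care. First, $f$ must be measurable in $\mathbf{s}$ so that the integral is defined; this is harmless for neural-network predictors. Second, the conditioning must be interpreted as a distribution over $G$ restricted to $\{G_a,G_b\}$ given the observed inputs. If other geometries share the same Branch inputs, the bound still holds, because adding more support points can only increase the conditional variance. Third, the pressure fields are defined on the different fluid domains $\Omega_f(G_a)$ and $\Omega_f(G_b)$. We therefore take $S$ inside the common domain of comparison, for instance by extending pressures to $\Omega$ or by restricting to $\Omega_f(G_a)\cap\Omega_f(G_b)$, as the positive-measure hypothesis on $S$ implicitly assumes. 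Theorem~\ref{thm:wellposedness} guarantees that each $p_{G}$ is a well-defined function, so $u_a$ and $u_b$ are unambiguous.
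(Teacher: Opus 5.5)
Your proof is correct and follows the paper's proof. You fix $\mathbf{s}\in S$, treat the target as a two-point variable, and minimize the conditional risk at the conditional mean, which gives $\pi(1-\pi)\bigl(p_{G_a}(\mathbf{s})-p_{G_b}(\mathbf{s})\bigr)^2\ge\pi(1-\pi)\delta^2$; you then integrate over $S$. The paper completes the square where you substitute $c^\ast$ directly, which makes no difference.

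One small caution: your aside that additional geometries with the same Branch inputs ``can only increase the conditional variance'' is false in general. Moving conditional mass to a geometry whose pressure at $\mathbf{s}$ lies near the conditional mean lowers it. Nothing in your proof depends on this, because the hypothesis $\mathbb{P}(G=G_a)+\mathbb{P}(G=G_b)=1$ already rules out other geometries.
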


\begin{proof}
Fix a query coordinate $\mathbf{s}\in S$ and condition on the observed Branch--Trunk inputs $(\phi,k,\mathbf{s})$.
By assumption, the two geometries $G_a$ and $G_b$ have the same Branch inputs and the same query coordinate.
Therefore, a predictor that does not observe the geometry mask $m_G$ cannot distinguish between the two cases.
It must output a single value
\begin{equation}
    f(\phi,k,\mathbf{s})
\end{equation}
whether the hidden geometry is $G_a$ or $G_b$.
For notational simplicity, we define
\begin{subequations}
\begin{align}
    a &= p_{G_a}(\mathbf{s}),\\
    b &= p_{G_b}(\mathbf{s}).
\end{align}
\end{subequations}
Conditional on the observed inputs, the target pressure is a two-point random variable:
\begin{equation}
    p_G(\mathbf{s})
    =
    \begin{cases}
    a, & \text{with probability } \pi,\\
    b, & \text{with probability } 1-\pi.
    \end{cases}
\end{equation}
Thus, for any geometry-blind prediction $f$, the conditional squared-error risk is
\begin{equation}
    R(f)
    =
    \pi(a-f)^2
    +
    (1-\pi)(b-f)^2.
\end{equation}
Completing the square gives
\begin{equation}
    R(f)
    =
    \left[
    f-
    \left(
    \pi a+(1-\pi)b
    \right)
    \right]^2
    +
    \pi(1-\pi)(a-b)^2.
\end{equation}
Therefore, the risk is minimized when
\begin{equation}
    f^*(\phi,k,\mathbf{s})
    =
    \pi a+(1-\pi)b
    =
    \pi p_{G_a}(\mathbf{s})
    +
    (1-\pi)p_{G_b}(\mathbf{s}).
\end{equation}
This is the conditional mean of the unobserved pressure value.
At this optimum, the first term vanishes, so the minimum achievable conditional mean-squared error is
\begin{equation}
    R(f^*)
    =
    \pi(1-\pi)(a-b)^2.
\end{equation}
Returning to the pressure notation,
\begin{equation}
    R(f^*)
    =
    \pi(1-\pi)
    \left(
    p_{G_a}(\mathbf{s})-p_{G_b}(\mathbf{s})
    \right)^2.
\end{equation}
Because the two pressure fields differ by at least $\delta$ on $S$,
\begin{equation}
    \left|
    p_{G_a}(\mathbf{s})-p_{G_b}(\mathbf{s})
    \right|
    \geq
    \delta
    \qquad
    \text{for } \mathbf{s}\in S,
\end{equation}
we obtain
\begin{equation}
    R(f^*)
    \geq
    \pi(1-\pi)\delta^2.
\end{equation}
Since $f^*$ is the best possible geometry-blind predictor under squared-error loss, no predictor that ignores $m_G$ can achieve a smaller conditional mean-squared error.
Finally, averaging this pointwise lower bound over the coordinate distribution gives the integrated lower bound.
If $\nu(S)$ is the probability of sampling a coordinate in $S$, then
\begin{equation}
    \mathbb{E}_{\mathbf{s}}
    \left[
    R(f^*)
    \right]
    \geq
    \nu(S)\,\pi(1-\pi)\delta^2.
\end{equation}
\end{proof}

Theorem~\ref{thm:irreducible_geometry_error} is specific to the GeONet setting.
It does not simply state that a larger neural network is more expressive.
Rather, it states that when pore topology is not determined by material inputs, the geometry mask is statistically identifiable.
That is, without observing $m_G$, the pressure operator may be non-unique.
The Leaf Network removes this ambiguity by making the geometry an explicit input to the learned operator.
The key idea is the
\begin{equation}
R(f) = \left[
f-
\left(
\pi a+(1-\pi)b
\right)
\right]^2
+
\pi(1-\pi)(a-b)^2.
\end{equation}
It separates the error into two parts: avoidable error + unavoidable geometry ambiguity.
The avoidable error disappears when the model predicts the conditional mean.
The unavoidable term,
\begin{equation}
\pi(1-\pi)(a-b)^2,
\end{equation}
remains because the model does not know which geometry produced the pressure field.
This is why the theorem supports the Leaf Network: observing $(m_G)$ removes this ambiguity.

\subsection{Geometry stability under an RCP metric}
\label{app:geometry_stability}
Section~\ref{app:identifiability} showed that the geometry mask can be necessary for identifiability: two samples may have identical material-property inputs but different pressure fields because their solid--void masks define different pore-space connectivity.
We now consider the complementary question: once geometry is included as an input, when should two similar RCP geometries produce similar pressure responses?
This question is central to cross-domain generalization.
A GeONet model trained on an ensemble of RCP geometries can be expected to interpolate successfully to a new, unseen geometry when the target pressure operator varies continuously within the admissible geometry class.
Informally, nearby geometries should induce nearby pressure fields, provided that the perturbation does not create a major topological transition, such as opening or closing a critical inlet--outlet flow path (e.g., only possible when mineral dissolution or precipitation occurs).

A direct pointwise distance between binary masks is not well suited to this purpose.
Since
\begin{equation}
    m_G(\mathbf{x})\in\{0,1\},
\end{equation}
changing a single grid cell from solid to fluid produces
\begin{equation}
    \|m_G-m_{G'}\|_{L^\infty(\Omega)}=1,
\end{equation}
even if that isolated change has negligible influence on the pressure field.
Thus, the $L^\infty$ distance treats all cell flips as equally large, regardless of their hydraulic consequence.
This is too rigid for comparing RCP geometries.

Moreover, GeONet does not process the geometry mask as an abstract scalar descriptor.
The Leaf Network operates on the complete binary mask using a multiscale spatial encoding procedure.
At successive spatial resolutions, coarse representations of the mask and associated summary statistics are retained before the resulting multiscale features are mapped to the geometry latent representation.
Therefore, the geometry comparison should reflect spatial patterns over multiple length scales rather than isolated pixelwise differences.
This motivates a smoothed geometry-aware distance.
Therefore, the geometry comparison should reflect local spatial patterns rather than isolated pixelwise differences.
This motivates a smoothed geometry-aware distance.

Let $S_{\sigma}$ denote a local smoothing, or mollification, operator with length scale $\sigma$ comparable to a few grid cells.
For example, one may define
\begin{equation}
    S_{\sigma}m_G = \eta_{\sigma} * m_G,
\end{equation}
where $\eta_{\sigma}$ is a nonnegative smoothing kernel with characteristic width $\sigma$ and unit integral.
Equivalently,
\begin{equation}
    (S_{\sigma}m_G)(\mathbf{x})
    =
    \int_{\Omega}
    \eta_{\sigma}(\mathbf{x}-\mathbf{y})
    m_G(\mathbf{y})
    \,d\mathbf{y}.
\end{equation}
Thus, although $m_G$ is binary, $S_{\sigma}m_G$ is a locally averaged field.
It is close to one in regions surrounded by pore space, close to zero in regions surrounded by solid material, and intermediate near solid--void interfaces.

This smoothing is used only to define a comparison measure between geometries. It does not modify the physical geometry used in the finite-volume or PFLOTRAN simulation.
This smoothing is used only to define a comparison measure between geometries; it does not modify the physical geometry used in the finite-volume or PFLOTRAN simulation.
The goal is to compare masks over spatial scales relevant to the multiscale Leaf representation and to the local flow response, rather than at the scale of isolated single-pixel perturbations.

We define the RCP geometry-material distance
\begin{equation}
    d_{\rm Geo}(G,G')
    =
    \alpha_m
    \|S_{\sigma}(m_G-m_{G'})\|_{L^2(\Omega)}
    +
    \alpha_{\phi}
    \|\phi_G-\phi_{G'}\|_{L^2(\Omega)}
    +
    \alpha_k
    \|\log(k_G+k_\epsilon)-\log(k_{G'}+k_\epsilon)\|_{L^2(\Omega)}.
\end{equation}
Here $\alpha_m$, $\alpha_{\phi}$, and $\alpha_k$ are scaling weights that make the three terms comparable, and $k_\epsilon>0$ is a small regularization constant that avoids taking the logarithm of zero or near-zero permeability values.
The first term compares the solid--void masks after local smoothing and measures whether the two geometries exhibit similar pore-space patterns over the spatial scales represented by the multiscale Leaf encoder.
The second term compares porosity fields, and the third term compares permeability fields on a logarithmic scale.
The logarithmic permeability term is appropriate because permeability commonly spans several orders of magnitude and is normalized before training.

The quantity $d_{\rm Geo}$ should be interpreted as a local RCP similarity measure rather than as a universal topological metric.
It is designed to compare geometries within the same admissible RCP class, where inlet--outlet connectivity, boundary access, and coefficient bounds remain controlled.
A global Lipschitz theorem cannot hold for arbitrary binary masks because a small local change can sometimes cause a large global effect.
For example, opening or closing a single critical pore throat can create or destroy an inlet--outlet flow path.
Such a perturbation may be small in a smoothed image norm but large in hydraulic consequence.
Accordingly, the stability result below should be interpreted as local: it applies to perturbations that do not leave the admissible geometry class or trigger a connectivity transition.
Within this regime, the pressure operator varies continuously with respect to changes in geometry and material properties.
This provides a mathematical explanation for why cross-domain generalization is plausible when unseen RCP geometries remain close to the training distribution under $d_{\rm Geo}$.

\begin{theorem}[Local Lipschitz stability of the geometry-conditioned pressure operator]
\label{thm:lipschitz_operator}
Let $G$ and $G'$ be two admissible RCP geometries that can be compared on a common fluid domain $\Omega_c$ with the same pressure-prescribed boundary $\Gamma_D=\Gamma_{\mathrm{in}}\cup\Gamma_{\mathrm{out}}$.
Let
\begin{equation}
    V_0
    =
    \{v\in H^1(\Omega_c): v=0 \text{ on } \Gamma_D\}.
\end{equation}
Assume that the normalized pressures $p_G$ and $p_{G'}$ satisfy the weak Darcy problems
\begin{subequations}
\begin{align}
    \int_{\Omega_c}
    \kappa_G \nabla p_G\cdot\nabla v\,d\mathbf{x}
    &=
    0,
    \qquad
    \forall v\in V_0,
    \\
    \int_{\Omega_c}
    \kappa_{G'} \nabla p_{G'}\cdot\nabla v\,d\mathbf{x}
    &=
    0,
    \qquad
    \forall v\in V_0,
\end{align}
\end{subequations}
with the same normalized inlet--outlet pressure drop.
Here
\begin{subequations}
\begin{align}
    \kappa_G&=\frac{k_G}{\mu},
    \\
    \kappa_{G'}&=\frac{k_{G'}}{\mu}
\end{align}
\end{subequations}
are the effective hydraulic conductivities.
Suppose the conductivities are uniformly bounded:
\begin{equation}
    0<\kappa_{\min}
    \leq
    \kappa_G,\kappa_{G'}
    \leq
    \kappa_{\max}<\infty
    \qquad
    \text{a.e. in } \Omega_c.
\end{equation}
Assume further that the conductivity perturbation is controlled by the RCP geometry-material distance:
\begin{equation}
    \|\kappa_G-\kappa_{G'}\|_{L^\infty(\Omega_c)}
    \leq
    C_{\kappa}\,d_{\rm Geo}(G,G').
\end{equation}
Then there exists a constant $C>0$ such that
\begin{equation}
    \|p_G-p_{G'}\|_{H^1(\Omega_c)}
    \leq
    C\,d_{\rm Geo}(G,G').
\end{equation}
The constant $C$ depends on the ellipticity bounds, the Poincar\'e constant of $\Omega_c$, the boundary partition, and the imposed pressure-drop normalization.
\end{theorem}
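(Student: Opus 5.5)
The argument is the standard perturbation estimate for elliptic problems with a common Dirichlet lift. Both pressures satisfy the same Dirichlet data on $\Gamma_D$: after the normalization of Assumption~\ref{ax:normalization}, the outlet pressure is $0$ and the inlet pressure is the same $\Delta p$. The difference $e:=p_G-p_{G'}$ therefore vanishes on $\Gamma_D$, so $e\in V_0$ and it is an admissible test function in both weak problems.

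Subtracting the two weak equations and adding and subtracting $\kappa_G\nabla p_{G'}$ gives the error equation
\begin{equation*}
    \int_{\Omega_c}\kappa_G\nabla e\cdot\nabla v\,d\mathbf{x}
    =
    -\int_{\Omega_c}(\kappa_G-\kappa_{G'})\nabla p_{G'}\cdot\nabla v\,d\mathbf{x}
    \qquad\forall v\in V_0 .
\end{equation*}
Choosing $v=e$, I bound the left side from below by $\kappa_{\min}\|\nabla e\|_{L^2}^2$. I bound the right side from above by $\|\kappa_G-\kappa_{G'}\|_{L^\infty(\Omega_c)}\|\nabla p_{G'}\|_{L^2}\|\nabla e\|_{L^2}$ using Hölder. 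Dividing by $\|\nabla e\|_{L^2}$ (the case $e=0$ is trivial) yields
\begin{equation*}
    \|\nabla e\|_{L^2(\Omega_c)}\le \kappa_{\min}^{-1}\|\kappa_G-\kappa_{G'}\|_{L^\infty(\Omega_c)}\|\nabla p_{G'}\|_{L^2(\Omega_c)} .
\end{equation*}
The Poincaré inequality on $V_0$, as in the coercivity step of Theorem~\ref{thm:wellposedness}, then upgrades this to $\|e\|_{H^1}\le\sqrt{1+C_P^2}\,\|\nabla e\|_{L^2}$.

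Two quantities remain to control.

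\textbf{Bound on $\|\nabla p_{G'}\|_{L^2}$.} Theorem~\ref{thm:wellposedness}, applied on the common domain $\Omega_c$, gives $\|p_{G'}\|_{H^1}\le C_{\Omega_c}|\Delta p|$. The constant is $C_{\Omega_c}=[1+\sqrt{1+C_P^2}\,\kappa_{\max}/\kappa_{\min}]C_{\mathrm{lift}}C_\Gamma$. It depends only on the ellipticity bounds, the Poincaré constant, and the lift and trace constants for the fixed boundary partition. This is uniform in $G'$, which is the point: because both problems live on $\Omega_c$, the geometry dependence enters only through $\kappa$.

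\textbf{Bound on the coefficient perturbation.} The hypothesis $\|\kappa_G-\kappa_{G'}\|_{L^\infty}\le C_\kappa d_{\rm Geo}(G,G')$ handles this directly. Combining everything gives the result with
\begin{equation*}
    C=\sqrt{1+C_P^2}\,\kappa_{\min}^{-1}\,C_{\Omega_c}\,C_\kappa\,|\Delta p| ,
\end{equation*}
which displays exactly the claimed dependence on the ellipticity bounds, the Poincaré constant, the boundary partition, and the pressure-drop normalization.

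The main obstacle is not the analysis but the setup. One must justify that the two geometries can be posed on a common $\Omega_c$ with a common $\Gamma_D$. The natural route is the fixed-domain formulation of Assumption~\ref{ax:coefficients}: solid cells carry residual permeability $k_s>0$, so the mask enters only through $\kappa$, and $\kappa_{\min}\ge k_s/\mu$. One must also accept that the $L^\infty$ conductivity hypothesis is what absorbs the geometry change. The smoothed mask norm in $d_{\rm Geo}$ cannot by itself control an $L^\infty$ coefficient jump, since a single flipped cell changes $\kappa$ by $O(k_{\max})$. So this is an assumption restricting the theorem to the local, no-topological-transition regime, and I would state that explicitly rather than try to derive it.
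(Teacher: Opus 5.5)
Your proposal is correct and follows the paper's proof essentially step for step. The paper likewise tests the error equation with $e=p_G-p_{G'}\in V_0$, uses coercivity and H\"older, bounds $\|\nabla p_{G'}\|_{L^2(\Omega_c)}$ by the energy estimate of Theorem~\ref{thm:wellposedness}, invokes the assumed $L^\infty$ conductivity control, and finishes with Poincar\'e, arriving at the same constant $C=\sqrt{1+C_P^2}\,C_E C_\kappa/\kappa_{\min}$. The only differences are that you make $C_E=C_{\Omega_c}|\Delta p|$ explicit and uniform in $G'$, where the paper calls it a ``geometry-class constant,'' and that you flag the $L^\infty$ hypothesis as a genuine assumption not derivable from $d_{\rm Geo}$, which matches the paper's caveat that the result is only local.
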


\begin{proof}
Let
\begin{equation}
    e=p_G-p_{G'}.
\end{equation}
Because $p_G$ and $p_{G'}$ satisfy the same normalized Dirichlet pressure values on $\Gamma_D$, their difference satisfies
\begin{equation}
    e\in V_0.
\end{equation}
Subtracting the weak equation for $p_{G'}$ from the weak equation for $p_G$ gives
\begin{equation}
    \int_{\Omega_c}
    \kappa_G \nabla e\cdot \nabla v
    \,d\mathbf{x}
    =
    \int_{\Omega_c}
    (\kappa_{G'}-\kappa_G)
    \nabla p_{G'}\cdot \nabla v
    \,d\mathbf{x},
    \qquad
    \forall v\in V_0.
\end{equation}
Choosing $v=e$ gives
\begin{equation}
    \int_{\Omega_c}
    \kappa_G |\nabla e|^2
    \,d\mathbf{x}
    =
    \int_{\Omega_c}
    (\kappa_{G'}-\kappa_G)
    \nabla p_{G'}\cdot \nabla e
    \,d\mathbf{x}.
\end{equation}
Using the lower bound $\kappa_G\geq \kappa_{\min}$, we obtain
\begin{equation}
    \kappa_{\min}
    \|\nabla e\|_{L^2(\Omega_c)}^2
    \leq
    \left|
    \int_{\Omega_c}
    (\kappa_{G'}-\kappa_G)
    \nabla p_{G'}\cdot \nabla e
    \,d\mathbf{x}
    \right|.
\end{equation}
By H\"older's inequality,
\begin{equation}
    \kappa_{\min}
    \|\nabla e\|_{L^2(\Omega_c)}^2
    \leq
    \|\kappa_G-\kappa_{G'}\|_{L^\infty(\Omega_c)}
    \|\nabla p_{G'}\|_{L^2(\Omega_c)}
    \|\nabla e\|_{L^2(\Omega_c)}.
\end{equation}
If $e\neq 0$, division by $\|\nabla e\|_{L^2(\Omega_c)}$ yields
\begin{equation}
    \|\nabla e\|_{L^2(\Omega_c)}
    \leq
    \frac{1}{\kappa_{\min}}
    \|\kappa_G-\kappa_{G'}\|_{L^\infty(\Omega_c)}
    \|\nabla p_{G'}\|_{L^2(\Omega_c)}.
\end{equation}
The energy estimate from Theorem~\ref{thm:wellposedness} gives
\begin{equation}
    \|\nabla p_{G'}\|_{L^2(\Omega_c)}
    \leq
    C_E,
\end{equation}
where $C_E$ is proportional to the imposed pressure drop.
For normalized pressures, the pressure drop is fixed, so $C_E$ is a geometry-class constant.
Using the assumed relation
\begin{equation}
    \|\kappa_G-\kappa_{G'}\|_{L^\infty(\Omega_c)}
    \leq
    C_{\kappa}d_{\rm Geo}(G,G'),
\end{equation}
we obtain
\begin{equation}
    \|\nabla e\|_{L^2(\Omega_c)}
    \leq
    \frac{C_E C_{\kappa}}{\kappa_{\min}}
    d_{\rm Geo}(G,G').
\end{equation}
Finally, since $e\in V_0$, the Poincar\'e inequality gives
\begin{equation}
    \|e\|_{L^2(\Omega_c)}
    \leq
    C_P\|\nabla e\|_{L^2(\Omega_c)}.
\end{equation}
Therefore,
\begin{equation}
    \|e\|_{H^1(\Omega_c)}
    =
    \left(
    \|e\|_{L^2(\Omega_c)}^2
    +
    \|\nabla e\|_{L^2(\Omega_c)}^2
    \right)^{1/2}
    \leq
    \sqrt{1+C_P^2}\,
    \|\nabla e\|_{L^2(\Omega_c)}.
\end{equation}
Combining the previous estimates gives
\begin{equation}
    \|p_G-p_{G'}\|_{H^1(\Omega_c)}
    \leq
    C\,d_{\rm Geo}(G,G'),
\end{equation}
with
\begin{equation}
    C=
    \sqrt{1+C_P^2}
    \frac{C_E C_{\kappa}}{\kappa_{\min}}.
\end{equation}
This completes the proof.
\end{proof}

The lower-bound assumption on $\kappa_G$ and $\kappa_{G'}$ should be interpreted locally.
It excludes geometry perturbations that create or destroy an inlet--outlet connection or introduce truly disconnected zero-conductivity regions in the comparison domain.
Thus, the result is not a global stability theorem for arbitrary binary masks.
Rather, it states that within an admissible RCP class with controlled connectivity and uniformly bounded conductivities, small changes in the geometry-material metric lead to small changes in the pressure field.

\begin{proposition}[Discrete stability of the finite-volume pressure operator:]
\label{prop:discrete_stability}
Let $G$ and $G'$ be two admissible RCP geometries whose finite-volume pressure systems can be compared on a common set of free pressure degrees of freedom.
Equivalently, assume that the pressure vectors have been restricted to a common active pore-cell set or embedded into a fixed background grid with inactive cells masked out.
After applying Dirichlet pressure boundary conditions, let
\begin{subequations}
\begin{align}
    A_G\mathbf{p}_G&=\mathbf{b}_G,
    \\
    A_{G'}\mathbf{p}_{G'}&=\mathbf{b}_{G'}
\end{align}
\end{subequations}
denote the corresponding finite-volume systems.
Assume that the matrices are uniformly positive definite on the free pressure degrees of freedom:
\begin{subequations}
\begin{align}
    \lambda_{\min}(A_G)&\geq \alpha,
    \\
    \lambda_{\min}(A_{G'})&\geq \alpha,
    \\
    \alpha&>0.
\end{align}
\end{subequations}
Assume further that the geometry-material distance controls the perturbations in the pressure matrix and right-hand side:
\begin{equation}
    \|A_G-A_{G'}\|_2
    \leq
    L_A d_{\rm Geo}(G,G'),
\end{equation}
and
\begin{equation}
    \|\mathbf{b}_G-\mathbf{b}_{G'}\|_2
    \leq
    L_b d_{\rm Geo}(G,G').
\end{equation}
Then the discrete pressure solutions satisfy
\begin{equation}
    \|\mathbf{p}_G-\mathbf{p}_{G'}\|_2
    \leq
    \frac{1}{\alpha}
    \left(
    L_b
    +
    L_A\|\mathbf{p}_{G'}\|_2
    \right)
    d_{\rm Geo}(G,G').
\end{equation}
Thus, within an admissible geometry class with uniformly well-conditioned finite-volume operators, small changes in the RCP geometry-material metric lead to small changes in the discrete pressure vector.
\end{proposition}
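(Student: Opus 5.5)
The plan is the standard linear-system perturbation argument, mirroring the continuous proof of Theorem~\ref{thm:lipschitz_operator}. Set $\mathbf{e}=\mathbf{p}_G-\mathbf{p}_{G'}$ on the common free degrees of freedom. Subtracting the two systems and adding and subtracting $A_G\mathbf{p}_{G'}$ gives the error equation
\begin{equation}
    A_G\mathbf{e}
    =
    (\mathbf{b}_G-\mathbf{b}_{G'})
    -
    (A_G-A_{G'})\mathbf{p}_{G'}.
\end{equation}
Putting the perturbed matrix on the side of the known solution $\mathbf{p}_{G'}$ is deliberate. It leaves only $A_G$ to invert and yields exactly the factor $\|\mathbf{p}_{G'}\|_2$ that appears in the claimed bound.

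Next, I would bound $\|A_G^{-1}\|_2$ using the uniform positive-definiteness hypothesis. From Proposition~\ref{prop:graph_geometry}, $A_G=B_G^TC_GB_G+A_{D,G}$ is symmetric, so $\lambda_{\min}(A_G)\ge\alpha$ gives $\|A_G^{-1}\|_2\le 1/\alpha$. If one prefers not to use symmetry, an energy argument gives the same bound. Take the inner product of the error equation with $\mathbf{e}$. Then
\begin{equation}
    \alpha\|\mathbf{e}\|_2^2
    \le
    \mathbf{e}^T A_G\mathbf{e}
    \le
    \|\mathbf{e}\|_2
    \left(
    \|\mathbf{b}_G-\mathbf{b}_{G'}\|_2
    +
    \|A_G-A_{G'}\|_2\|\mathbf{p}_{G'}\|_2
    \right),
\end{equation}
using Cauchy--Schwarz and the definition of the operator norm. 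Dividing by $\|\mathbf{e}\|_2$ handles the case $\mathbf{e}\neq 0$; the case $\mathbf{e}=0$ is trivial. The energy route is also the discrete analogue of the coercivity step in Theorem~\ref{thm:wellposedness}.

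Finally, substitute the assumed Lipschitz controls $\|A_G-A_{G'}\|_2\le L_A d_{\rm Geo}$ and $\|\mathbf{b}_G-\mathbf{b}_{G'}\|_2\le L_b d_{\rm Geo}$ to obtain the stated estimate. The algebra itself is routine.

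The step I expect to be the main obstacle is justifying the setting rather than the inequality. Specifically, I need that $A_G$ and $A_{G'}$ act on the same vector space, and that $\mathbf{b}_G,\mathbf{b}_{G'}$ live there too. This is what the hypothesis of a common set of free degrees of freedom provides. I would make it precise by embedding into the fixed background grid. Inactive cells get identity rows and zero right-hand side (equivalently, a tiny residual transmissibility $k_s$ as in Assumption~\ref{ax:coefficients}), so the $\lambda_{\min}\ge\alpha$ assumption remains meaningful there. I would also remark that $\|\mathbf{p}_{G'}\|_2$ is itself bounded by $\|\mathbf{b}_{G'}\|_2/\alpha$, so the constant is uniform over the admissible class.
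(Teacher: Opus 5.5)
Your proposal is correct and follows the paper's proof essentially step for step. Both arguments add and subtract $A_G\mathbf{p}_{G'}$ to get the error equation $A_G\mathbf{e}=(\mathbf{b}_G-\mathbf{b}_{G'})-(A_G-A_{G'})\mathbf{p}_{G'}$, bound $\|A_G^{-1}\|_2\le 1/\alpha$, and substitute the two Lipschitz controls. You also note that going from $\lambda_{\min}(A_G)\ge\alpha$ to $\|A_G^{-1}\|_2\le 1/\alpha$ uses the symmetry of the two-point flux matrix; for your energy route, it instead requires reading the hypothesis as the quadratic-form bound $\mathbf{x}^{T}A_G\mathbf{x}\ge\alpha\|\mathbf{x}\|_2^2$. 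The paper leaves this step implicit.
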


\begin{proof}
Let
\begin{equation}
    \mathbf{e}
    =
    \mathbf{p}_G-\mathbf{p}_{G'}.
\end{equation}
Subtract the two finite-volume systems:
\begin{equation}
    A_G\mathbf{p}_G-A_{G'}\mathbf{p}_{G'}
    =
    \mathbf{b}_G-\mathbf{b}_{G'}.
\end{equation}
Add and subtract $A_G\mathbf{p}_{G'}$:
\begin{equation}
    A_G(\mathbf{p}_G-\mathbf{p}_{G'})
    +
    (A_G-A_{G'})\mathbf{p}_{G'}
    =
    \mathbf{b}_G-\mathbf{b}_{G'}.
\end{equation}
Therefore,
\begin{equation}
    A_G\mathbf{e}
    =
    (\mathbf{b}_G-\mathbf{b}_{G'})
    -
    (A_G-A_{G'})\mathbf{p}_{G'}.
\end{equation}
Since $A_G$ is positive definite and $\lambda_{\min}(A_G)\geq \alpha$, we have
\begin{equation}
    \|A_G^{-1}\|_2
    \leq
    \frac{1}{\alpha}.
\end{equation}
Hence,
\begin{equation}
    \|\mathbf{e}\|_2
    \leq
    \|A_G^{-1}\|_2
    \left(
    \|\mathbf{b}_G-\mathbf{b}_{G'}\|_2
    +
    \|A_G-A_{G'}\|_2
    \|\mathbf{p}_{G'}\|_2
    \right).
\end{equation}
Using the assumed perturbation bounds,
\begin{equation}
    \|\mathbf{b}_G-\mathbf{b}_{G'}\|_2
    \leq
    L_b d_{\rm Geo}(G,G'),
\end{equation}
and
\begin{equation}
    \|A_G-A_{G'}\|_2
    \leq
    L_A d_{\rm Geo}(G,G'),
\end{equation}
we obtain
\begin{equation}
    \|\mathbf{p}_G-\mathbf{p}_{G'}\|_2
    \leq
    \frac{1}{\alpha}
    \left(
    L_b
    +
    L_A\|\mathbf{p}_{G'}\|_2
    \right)
    d_{\rm Geo}(G,G').
\end{equation}
This proves the stated bound.
\end{proof}

In this discrete statement, the constant $\alpha$ measures the conditioning of the finite-volume pressure operator.
If $\alpha$ is bounded away from zero, the pressure system is well anchored by the Dirichlet boundaries and does not contain nearly disconnected floating components.
If $\alpha$ becomes small, small changes in geometry or boundary forcing can produce large changes in pressure.
This is the discrete analog of losing coercivity in the weak Darcy formulation.
The perturbation term $\|A_G-A_{G'}\|_2$ captures changes in graph connectivity and edge transmissibilities.
In a finite-volume graph view, one may write schematically
\begin{equation}
    A_G \approx B_G^{\top} W_G B_G + D_G,
\end{equation}
where $B_G$ is an incidence matrix encoding pore-cell connectivity, $W_G$ contains edge transmissibilities, and $D_G$ contains Dirichlet boundary contributions.
The solid--void mask affects both $B_G$ and $W_G$: it determines which neighboring cells are connected and what conductance is assigned to each open face.
The vector $\mathbf{b}_G$ also depends on geometry through the boundary faces connected to prescribed inlet and outlet pressures.

This discrete perturbation estimate is directly aligned with the GeONet training setting.
GeONet observes pressure values on a finite-volume grid, while the Leaf Network encodes the binary mask that determines the graph structure of $A_G$.
Therefore, if an unseen RCP geometry remains close to the training distribution under $d_{\rm Geo}$ and does not undergo a connectivity-changing transition, its matrix $A_G$, right-hand side $\mathbf{b}_G$, and pressure solution $\mathbf{p}_G$ are expected to vary continuously.
This provides a discrete mathematical rationale for cross-domain generalization.

%%%%%% %%%%%% %%%%%% %%%%%% %%%%%% %%%%%% %%%%%% %%%%%% %%%%%% %%%%%% 
%%%%%% Need to update this accordingly.
\subsection{Low-rank smoothing in coordinate-separable operator models}
\label{app:low_rank}
A standard Branch--Trunk DeepONet represents the pressure field as
\begin{equation}
    \widehat{p}(\mathbf{s})
    =
    \sum_{i=1}^{r}
    b_i(\phi,k)\,t_i(\mathbf{s}),
\end{equation}
where $b_i(\phi,k)$ are Branch-network coefficients and $t_i(\mathbf{s})$ are Trunk-network coordinate functions.
After training, the functions $t_1,\ldots,t_r$ define a fixed coordinate basis that is shared across all samples.
The Branch Network changes only the coefficients multiplying these basis functions.

Therefore, for a fixed trained model, every predicted pressure field lies in the same finite-dimensional coordinate space,
\begin{equation}
    \widehat{p}
    \in
    \mathrm{span}\{t_1,\ldots,t_r\}.
\end{equation}
This is a coordinate-separable representation: material-property information enters through the coefficients, while spatial variation is represented through a fixed set of coordinate modes.
This observation does not imply that DeepONet is incapable of approximating complex pressure fields.
With sufficiently large rank, additional inputs, or architecture modifications, a DeepONet-type model can represent richer solution families.
Rather, the point is that a finite-rank coordinate-separable model can smooth localized geometry-induced pressure features when those features occur at geometry-dependent locations.
In that case, a single fixed coordinate basis must represent many possible locations of sharp pressure drops, bottlenecks, and pore-boundary effects.

\begin{proposition}[Projection error for geometry-induced pressure modes]
\label{prop:projection_error}
Let $\{p_{G_j}\}_{j=1}^{J}$ be pressure fields associated with $J$ admissible RCP geometries, evaluated on a common grid or common comparison domain.
Let
\begin{equation}
    \bar{p}
    =
    \frac{1}{J}
    \sum_{m=1}^{J}
    p_{G_m}
\end{equation}
be the empirical mean pressure field, and define the centered fields
\begin{equation}
    \widetilde{p}_{G_j}
    =
    p_{G_j}-\bar{p}.
\end{equation}
Let
\begin{equation}
    \lambda_1\geq \lambda_2\geq\cdots
\end{equation}
be the eigenvalues of the empirical covariance operator associated with the centered pressure fields.
Then any representation whose coordinate dependence lies in a fixed $r$-dimensional subspace has average squared reconstruction error at least
\begin{equation}
    \sum_{i>r}\lambda_i .
\end{equation}
\end{proposition}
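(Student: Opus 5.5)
This is the Eckart--Young--Mirsky / PCA optimality theorem, so the plan is to reduce the statement to it. First I will fix the setting. Work on the common grid, so that each centered field $\widetilde{p}_{G_j}$ is a vector in $\mathbb{R}^n$; on a continuous comparison domain one uses $L^2$, and since the span of $J$ fields is finite-dimensional the argument is identical. Form the snapshot matrix $X=[\widetilde{p}_{G_1},\ldots,\widetilde{p}_{G_J}]$ and the empirical covariance $C=\frac{1}{J}XX^{\top}$, whose eigenvalues are the $\lambda_i$. A representation ``whose coordinate dependence lies in a fixed $r$-dimensional subspace'' means every predicted field has the form $\bar{p}+\sum_{i=1}^r c_{ij}t_i$ with $\mathcal{V}=\mathrm{span}\{t_1,\ldots,t_r\}$ fixed across samples. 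I will take the mean offset as a free constant, in analogy with the bias $c$ in the DeepONet form. If the offset is not allowed, the error only grows, or one can absorb it by working with an $(r+1)$-dimensional space; I will state the centered version explicitly.

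Next, for any coefficients, the average error $\frac{1}{J}\sum_j\|\widetilde{p}_{G_j}-v_j\|^2$ with $v_j\in\mathcal{V}$ is minimized by the orthogonal projection, $v_j=P_{\mathcal{V}}\widetilde{p}_{G_j}$. So the error of any such representation is at least
\[
E(\mathcal{V})=\frac{1}{J}\sum_{j}\|(I-P_{\mathcal{V}})\widetilde{p}_{G_j}\|^2=\operatorname{tr}\big((I-P_{\mathcal{V}})C\big)=\operatorname{tr}(C)-\operatorname{tr}(P_{\mathcal{V}}C).
\]

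The key step is Ky Fan's maximum principle. For an orthonormal basis $u_1,\ldots,u_r$ of $\mathcal{V}$ we have $\operatorname{tr}(P_{\mathcal{V}}C)=\sum_{i=1}^r u_i^{\top}Cu_i\le\sum_{i=1}^r\lambda_i$. I would prove this either by citing Ky Fan or directly: expand in the eigenbasis of $C$ and write $\sum_i u_i^{\top}Cu_i=\sum_k\lambda_k w_k$ with weights $w_k=\|P_{\mathcal{V}}e_k\|^2\in[0,1]$ summing to $r$. A linear function with decreasing coefficients, under these constraints, is maximized by putting $w_k=1$ on the top $r$ indices. Since $\operatorname{tr}(C)=\sum_i\lambda_i$, this gives $E(\mathcal{V})\ge\sum_{i>r}\lambda_i$. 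Equality holds for the span of the leading eigenvectors, so the bound is sharp.

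I expect the main obstacle to be interpretive rather than technical. The statement does not specify the normalization (averaging by $1/J$ versus $1/(J-1)$, which must match the covariance convention) or how the mean term is treated. It also does not say whether the class includes nonlinear or coefficient-dependent decoders. I will state plainly that the bound applies only to outputs of the form $\bar{p}+\sum_i c_{ij}t_i$ with the $t_i$ independent of the geometry. This includes the coordinate-separable DeepONet of this section, but not GeONet, whose Leaf-dependent terms make the effective coordinate basis vary with $G$.
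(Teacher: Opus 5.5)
Your proposal is correct and takes essentially the same route as the paper: it reduces any fixed-subspace representation to the orthogonal-projection error and then uses PCA (Eckart--Young) optimality to obtain the bound $\sum_{i>r}\lambda_i$. You go further than the paper in two respects: you prove the optimality step with Ky Fan's trace inequality rather than asserting it, and you handle the mean offset explicitly, which the paper's proof skips when it compares centered fields against an uncentered model output.
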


\begin{proof}
Let $\mathcal{U}_r$ be any $r$-dimensional coordinate subspace, and let $\Pi_{\mathcal{U}_r}$ denote the orthogonal projection onto that subspace.
The average squared reconstruction error of projecting the centered pressure fields onto $\mathcal{U}_r$ is
\begin{equation}
    \frac{1}{J}
    \sum_{j=1}^{J}
    \left\|
    \widetilde{p}_{G_j}
    -
    \Pi_{\mathcal{U}_r}\widetilde{p}_{G_j}
    \right\|^2 .
\end{equation}
Among all $r$-dimensional subspaces, this quantity is minimized by the span of the first $r$ empirical principal modes of the pressure ensemble.
The minimum achievable error is the sum of the neglected covariance eigenvalues,
\begin{equation}
    \sum_{i>r}\lambda_i .
\end{equation}
A coordinate-separable model with $r$ fixed trunk functions produces pressure fields in one such $r$-dimensional coordinate subspace.
Therefore, its average squared reconstruction error cannot be smaller than the optimal rank-$r$ projection error.
\end{proof}

For RCP geometries, solid barriers, pore constrictions, disconnected regions, and preferential channels generate pressure features whose locations depend on the geometry mask.
For example, a sharp pressure drop may occur at one location in $G_a$ but at a different location in $G_b$ because the constriction or high-resistance pathway has moved.
Such moving localized features typically require many global coordinate modes to represent accurately.
If the effective rank $r$ is too small, the best fixed-basis approximation tends to represent these features by their average effect, producing a smoother pressure field.

GeONet addresses this limitation by allowing geometry to modify the coordinate response before decoding the pressure.
Through Leaf--Trunk and Leaf--Branch--Trunk interactions, GeONet can form geometry-dependent coordinate features of the form
\begin{equation}
    \Psi_i\left(
    \boldsymbol{\ell}_G(\mathbf{s}),
    \mathbf{b}_G,
    \boldsymbol{\tau}(\mathbf{s})
    \right),
\end{equation}
where $\boldsymbol{\ell}_G(\mathbf{s})$ is the geometry feature, $\mathbf{b}_G$ is the material-property feature, and $\boldsymbol{\tau}(\mathbf{s})$ is the coordinate feature.
Thus, the effective coordinate basis is no longer fixed across all geometries.
It can shift, deform, or activate differently depending on the solid--void mask.
In this sense, GeONet replaces a single global coordinate subspace with a geometry-conditioned family of coordinate responses.
This provides a mathematical explanation for why GeONet can preserve sharp pressure transitions and localized pore-scale features that are finite-rank coordinate-separable, which a baseline tends to smooth.

\begin{proposition}[Geometry-gated basis enrichment]
\label{prop:basis_enrichment}
Suppose the Leaf Network and gating module partition the geometry latent space into $Q$ distinguishable regimes.
Suppose further that, within each regime, the decoder can represent a rank-$r$ Branch--Trunk coordinate model.
Then the resulting GeONet class can represent a union of $Q$ rank-$r$ coordinate subspaces.
The dimension of the span of this union is at most $Qr$, and it can reach $Qr$ when the regime-specific coordinate subspaces are linearly independent.
\end{proposition}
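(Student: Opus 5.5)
The argument is constructive: build the union of subspaces explicitly from regime-specific Branch--Trunk models, then bound the dimension of its span by linear algebra. Let the Leaf Network together with the gate induce a partition of the geometry latent space into regimes $R_1,\dots,R_Q$. Concretely, there is a map $q:\boldsymbol{\ell}_G\mapsto q(G)\in\{1,\dots,Q\}$, and we idealize the gate $\mathbf{g}_G(\mathbf{s})$ as a one-hot (or saturated sigmoid) indicator of $q(G)$. By hypothesis, for each regime $q$ the decoder realizes a rank-$r$ model
\begin{equation}
\widehat{y}(\mathbf{s})=\sum_{i=1}^{r} b^{(q)}_i(\phi_G,k_G)\,t^{(q)}_i(\mathbf{s})+c^{(q)} .
\end{equation}
This is exactly the construction in the proof of Proposition~\ref{prop:deeponet_subset}, applied regime-by-regime. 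The regime dependence enters through the Leaf--Trunk and Leaf--Branch--Trunk products $\widetilde{\boldsymbol{\ell}}_G\odot\widetilde{\boldsymbol{\tau}}(\mathbf{s})$ and $\widetilde{\boldsymbol{\ell}}_G\odot\widetilde{\mathbf{b}}_G\odot\widetilde{\boldsymbol{\tau}}(\mathbf{s})$. We choose $\widetilde{\boldsymbol{\ell}}_G$ to take block one-hot values supported on a block of $r$ channels indexed by $q(G)$, and let $\widetilde{\boldsymbol{\tau}}$ contain the $Qr$ functions $t^{(q)}_i$ stacked in those blocks. The Hadamard product then switches on exactly the block $\{t^{(q)}_i\}_{i\le r}$ for geometries in regime $q$.

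With the representation in place, for each fixed regime the set of predicted coordinate fields lies in $\mathcal{U}_q=\mathrm{span}\{t^{(q)}_1,\dots,t^{(q)}_r\}$, after absorbing constants into one trunk function if needed. Ranging over all admissible geometries and material inputs, the GeONet class therefore realizes fields in $\bigcup_{q=1}^{Q}\mathcal{U}_q$. This gives the first claim.

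The dimension bound is elementary: $\mathrm{span}\bigl(\bigcup_q\mathcal{U}_q\bigr)=\mathcal{U}_1+\cdots+\mathcal{U}_Q$, which is spanned by the $Qr$ functions $\{t^{(q)}_i\}$. Its dimension is thus at most $Qr$, with equality if and only if these $Qr$ functions are linearly independent. Equality holds precisely when each $\mathcal{U}_q$ has dimension $r$ and the sum is direct, which is the stated condition that the regime-specific subspaces be linearly independent.

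The main obstacle is making the regime selection rigorous with a smooth sigmoid gate and continuous Leaf features rather than an exact indicator. I would first handle this by taking the "distinguishable regimes" hypothesis to mean that the Leaf latents of different regimes are separated. A scaled gate $\sigma(\lambda(\cdot))$ then converges to the indicator as $\lambda\to\infty$, so the union is attained in the limit, consistent with the "limiting case" language of Proposition~\ref{prop:deeponet_subset}. Alternatively, one can assume directly, as the statement permits, that the architecture realizes piecewise regime selection.
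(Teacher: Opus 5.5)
Your proof is correct and its core is the same as the paper's: the span of $\bigcup_{q}\mathcal{U}_q$ is $\mathcal{U}_1+\cdots+\mathcal{U}_Q$, which is spanned by the $Qr$ trunk functions, so its dimension is at most $Qr$, with equality exactly when each $\mathcal{U}_q$ is $r$-dimensional and the sum is direct. The paper takes regime-wise rank-$r$ representability directly from the hypotheses, so your block one-hot Leaf/Hadamard-product construction and the sigmoid-limit discussion are optional scaffolding rather than required steps; note also that if you keep the biases $c^{(q)}$ outside the $r$ trunk functions, they add at most one extra constant direction, shared across all regimes.
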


\begin{proof}
Let $\mathcal{U}_q$ denote the rank-$r$ coordinate subspace used in regime $q$, where $q=1,\ldots,Q$.
Within a fixed regime, the model represents pressure fields in $\mathcal{U}_q$.
Across all regimes, the possible coordinate responses lie in the union
\begin{equation}
    \bigcup_{q=1}^{Q}\mathcal{U}_q .
\end{equation}
The linear span of this union is
\begin{equation}
    \mathcal{U}_1+\mathcal{U}_2+\cdots+\mathcal{U}_Q,
\end{equation}
whose dimension is at most $Qr$.
If the subspaces are linearly independent, the dimension is exactly $Qr$.
\end{proof}

The model does not need to explicitly label these regimes.
They are learned through the Leaf features and gates.
The practical effect is that GeONet can place pressure transitions, bottleneck-induced gradients, and local extrema at geometry-dependent locations instead of forcing all geometries through a single global coordinate basis.

\subsection{Residual-to-error estimate for GeONet pressure predictions}
\label{app:residual_error}

Although the GeONet models reported in this manuscript are trained using supervised pressure losses, their predictions can also be interpreted through the residual of the Darcy pressure equation.
This provides a bridge between data-driven training and physics-informed error analysis.

Given a predicted pressure field $\widehat{p}_G$, define the interior Darcy residual
\begin{equation}
    \mathcal{R}_G(\widehat{p}_G)
    =
    -\nabla\cdot
    \left(
    \kappa_G\nabla \widehat{p}_G
    \right)
    \qquad
    \text{in } \Omega_f(G),
\end{equation}
where
\begin{equation}
    \kappa_G=\frac{k_G}{\mu}.
\end{equation}
If $\widehat{p}_G$ satisfies the same Dirichlet pressure values as the true solution $p_G$, then the error
\begin{equation}
    e=p_G-\widehat{p}_G
\end{equation}
belongs to $V_{0,G}$ and satisfies
\begin{equation}
    \int_{\Omega_f(G)}
    \kappa_G\nabla e\cdot\nabla v\,d\mathbf{x}
    =
    \langle
    \mathcal{R}_G(\widehat{p}_G),v
    \rangle
    \qquad
    \forall v\in V_{0,G}.
\end{equation}

\begin{theorem}[Energy-norm error controlled by Darcy residual]
\label{thm:residual_bound}
Assume $\widehat{p}_G$ satisfies the same Dirichlet boundary values as $p_G$.
Then
\begin{equation}
    \|\nabla(p_G-\widehat{p}_G)\|_{L^2(\Omega_f(G))}
    \leq
    \frac{1}{\kappa_{\min}}
    \|\mathcal{R}_G(\widehat{p}_G)\|_{V_{0,G}^{*}},
\end{equation}
where $V_{0,G}^{*}$ is the dual space of $V_{0,G}$ with norm
\begin{equation}
    \|\mathcal{R}\|_{V_{0,G}^{*}}
    =
    \sup_{v\in V_{0,G},\,v\neq 0}
    \frac{\langle \mathcal{R},v\rangle}
    {\|\nabla v\|_{L^2(\Omega_f(G))}} .
\end{equation}
\end{theorem}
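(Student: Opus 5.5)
The plan is the standard energy argument, combining coercivity from Assumption~\ref{ax:coefficients} with the definition of the dual norm. Set $e=p_G-\widehat{p}_G$. Because $\widehat{p}_G$ carries the same Dirichlet values as $p_G$ on $\Gamma_{\rm in}(G)\cup\Gamma_{\rm out}(G)$, the error satisfies $e\in V_{0,G}$. This makes $e$ itself an admissible test function in the error identity displayed just before the statement:
\[
\int_{\Omega_f(G)}\kappa_G\nabla e\cdot\nabla v\,d\mathbf{x}=\langle\mathcal{R}_G(\widehat{p}_G),v\rangle,\qquad v\in V_{0,G}.
\]
First I would justify that identity. Integrating $-\nabla\cdot(\kappa_G\nabla\widehat{p}_G)$ against $v$ by parts, the boundary terms vanish: on the Dirichlet parts because $v=0$ there, and on $\Gamma_{\rm nf}(G)\cup\Gamma_s(G)$ by the no-flow condition. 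Subtracting the weak equation satisfied by $p_G$ from Theorem~\ref{thm:wellposedness} then gives the identity. The residual is interpreted distributionally as an element of $V_{0,G}^{*}$, with the pairing defined by $\langle\mathcal{R}_G(\widehat{p}_G),v\rangle:=-\int\kappa_G\nabla\widehat{p}_G\cdot\nabla v$.

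Next I would take $v=e$. The lower bound $\kappa_G\ge\kappa_{\min}$ gives
\[
\kappa_{\min}\|\nabla e\|_{L^2(\Omega_f(G))}^2\le\int_{\Omega_f(G)}\kappa_G|\nabla e|^2\,d\mathbf{x}=\langle\mathcal{R}_G(\widehat{p}_G),e\rangle.
\]
The definition of the dual norm, which uses the seminorm $\|\nabla v\|_{L^2}$, bounds the right-hand side by $\|\mathcal{R}_G(\widehat{p}_G)\|_{V_{0,G}^{*}}\|\nabla e\|_{L^2}$. If $\nabla e=0$ the claim is trivial. Otherwise I divide by $\|\nabla e\|_{L^2}$ to obtain the stated bound with constant $1/\kappa_{\min}$.

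The main obstacle is whether the dual norm is well defined. We need $\|\nabla v\|_{L^2}$ to be a genuine norm on $V_{0,G}$, so that the supremum excludes nonzero constants. This follows from the Poincar\'e inequality on $V_{0,G}$ already used in Theorem~\ref{thm:wellposedness}, which relies on Assumption~\ref{ax:connectivity} and on the problem being posed on the component connected to the Dirichlet boundaries. A secondary point is the regularity of $\widehat{p}_G$: it must lie in $H^1(\Omega_f(G))$ for the pairing to make sense. This holds for smooth network outputs restricted to a Lipschitz fluid domain, and I would state it as a standing assumption.
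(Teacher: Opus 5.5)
Your argument is correct and is essentially the paper's proof: test the error identity with $v=e$, use $\kappa_G\ge\kappa_{\min}$, bound $\langle\mathcal{R}_G(\widehat{p}_G),e\rangle$ by $\|\mathcal{R}_G(\widehat{p}_G)\|_{V_{0,G}^{*}}\|\nabla e\|_{L^2(\Omega_f(G))}$, and divide; your handling of the case $\nabla e=0$ and your Poincar\'e remark on the well-definedness of the dual norm are sound additions that the paper omits. One caution on your justification of the identity: $\widehat{p}_G$ is only assumed to match the Dirichlet data, not the no-flow condition, so integrating the strong residual by parts leaves a flux term on $\Gamma_{\rm nf}(G)\cup\Gamma_s(G)$ and also gives the opposite sign, and it is really your weak definition of the pairing that makes the identity hold (it absorbs the flux mismatch, and the sign does not matter because the dual norm is invariant under $v\mapsto -v$).
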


\begin{proof}
Set $v=e$ in the residual equation:
\begin{equation}
    \int_{\Omega_f(G)}
    \kappa_G|\nabla e|^2\,d\mathbf{x}
    =
    \langle
    \mathcal{R}_G(\widehat{p}_G),e
    \rangle .
\end{equation}
Using $\kappa_G\geq \kappa_{\min}$ gives
\begin{equation}
    \kappa_{\min}
    \|\nabla e\|_{L^2(\Omega_f(G))}^2
    \leq
    \langle
    \mathcal{R}_G(\widehat{p}_G),e
    \rangle .
\end{equation}
By the definition of the dual norm,
\begin{equation}
    \langle
    \mathcal{R}_G(\widehat{p}_G),e
    \rangle
    \leq
    \|\mathcal{R}_G(\widehat{p}_G)\|_{V_{0,G}^{*}}
    \|\nabla e\|_{L^2(\Omega_f(G))}.
\end{equation}
Dividing by $\kappa_{\min}\|\nabla e\|_{L^2(\Omega_f(G))}$ yields the desired bound.
\end{proof}

If the predicted pressure does not exactly satisfy the Dirichlet boundary conditions, the error estimate can be extended by adding a boundary mismatch term.
In that case, one first lifts the boundary discrepancy into the domain and applies the same residual argument to the remaining homogeneous error.
This residual bound is useful for future physics-informed extensions of GeONet.
It also explains why prediction errors often concentrate near sharp pressure gradients, pore constrictions, and solid barriers: these are the regions where local residuals are most sensitive to geometry, edge transmissibility, and finite-volume discretization.

\subsection{Coordinate-sampling loss and unbiased grid-risk estimation}
\label{app:coordinate_sampling}

GeONet is trained by sampling coordinates from high-resolution pressure fields rather than by processing every grid point at every optimization step.
This coordinate batching is memory efficient and is consistent with the operator learning objective.

Let $F_G$ denote the set of active fluid cells for geometry $G$.
The full-grid pressure risk for a model $\widehat{p}_{\theta}$ is
\begin{equation}
    \mathcal{E}_h(\theta;G)
    =
    \frac{1}{|F_G|}
    \sum_{i\in F_G}
    \left|
    \widehat{p}_{\theta}
    (\mathbf{x}_i;m_G,\phi_G,k_G)
    -
    p_G(\mathbf{x}_i)
    \right|^2 .
\end{equation}
During training, a subset of coordinates
\begin{equation}
    \mathcal{S}_G\subset F_G
\end{equation}
is sampled.
The mini-batch estimator is
\begin{equation}
    \widehat{\mathcal{E}}_h(\theta;G)
    =
    \frac{1}{|\mathcal{S}_G|}
    \sum_{i\in \mathcal{S}_G}
    \left|
    \widehat{p}_{\theta}
    (\mathbf{x}_i;m_G,\phi_G,k_G)
    -
    p_G(\mathbf{x}_i)
    \right|^2 .
\end{equation}

\begin{theorem}[Unbiasedness of coordinate-sampled training]
\label{thm:unbiased_sampling}
If $\mathcal{S}_G$ is sampled uniformly from $F_G$, then
\begin{subequations}
\begin{align}
    \mathbb{E}_{\mathcal{S}_G}
    \left[
    \widehat{\mathcal{E}}_h(\theta;G)
    \right]
    &=
    \mathcal{E}_h(\theta;G).
\end{align}
If the sampled coordinates are independent and the pointwise squared error has finite variance, then
\begin{align}
    \mathrm{Var}
    \left[
    \widehat{\mathcal{E}}_h(\theta;G)
    \right]
    &=
    O(|\mathcal{S}_G|^{-1}).
\end{align}
\end{subequations}
\end{theorem}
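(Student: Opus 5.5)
Fix $\theta$ and $G$, and write the pointwise squared error as $e_i=|\widehat{p}_{\theta}(\mathbf{x}_i;m_G,\phi_G,k_G)-p_G(\mathbf{x}_i)|^2$ for $i\in F_G$. Then the full-grid risk is simply the finite average $\mathcal{E}_h(\theta;G)=|F_G|^{-1}\sum_{i\in F_G}e_i$. The plan is to rewrite the mini-batch estimator as a sum of indicator-weighted (or randomly indexed) copies of the $e_i$ and use linearity of expectation. No property of the network or of the Darcy problem enters, since the $e_i$ are fixed numbers once $\theta$ and $G$ are fixed.

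\emph{Unbiasedness.} I would treat two sampling models. In the first, coordinates are drawn with replacement: $I_1,\dots,I_n$ are i.i.d. uniform on $F_G$ with $n=|\mathcal{S}_G|$, and $\widehat{\mathcal{E}}_h=n^{-1}\sum_{k}e_{I_k}$. Each term satisfies $\mathbb{E}[e_{I_k}]=|F_G|^{-1}\sum_i e_i=\mathcal{E}_h$, so the average has the same mean. In the second, $\mathcal{S}_G$ is a uniformly random $n$-subset without replacement, and $\widehat{\mathcal{E}}_h=n^{-1}\sum_{i\in F_G}\mathbf{1}[i\in\mathcal{S}_G]\,e_i$. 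By symmetry $\mathbb{P}(i\in\mathcal{S}_G)=n/|F_G|$, and linearity of expectation again returns $\mathcal{E}_h$.

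\emph{Variance.} Under the independence hypothesis (the with-replacement model), the terms $e_{I_k}$ are i.i.d. with variance $\sigma^2=|F_G|^{-1}\sum_i(e_i-\mathcal{E}_h)^2$, which is finite because $F_G$ is finite. Hence $\mathrm{Var}[\widehat{\mathcal{E}}_h]=\sigma^2/n=O(|\mathcal{S}_G|^{-1})$. For the without-replacement model, I would add a remark that the variance equals $\frac{\sigma^2}{n}\cdot\frac{|F_G|-n}{|F_G|-1}$ (the finite-population correction), which is no larger than $\sigma^2/n$, so the same rate holds.

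The only real obstacle is interpretive rather than technical. The statement says "uniformly" without specifying whether sampling is with or without replacement, and for the variance claim it says "independent". I would therefore state the with-replacement model as the primary setting and handle the without-replacement case through the correction factor above. The expectation is taken conditional on fixed $\theta$ and $G$, so the per-step stochastic gradient interpretation follows by also exchanging expectation with $\nabla_\theta$ over the finite sum.
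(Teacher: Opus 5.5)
Your proposal is correct and follows essentially the same route as the paper: the expected pointwise squared error under a uniformly drawn cell equals $\mathcal{E}_h(\theta;G)$, linearity of expectation gives unbiasedness, and independence gives variance $\sigma^2/|\mathcal{S}_G|$. Your version is more explicit than the paper's. It gives the constant, notes that finite variance is automatic because $F_G$ is finite, and adds the without-replacement case via the finite-population correction, which the paper does not treat.
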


\begin{proof}
For a uniformly sampled coordinate $i\in F_G$,
\begin{equation}
    \mathbb{E}_{i}
    \left[
    \left|
    \widehat{p}_{\theta}
    (\mathbf{x}_i;m_G,\phi_G,k_G)
    -
    p_G(\mathbf{x}_i)
    \right|^2
    \right]
    =
    \mathcal{E}_h(\theta;G).
\end{equation}
Linearity of expectation gives the unbiasedness result for the sample average.
If coordinates are sampled independently, the variance of the average decreases inversely with the number of samples, giving the stated $O(|\mathcal{S}_G|^{-1})$ scaling.
\end{proof}

If coordinates are sampled nonuniformly, for example to oversample high-gradient regions or pore boundaries, the same estimator can be made unbiased by applying standard importance weights.
Uniform sampling, however, directly estimates the full-grid mean-squared error over the active pore cells.
This result clarifies why coordinate batching is compatible with operator learning.
GeONet minimizes the expected prediction error over geometries and coordinates while avoiding the memory cost of full-field optimization at every training step.

\subsection{Relationship between training loss and reported metrics}
\label{app:metrics_analysis}

Let
\begin{equation}
    \mathbf{e}
    =
    \widehat{\mathbf{p}}-\mathbf{p}
\end{equation}
denote the vector of prediction errors on the evaluated fluid cells, and let $N$ be the number of evaluated cells.
The root mean squared error is
\begin{equation}
    \mathrm{RMSE}
    =
    \frac{\|\mathbf{e}\|_2}{\sqrt{N}}.
\end{equation}
The normalized RMSE reported in the one-to-one pressure comparisons is
\begin{equation}
    \mathrm{NRMSE}
    =
    \frac{\|\mathbf{e}\|_2}
    {\sqrt{N}\left(P_{\max}-P_{\min}\right)}.
\end{equation}
Thus, minimizing the mean-squared training loss directly reduces the numerator of RMSE and NRMSE.

The coefficient of determination can be written as
\begin{equation}
    R^2
    =
    1
    -
    \frac{\|\mathbf{e}\|_2^2}
    {\|\mathbf{p}-\bar{P}\mathbf{1}\|_2^2},
\end{equation}
where $\bar{P}$ is the spatial mean of the reference pressure field.
Therefore, a high $R^2$ indicates that the squared error is small relative to the intrinsic spatial variance of the pressure field.

The mean absolute percentage error is
\begin{equation}
    \mathrm{MAPE}
    =
    \frac{100}{N}
    \sum_{i=1}^{N}
    \left|
    \frac{e_i}{p_i+\epsilon}
    \right|,
\end{equation}
where $\epsilon>0$ prevents division by values close to zero.
This is important when pressure is shifted or normalized so that outlet values may approach zero.
MAPE provides an interpretable percent-scale error, while NRMSE and $R^2$ measure error relative to the pressure range and spatial variance.

The pressure histograms and profile plots complement these scalar metrics.
NRMSE, MAPE, and $R^2$ quantify global agreement, whereas histograms and one-dimensional pressure slices assess whether the model preserves distributional tails, localized pressure drops, and geometry-induced gradients.

\subsection{DeepONet-BC Hyperparameter Optimization Diagnostics}
\label{app:deeponet_hpo}

This subsection reports the complete DeepONet-BC HPO diagnostics corresponding to the GeONet analyses in the main text.
DeepONet-BC was evaluated using the same data partitions, trial budgets, optimization objective, and applicable search ranges.
Its trunk formulation emphasizes the prescribed boundary conditions, and its leading configurations use zero or mild random Fourier features.
These figures document how the BC-focused branch--trunk architecture responds to the reduced- and full-data optimization stages.

During Stage~I, the parameter-wise distributions occupy a comparatively narrow loss range (Fig.~\ref{fig:deeponet_param_500}).
Learning rate provides the clearest separation among trials, while changes in weight decay, gradient-loss weight, latent dimensions, Fourier features, and trunk depth produce extensive overlap.
The global analysis reaches the same conclusion (Fig.~\ref{fig:deeponet_sensitivity_500}a): learning rate has the largest rank correlation with validation loss ($\rho=+0.60$), followed by much weaker associations for trunk depth ($\rho=-0.24$) and trunk width ($\rho=-0.20$).
Architecture-level distributions overlap strongly (Fig.~\ref{fig:deeponet_sensitivity_500}b), and the isolated lowest-loss trial visible in the search trajectory (Fig.~\ref{fig:deeponet_sensitivity_500}c) does not alter the broader pattern of limited separation among configurations.

\begin{figure}
    \centering
    \includegraphics[width=0.92\textwidth]{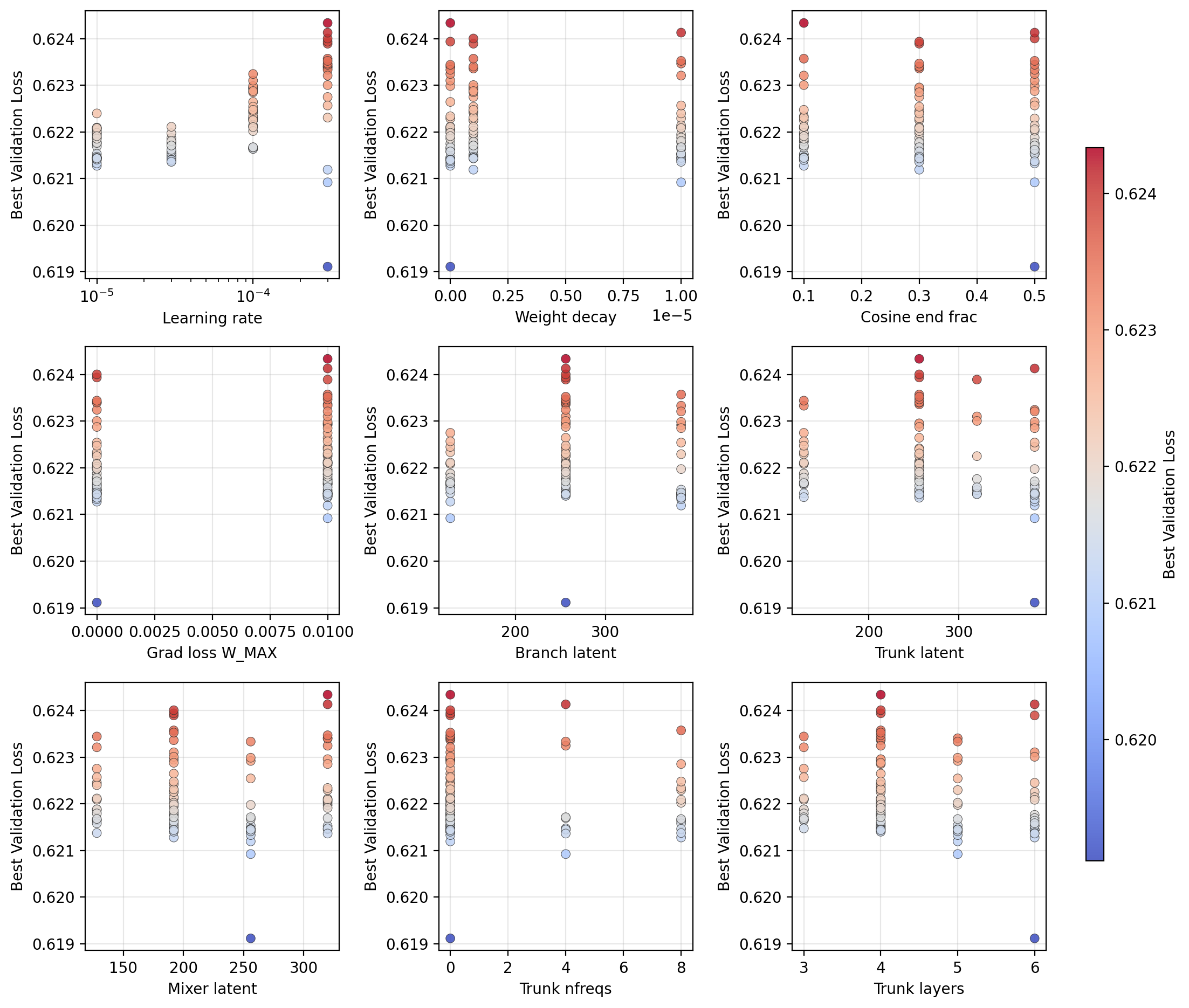}
    \caption{\textbf{Parameter-wise variation in Stage~I DeepONet-BC validation performance ($N=500$).}
    Best validation loss is shown against each sampled optimization and architecture hyperparameter; marker color denotes validation loss.
    Learning rate produces the most visible separation, whereas the remaining parameters exhibit strongly overlapping loss distributions.
    The broad overlap indicates limited sensitivity to isolated changes in model width, Fourier features, or trunk depth within the evaluated Stage~I space.
    The minimum loss of $0.61911$ is attained by free-form trial~040, which uses a wide, six-layer trunk and $n_f=0$.}
    \label{fig:deeponet_param_500}
\end{figure}

\begin{figure}
    \centering
    \includegraphics[width=0.99\textwidth]{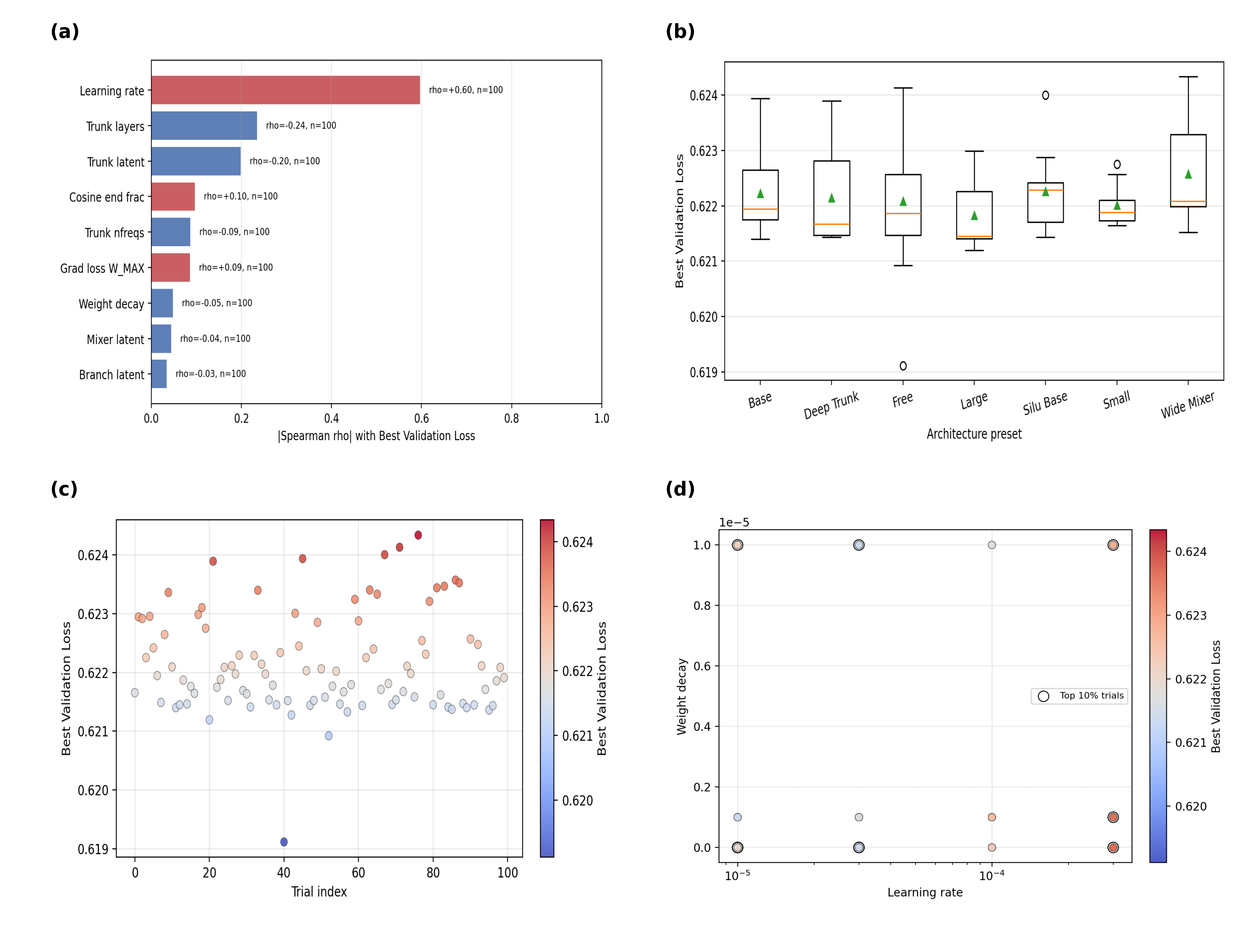}
    \caption{\textbf{Stage~I sensitivity and architecture analysis for DeepONet-BC ($N=500$).}
    (A) Absolute Spearman rank correlations between sampled hyperparameters and best validation loss; bar color indicates correlation sign.
    (B) Validation-loss distributions by architecture preset, with orange median lines and green mean markers.
    (C) Best validation loss across the 100-trial search.
    (D) Joint learning-rate--weight-decay projection, with the top 10\% of trials outlined.
    Learning rate is the dominant sampled control, while the extensive overlap among architecture families indicates weak architecture-level separation.}
    \label{fig:deeponet_sensitivity_500}
\end{figure}

Following transfer to $N=5000$, the DeepONet-BC trials separate into a lower-loss band near $0.567$--$0.579$ and a distinct high-loss band near $0.623$ (Fig.~\ref{fig:deeponet_param_5000}).
The poorest trials occur at the lowest learning rate, while several architecture and optimizer settings map to nearly the same lower band.
Accordingly, the Stage~II rank analysis identifies learning rate as the dominant variable ($\rho=-0.74$), followed by weight decay ($\rho=-0.45$); the remaining correlations are modest (Fig.~\ref{fig:deeponet_sensitivity_5000}a).
The architecture distributions and trial sequence separate configurations into two performance regimes (Fig.~\ref{fig:deeponet_sensitivity_5000}b,c).
The best loss of $0.56718$ represents an $8.4\%$ reduction from the Stage~I optimum, demonstrating that the BC-focused formulation benefits from the larger training cohort.
The separation also shows that this improvement depends on selecting an effective transferred seed and optimization schedule.

\begin{figure}
    \centering
    \includegraphics[width=0.92\textwidth]{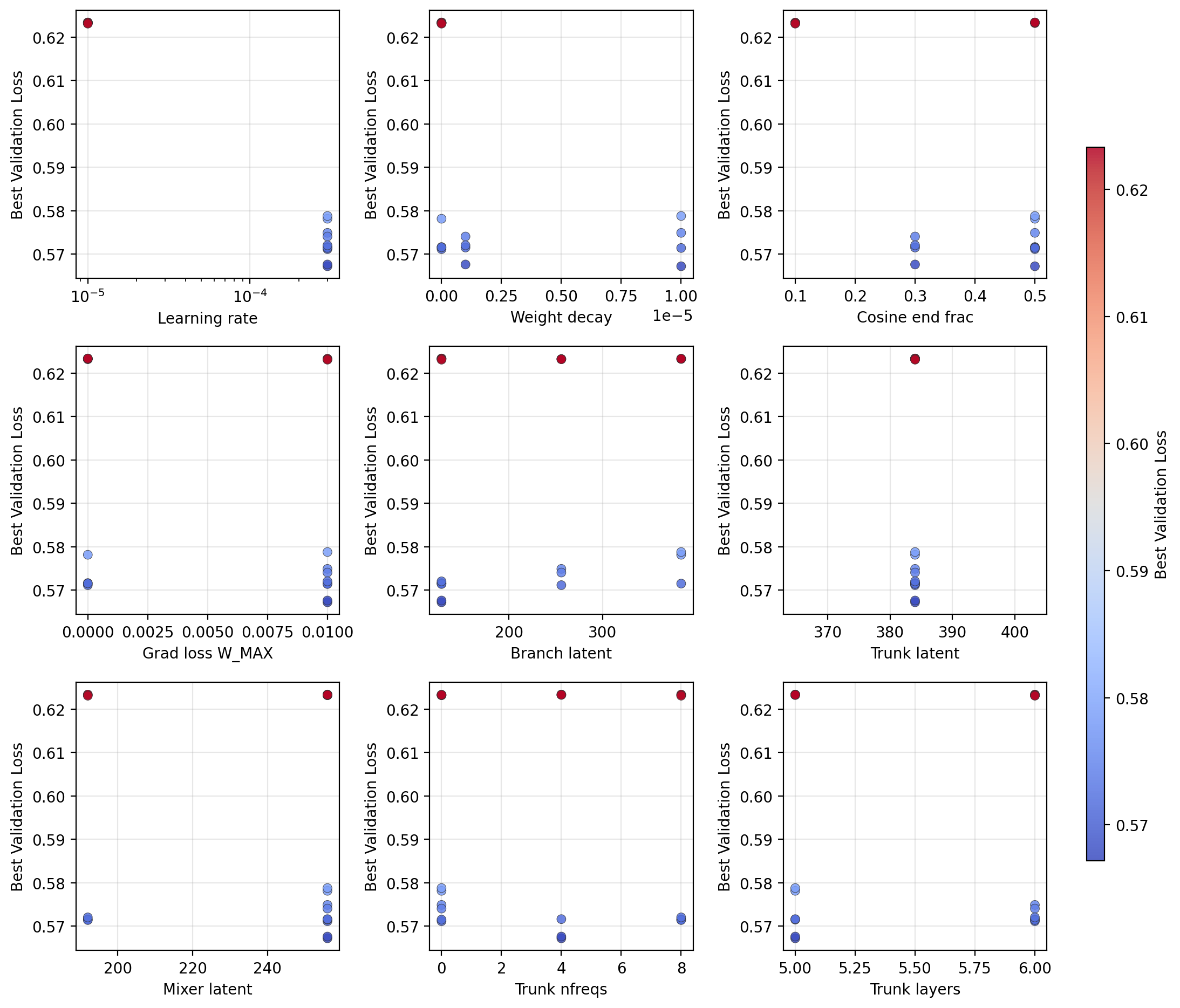}
    \caption{\textbf{Parameter-wise variation in Stage~II DeepONet-BC validation performance ($N=5000$).}
    Best validation loss is plotted against the optimization and architecture settings evaluated during transferred refinement; marker color denotes validation loss.
    The trials separate into a narrow lower-loss band and a high-loss regime associated most clearly with the lowest learning rate.
    The Stage~II minimum of $0.56718$ is achieved by trial~001 using a trunk width of 384, $n_f=4$, $\sigma_f=1$, and $\eta=3\times10^{-4}$.}
    \label{fig:deeponet_param_5000}
\end{figure}

\begin{figure}
    \centering
    \includegraphics[width=0.99\textwidth]{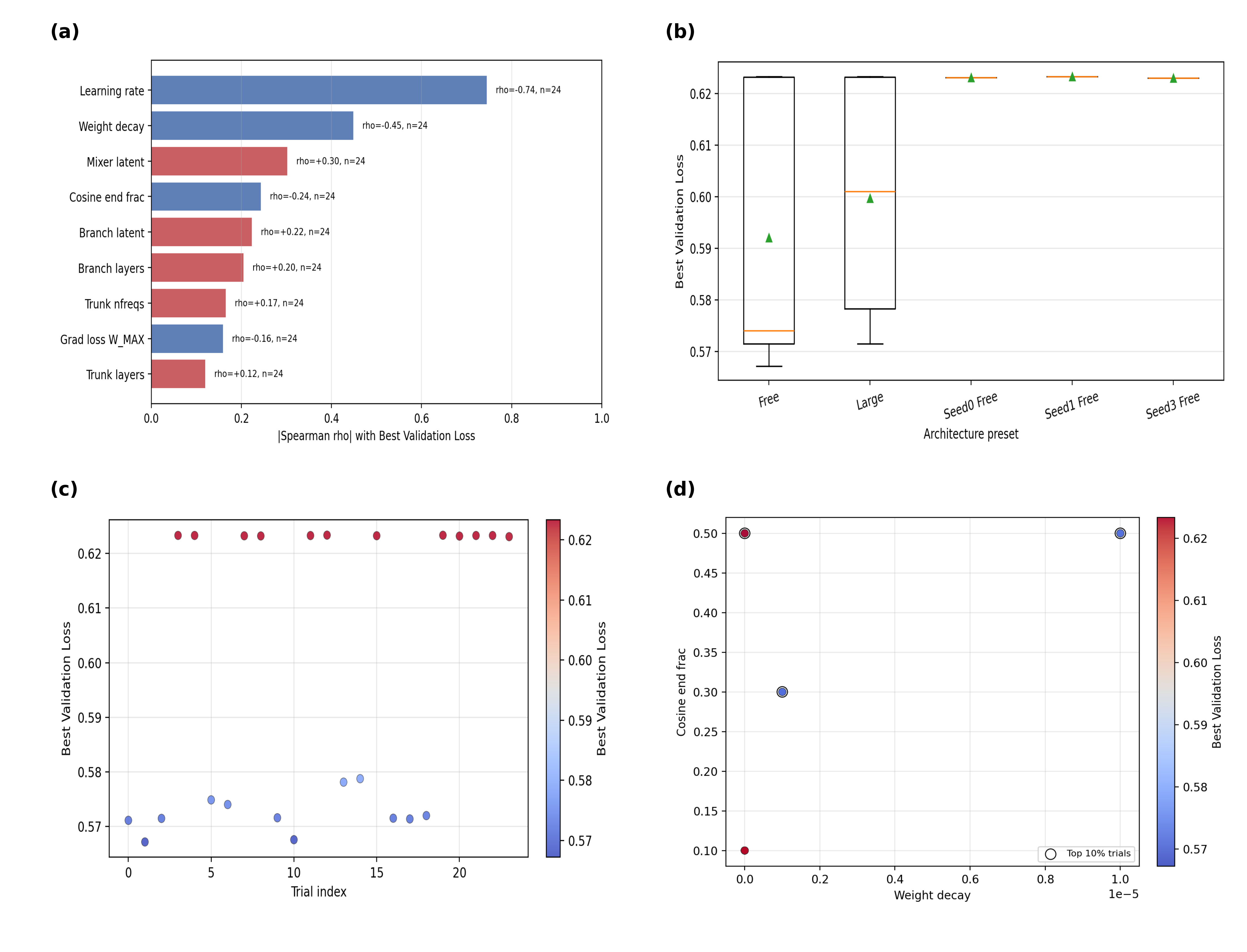}
    \caption{\textbf{Stage~II sensitivity and architecture analysis for DeepONet-BC ($N=5000$).}
    (A) Absolute Spearman rank correlations between locally varied hyperparameters and best validation loss; bar color indicates correlation sign.
    (B) Validation-loss distributions for transferred architecture families, with medians and means shown by orange lines and green triangles.
    (C) Best validation loss across the 24 refinement trials.
    (D) Weight-decay--cosine-end-fraction projection, with the top 10\% of trials outlined.
    Learning rate and weight decay provide the strongest rank separation between the two performance regimes.
    As in the GeONet analysis, these correlations are descriptive of a small, restricted transfer pool and should not be interpreted as independent causal effects.}
    \label{fig:deeponet_sensitivity_5000}
\end{figure}

\subsection{Implications for the reported GeONet results}
\label{app:implications}
The mathematical analysis above supports four claims made by the empirical results.
First, the pressure operator is geometry-conditioned.
The weak formulation and finite-volume graph interpretation show that the solid--void mask changes the fluid domain, the admissible boundary partition, and the connectivity of the discrete pressure system.
Therefore, the geometry input is not cosmetic; it changes the physical operator being learned.
Second, omitting geometry can create an irreducible identifiability error.
Theorem~\ref{thm:irreducible_geometry_error} shows that if two RCP geometries share the same material-property inputs but have different pressure fields, no Branch--Trunk model that ignores the mask can recover both solutions exactly.
Third, the gated Leaf--Branch--Trunk fusion gives GeONet a mechanism for local geometry-coordinate interactions.
These interactions allow pressure features to move with pore boundaries, constrictions, and connectivity changes, rather than being represented only by a fixed coordinate basis.
Fourth, the residual and stability estimates explain why cross-domain generalization is plausible within the admissible RCP distribution.
If unseen geometries remain close to the training support under the geometry-material metric and do not violate connectivity or ellipticity assumptions, then the underlying pressure solution varies continuously.
A trained GeONet can therefore interpolate this geometry-conditioned operator in a physically meaningful way.
Together, these points provide a mathematical rationale for the observed behavior of GeONet: the model preserves sharp pressure transitions, distributional tails, and geometry-controlled pressure profiles more faithfully than a coordinate-centric baseline, while maintaining strong agreement on unseen RCP geometries.